\definecolor{ao}{rgb}{0.0, 0.5, 0.0}
\newcommand{\red}[1]{\textcolor{black}{#1}}
\newcommand{\fd}[1]{\textcolor{black}{#1}}
\newtheorem{remark}{Remark}
\newtheorem{theorem}{Theorem}
\newtheorem{proposition}{Proposition}
\newtheorem{corollary}{Corollary}
\newtheorem{definition}{Definition}
\newtheorem{lemma}{Lemma}
\def\0{\mbox{\tiny $0$}}
\def\1{\mbox{\tiny $1$}}
\def\2{\mbox{\tiny $2$}}
\def\3{\mbox{\tiny $3$}}
\def\4{\mbox{\tiny $4$}}
\def\5{\mbox{\tiny $5$}}
\def\6{\mbox{\tiny $6$}}
\def\7{\mbox{\tiny $7$}}
\def\8{\mbox{\tiny $8$}}
\def\9{\mbox{\tiny $9$}}
\def\r{\rangle}
\def\l{\langle}
\def\R{\mathcal{R}}
\def\b{\beta^{'}}
\def\s{\bm \sigma}
\newcommand{\SOMMA}[2]{\displaystyle\sum\limits_{#1}^{#2}}
\long\def \beq#1\eeq {\begin{equation} #1 \end{equation}}
\long\def \beaq#1\eeaq {\begin{equation}\begin{aligned} #1 \end{aligned}\end{equation}}
\long\def \bes#1\ees {\begin{equation}\begin{split} #1 \end{split} \end{equation}}
\long\def \bea#1\eea {\begin{eqnarray} #1 \end{eqnarray}}
\long\def \bse[#1]#2\ese {\begin{subequations}\label{#1}\begin{align} #2 \end{align}\end{subequations}}
\newcommand{\sums}{\sum_{ \boldsymbol \sigma}}
\newcommand{\si}{\sigma_i}
\title{Hebbian Learning from First Principles}
\author[a,b]{Linda Albanese,}
\author[b,c]{Adriano Barra,}
\author[a]{Pierluigi Bianco,}
\author[a]{Fabrizio Durante,}
\author[a,b]{Diego Pallara,}
\affiliation[a]{Dipartimento di Matematica e Fisica,  Università  del Salento, Lecce, Italy}
\affiliation[b]{Istituto Nazionale di Fisica Nucleare, Sezione di Lecce, Italy}
\affiliation[c]{Dipartimento di Scienze di Base ed Applicate per l’Ingegneria, Sapienza Università di Roma, Italy}
\abstract{Recently, the original storage prescription for the Hopfield model of neural networks -- as well as for its {\em dense} generalizations -- has been turned into a genuine Hebbian learning rule by postulating the expression of its Hamiltonian for both the {\em supervised} and {\em unsupervised} protocols.
\newline
In these notes, first,  we obtain these explicit expressions by relying upon maximum entropy extremization à la Jaynes. Beyond providing a formal derivation of these recipes for Hebbian learning, this construction also highlights how Lagrangian constraints within entropy extremization force network's  outcomes on neural correlations: these try to mimic the  empirical counterparts hidden in the datasets provided to the network for its training and, the denser the network, the longer the correlations that it is able to capture.
\newline
Next, we prove that, in the {\em big data} limit, whatever the presence of a teacher (or its lacking), not only these Hebbian learning rules converge to the original storage prescription of the Hopfield model but also their related free energies (and, thus, the statistical mechanical picture provided by Amit, Gutfreund and Sompolinsky is fully recovered).
\newline
As a sideline, we show mathematical equivalence among standard Cost functions ({\em Hamiltonian}), preferred in Statistical Mechanical jargon, and quadratic Loss Functions, preferred in Machine Learning terminology.
\newline
Remarks on the exponential Hopfield model (as the limit of dense networks with diverging density) and semi-supervised protocols are also provided.}
\begin{document}

\maketitle
\section{Introduction}
Since Parisi's groundbreaking discovery of multiple equilibria in complex systems \cite{GiorgioOld,GiorgioNobel},  spin glasses (i.e. {\em complex systems} tackled by statistical mechanical techniques \cite{MPV,FisherHertz}) were immediately recognized as suitable systems for modelling the spontaneous information processing capabilities collectively shown by assemblies of neurons in neural network models \cite{Coolen,Nishimori,Haiping,Engel,Kadmon,LenkaJPA,storage2}. The first celebrated  statistical mechanical analysis of a spin glass model for neural networks was addressed in the eighties by Amit, Gutfreund and Sompolinsky (resulting in the so-called AGS theory) \cite{AGS,Amit}, focusing on the Hopfield model \cite{Hopfield}. The latter is a fully connected mean field neural network whose neurons interact in couples through a synaptic coupling 
that implements the celebrated Hebbian prescription {\em neurons that fire together wire together} \cite{Hebb}.
%\footnote{As a technical detail for applications in Neuroscience, as the Hamiltonian of the Hopfield model is a quadratic form in the order parameter quantifying retrieval, not only are stored patterns in the networks but also their gauged copies obtained under spin-flip symmetry: in this setting a neuron can fire or be silenced, but in both options this condition is simultaneous for the pattern and its symmetric, 
% not only neurons that fire together wire together  but also those simultaneously silenced, 
%resulting in a violation of the Dale law \cite{Tuckwell}.}.
\newline
Since that milestone, countless variations on theme appeared along these decades and the Hopfield model soon became the {\em harmonic oscillator} for neural networks spontaneously performing pattern recognition as well as the reference for biologically inspired associative memory models. As a natural extension of the Hebbian paradigm to many-body interactions, recently  {\em dense neural networks} have been introduced in the Literature  \cite{HopfieldKrotovEffective,HopKro1,Krotov2018} (see also \cite{Densesterne,Bovier,Fachechi1,DenseRSB}). The latter keep the Hebbian prescription but account for neurons that do not interact any longer in couples, rather in groups larger than two. These have shown to largely outperform the Hopfield scenario both in terms of storage capacity \cite{Baldi,Gardner,Burioni} as well as to perform pattern recognition tasks \cite{BarraPRLdetective,AgliariDeMarzo}\footnote{It is worth mentioning that the limit where the size of these groups diverges is a rather new model, informally called {\em exponential Hopfield model} \cite{demircigil},  whose storage capacity is exponentially larger (in the size of the network) that that of the Hopfield reference, and still by far larger w.r.t. those of all the dense (polynomial) neural networks.}. 
\newline
However, despite its successes in these fields of AI (namely pattern storage \cite{Engel} and pattern recognition \cite{Bishop}), the Hopfield model implements the Hebbian learning prescription solely as a storage criterion: patterns are not inferred from examples, rather they are provided just once to the network and the latter stores them within its synapses à la Hebb. Thus it is by no surprise that, in order to cope with modern Machine Learning problems  (where typically neural networks are trained to datasets in order to extract information and create their own representation of the patterns eventually contained within them), recently the Hebbian prescription of the original Hopfield network, as well as of its dense generalizations, has turned from a simple storing rule to a genuine learning rule, splitting the unsupervised case from the supervised case. See \cite{EmergencySN} for unsupervised learning and \cite{prlmiriam} for supervised learning in the pairwise scenario as well as \cite{unsup,super}, respectively, for the dense extensions.  This shift from {\em Hebbian storing} to {\em Hebbian learning}, however, did not emerge  from {\em first principles}, rather it came out of the blue just by providing intuitive heuristic arguments in order to adjust conveniently the Hopfield Cost function to the case.
\newline
\newline
In this paper, at first, we aim to fill this gap and show how these learning criteria can be obtained from an inferential variational principle, namely the maximum entropy, if used in its interpretation à la Jaynes \cite{Jaynes}: beyond a formal derivation of these learning rules, such an approach allows us to inspect also how and which empirical correlation functions lying in the datasets are tackled by the network during training. In particular, we show that, while the original pairwise network constructs patterns without relying upon higher order statistical moments (i.e. it detects just means and variances, namely one- and two-point correlation functions among the bits of any pattern), in the dense counterpart, the higher the order of the interaction among neurons (i.e. the larger their assembly), the longer the correlations  the network can infer, up to the limit of infinite interactions that approaches to the so-called {\em exponential Hopfield model} \cite{demircigil,Lucibello}. As a sideline, since we provide the expressions for both the unsupervised and supervised learning schemes, we also extend the results to semi-supervised learning (that is, learning from a dataset whose  examples are just partially labelled \cite{SemiSup}).
\newline
Next, by relying upon Guerra's interpolation \cite{GuerraNN, guerra_broken}, we prove that, in the limit of infinite information provided to the network (a scenario that we call {\em big data} limit \cite{super,unsup}), whatever the learning procedure (i.e., with or without the presence of a teacher) these Hebbian learning rules converge to the original Hopfield prescription for pattern storage\footnote{Strictly speaking,  we have convergence to the Hopfield network solely if neural interactions are pairwise, while the dense counterpart converges to the Hebbian P-spin-glass (studied by Elisabeth Gardner in Statistical Mechanics \cite{Gardner} and Pierre Baldi in Computer Science  \cite{Baldi}), as it should be.}. We prove this both at the level of the Cost functions as well as involving the whole free energy. Thus, in this limit, AGS theory \cite{AGS} is entirely recovered. \red{Guerra's interpolation is a mathematical technique introduced for the first time by Francesco Guerra \cite{guerra2002thermodynamic} for spin glasses and later imported to neural networks \cite{AABO-JPA2020, Jean0}. It is mathematically justified %in every passage
and, in general, easier than the so-known \textit{replica trick} \cite{MPV}.} Finally, we also prove a one-to-one relation between these new Hamiltonians (i.e., {\em Cost functions}), usually preferred in Statistical Mechanics and standard {\em Loss functions}, typically used in Machine Learning, so to extend the results obtained for the former to the latter. 
%(in particular, we use the $L^2$ loss function for the supervised setting and the empirical risk for the unsupervised one). 

\par\medskip
The paper is structured as follows: first the shallow limit of pairwise neural networks is investigated (hence the learning rules for the Hopfield model are derived from the maximum entropy principle and their big data limit is shown to collapse to AGS theory), then we extend these results to dense networks. As an overture, to get acquainted with Hebbian storing, we briefly revise the standard Hopfield model and we derive its Cost function for Hebbian storing via the maximum entropy too\footnote{This exercise is also useful to realize that the only perspective by which entropy maximization should be pursued to picture the properties of the Hopfield neural networks via Statistical Mechanics is by the statistical inference interpretation of the maximum entropy  (as nor these networks are at equilibrium, neither are its constituents, the neurons)  as masterfully deepened by Bialek and coworkers in an experimental and theoretical tour de force along the past two decades \cite{BialekNew,BialekOld,BialekBook}. Furthermore, despite this class of neural network models under investigation is known in the Machine Learning community as {\em energy based models} \cite{LeCunn}, the Hamiltonian of the Hopfield model ({\em vide infra}) is just a Cost function to be minimized but it does not represents a physical energy.}.
 
\section{Preamble: Hebbian storing}
For the sake of completeness, and in order to facilitate comprehension about the way Hebbian learning takes place later on in the manuscript, in this \red{S}ection we briefly revise the celebrated Hopfield model: far from being exhaustive, first  we succinctly and informally introduce the Hamiltonian and the related main observables for its statistical mechanical treatment (Sec. \ref{Sec21}). Then, we derive the Hamiltonian accounting for such a network via the maximum entropy principle from a statistical inference perspective (Sec. \ref{Sec22}).

\subsection{The Hopfield neural network in statistical mechanics}\label{Sec21}

To define the Hebbian prescription for pattern storage, we need at first to introduce the $K$ patterns that we want the network to store and retrieve: these are chosen as $N$-bits random vectors whose entries are drawn from a Rademacher distribution as stated by the next
\begin{definition}\label{DefinitionUno} (Random structureless patterns)
Let $K, N \in \mathbb{N}$. The patterns in the Hopfield model are denoted by $\bm \xi^\mu=(\xi_1^\mu, \hdots, \xi_N^\mu)$, for $\mu=1, \hdots, K$ and their entries are i.i.d. Rademacher random variables with probability mass function given by
\begin{equation} \label{eq:rademacher}
\mathbb{P}(\xi_{i}^{\mu}=x) =  \frac{1}{2} \left ( \delta(x-1) + \delta(x+1) \right),
\end{equation}
for any $i=1, \hdots, N$ and $\mu=1, \hdots, K$.
\end{definition}
Next, we introduce the $N$ neurons and their dichotomous neural activities $\sigma_i \in \{ - 1,+1\}, i =1,...,N$ (such that $\sigma_i =+1$ represents the $i^{th}$ neuron firing, while $\sigma_i =-1$ accounts for its quiescence), that we use  to introduce the Hopfield neural network \cite{Amit}: 
\begin{definition}\label{def:HOP} (Hebbian storing)
Let $J_0 \in \mathbb{R}^+$ and $h \in \mathbb{R}$ and let $\boldsymbol \sigma = ( \sigma_1, \sigma_2, ..., \sigma_N)  \in \{ -1, +1 \}^N$ be a vector of $N \in \mathbb{N}$ binary Ising neurons. The Cost function (or Hamiltonian) of the Hopfield model is defined as
\begin{eqnarray}
    H_N^{\textnormal{(Hop)}}(\s \vert J_0, h, \bm \xi)&=& -\sum_{i,j=1,i<j}^{N}J_{ij}\sigma_i\sigma_j - \sum_{i=1}^N h_i \sigma_i \notag \\  
     &=& -\dfrac{J_0}{2N} \sum_{i,j=1}^{N} \sum_{\mu=1}^K \xi_i^\mu \xi_j^\mu \si \sigma_j -h \sum_{i=1}^N \sum_{\mu=1}^K \xi_i^{\mu}\sigma_i,
    \label{eq:H_Hop}
\end{eqnarray} 
where we have defined $h_i:=h \SOMMA{\mu=1}{K} \xi_i^\mu, \ i=1, \hdots N$.\\
Here, the information about the patterns is allocated in the synaptic matrix, namely in the couplings $\boldsymbol J = \left(J_{ij}\right)_{i,j=1,...,N} \in \mathbb{R}^{N\times N}$ among neurons, defined according to the so-called Hebb's rule \cite{Hopfield}
\begin{equation}\label{eq:hebb}
J_{ij}=\frac{J_0}{N}\sum_{\mu=1}^K \xi_i^{\mu}\xi_j^{\mu}.
\end{equation}
\end{definition}
Note that, in this model, neurons interact in couples, whose strength is ruled by $J_0$, and they perceive the external world, e.g, the presence of a pattern to recognize, through the one-body term (i.e., the second term at the r.h.s. of eq.~\eqref{eq:H_Hop}), whose strength is ruled by $h$.
 
In order to inspect the network's information processing capabilities, we introduce also the concept of {\em control parameters} and {\em order parameters} as follows
\fd{\begin{definition}\label{MattisOverlap} (Control and order parameters)
Given the Hopfield model, \red{the control parameters are} $\beta \in \mathbb{R}^+$\red{,} the noise (see below in eq. \eqref{dinamica}), and $\alpha := K/N \in \mathbb{R}^+$\red{,} the storage capacity of the network.
\newline
Moreover, the order parameters (required to quantify  the success of storing and retrieval by the Hopfield model \eqref{eq:H_Hop}) are the $K$ Mattis magnetizations (or overlaps) denoted by
    \begin{align}
        m_\mu\red{(\bm \sigma \vert \bm \xi)}&= \dfrac{1}{N} \sum_{i=1}^N \xi^\mu_i \si,
        \label{eq:mmu}
    \end{align}
    for $\mu=1, \hdots, K$. 
\end{definition}}
% \begin{definition}\label{MattisOverlap} (Control and order parameters)
% The control parameters for the Hopfield model are the noise in the network $\beta \in \mathbb{R}^+$ and the storage capacity of the network $\alpha := K/N \in \mathbb{R}^+$.
% \newline
% The order parameters (required to quantify  the success of storing and retrieval by the Hopfield model \eqref{eq:H_Hop}) are the $K$ Mattis magnetizations (or overlaps)
%     \begin{align}
%         m_\mu&= \dfrac{1}{N} \sum_{i=1}^N \xi^\mu_i \si,
%         \label{eq:mmu}
%     \end{align}
%     for $\mu=1, \hdots, K$. 
% \end{definition}
Roughly speaking,  control parameters are parameters that we freely tune in the network (and we expect that --as for the former-- the higher the noise in the network, the harder the computation for its neurons --whatever it is, storage or retrieval-- much as for the latter, because the larger the storage, the harder the task). \red{On the contrary}, order parameters are response functions, that is, they provide a quantitative measure of network's performances as control parameters vary. 
\newline
Note that, working at $h=0$ and $J_0=1$ for simplicity, we can rewrite the Hamiltonian \eqref{eq:H_Hop} as a quadratic form in the Mattis magnetizations:
    \begin{align}\label{EmmeQuadro}
        H_N^{\textnormal{(Hop)}}(\s \vert J_0=1, h=0, \bm \xi)=-\frac{N}{2} \sum_{\mu=1}^K \red{(m_{\mu}(\bm \sigma \vert \bm \xi))^2}.
    \end{align}
Hence it is trivial to realize that the minima of this Cost function are those provided by neural configurations that result in $|m_{\mu}|=1$ for some $\mu$. Those configurations are said to be ``retrieval states'' because they match perfectly the patterns they are meant to retrieve.
%\end{remark}
Further, these configurations can also be attractors for the neural dynamics that we now introduce: once a Hamiltonian is provided, it is possible to construct a Markov process for the neural dynamics 
%by introducing a source of noise $\beta \in \mathbb{R}^+$ in the following master equation
$$
\mathbb{P}(\boldsymbol{\sigma}(t+1) \vert \bm \xi)=\sum_{\boldsymbol{\sigma'}}W(\boldsymbol{\sigma},\boldsymbol{\sigma'})\mathbb{P}(\bm{\sigma'}(t)\vert \bm \xi),
$$
where the transition rates $W(\bm{\sigma},\bm{\sigma'})$ are dictated by eq.~\eqref{eq:H_Hop}.  
For instance, a popular implementation of the above neural dynamics (that can be explicitly coded on a computer), reads
\begin{eqnarray}\label{dinamica}
\mathbb{P}(\boldsymbol{\sigma}(t+1) \vert \bm \xi) &=& \prod_{i=1}^N \frac12 \left(1 + \sigma_i(t+1)\tanh(\beta \theta_i(\bm{\sigma}(t)))\right),\\
\theta_i(\bm{\sigma}(t)) &=&  \sum_{j=1}^N J_{ij}\sigma_j(t) + h_i,
\end{eqnarray}
where the term $\theta_i$ is the {\em post synaptic} field acting on neuron $i$ and the parameter $\beta$ tunes the stochasticity in the network. {Moreover,} for $\beta \to 0^+$, the dynamics reduces to a pure random tossing. Instead, for $\beta \to +\infty$, the dynamics steepest descends toward the minima of the Hamiltonian, the latter (in this deterministic limit) playing as the Ljapounov function for the dynamical process \cite{Coolen}. Remarkably, the fact that the synaptic coupling is symmetric, i.e. $J_{ij}=J_{ji}$, is enough for {\em Detailed Balance} to hold. This guarantees that the long time limit of the stochastic process  defined in eq.~\eqref{dinamica} relaxes to the following Boltzmann-Gibbs distribution $\mathbb P(\boldsymbol{\sigma} \vert \bm \xi)$, related to the Hamiltonian \eqref{eq:H_Hop} \cite{Coolen}.
\begin{equation}\label{BGmeasure}
\lim_{t \to \infty}\mathbb{P}(\boldsymbol{\sigma}(t) \vert \bm \xi) = \mathbb{P}(\boldsymbol{\sigma} \vert \bm \xi) = \frac{e^{-\beta H_N^{\textnormal{(Hop)}}(\boldsymbol \sigma | \boldsymbol \xi)}}{\sum_{\bm \sigma}e^{-\beta H_N^{\textnormal{(Hop)}}(\boldsymbol \sigma | \boldsymbol \xi)}}= \frac{e^{ \frac{ N\beta}{2} \sum_{\mu}^K \red{(m_{\mu}(\bm \sigma \vert \bm \xi))^2}}}{Z_N(\boldsymbol \xi)}
\end{equation}
where the normalization factor $Z_N(\boldsymbol \xi)$ is called {\em partition function} in Statistical Mechanics \cite{MPV}.
\newline
Dealing with pattern storage, the interest lies in constructing neural networks with the largest possible storage capacity and in developing techniques to estimate these values as summarized below
\begin{remark}\label{Si2No}
The Hopfield neural network can handle at most $K \propto N^1$ patterns. In particular $K= \alpha N$ and the maximal value of the storage capacity $\alpha$ is $\alpha_c \approx 0.14$ \cite{Amit}.
\end{remark}
It is rather easy to understand at least the order of magnitude of the Hopfield storage, i.e. $K \propto N^1$, by a signal-to-noise argument that we now streamline: as we want the patterns to be configurations that are stable points, i.e. the {\em attractors} for the neural dynamics provided in eq.~\eqref{dinamica}, let us assume that the network already lies in a pattern and checks for its stability. We set now $\boldsymbol{\sigma}(t)=\boldsymbol{\xi}^1$ and $h=0$ (no bias must be added): if $\boldsymbol{\xi}^1$ is an attractor, it must happen that $\sigma_i(t+1)=\sigma_i(t)$ for all $i = 1,...,N$. Since the stochastic process described in eq.~\eqref{dinamica} prescribes $\sigma_i(t+1)=\textrm{sign}[\tanh(\beta\sum_j J_{ij}\sigma_j(t))]$, by substituting in this formula the expression the coupling $J_{ij}$ provided in eq.~\eqref{eq:hebb} and the assumption $\sigma_j(t)=\xi_j^1$, we have that $\sigma_i(t+1)=\textrm{sign}[\tanh(\beta N^{-1}\sum_{j=1}^N \sum_{\mu=1}^K \xi_i^{\mu}\xi_j^{\mu} \xi_j^{\1}]$: this sign is governed by  $N^{-1}\sum_{\mu=1}^K \xi_i^{\mu}\xi_j^{\mu} \xi_j^{\1} \sim  \xi_i^1 \pm \sqrt{K/N} = \xi_i^1 \pm \sqrt{\alpha}$, where the contribution $\xi_i^1$ plays as the {\em signal} while the storage plays as a noise. Thus, as long as the storage $\alpha < 1$, the value of $\sigma_i(t+1)$ is the sign of $\xi_i^1$, that, in turn,  is also the value for $\sigma_i(t)$: hence patterns can play as attractors for neural dynamics provided they are not too many ($\alpha < 1$). Note that, from an Information Theory perspective, when the network reaches a stationary state (i.e., an attractor representing a stored pattern), the final state of the stochastic variable  $\sigma_i$ directly codes for the $i^{th}$ bit  of information  of the corresponding  pattern that it is retrieving (e.g., $\sigma_i = \xi_i^1$ in the example above).
\newline
\newline
The argument of Remark \ref{Si2No} however is rather rough and, in order to have a sharp estimate about the value of the maximal storage $\alpha_c \approx 0.14$, one has to solve for the free energy (that we just define hereafter while we remind to standard textbook for its explicit evaluation \cite{Amit,Coolen}) and inspect its singularities (that depict computational phase transitions and thus automatically detect the maximal storage $\alpha_c$). The free energy, defined as the intensive logarithm of the partition function, is the main observable in Statistical Mechanics and its extremization accounts for the balance between energy minimization and entropy maximization (at given noise $\beta$ and storage $\alpha$), or --in Machine Learning jargon-- it allows to search for the minima of the Cost function under prescription of the maximum entropy, framed as an inferential criterion.
\newline
Its expression is given by the next
\begin{definition}%{(Free energy)}
Let $\beta$ be the noise of the network and  $\alpha$ the storage capacity of the network.
% Once defined the noise in the network $\beta$, introduced the storage of the network $\alpha$ and called the expectation over the randomness in the patterns as $\mathbb{E}_{\bm \xi}$, t
The free energy of the Hopfield model is defined as 
\begin{equation}\label{frienergy}
    \mathcal A(\alpha,\beta) = \lim_{N \to \infty} \mathcal A_{N}(\alpha,\beta), \quad \textnormal{where} \quad \mathcal A_{N}(\alpha,\beta)= \dfrac{1}{N} \mathbb{E}_{\bm \xi} \log Z_N(\boldsymbol \xi),
\end{equation}
where $\mathbb{E}_{\bm \xi}$ is the expectation over the distribution of the patterns.
\end{definition}
The analysis of the free energy is not the focus of the present investigation, we just need to introduce its definition because we aim to prove that the above expression related to the Hebbian storing can be recovered, in suitable limits (that we inspect in Sec. \ref{Guerralike}), also by the free energy stemming from Hebbian learning.  
\newline
Finally we close this introductory Section with a remark explicitly related to Machine Learning. 
\begin{remark}
\label{rem:loss}
As the statistical mechanical formalization of neural networks defines models via their {\em Cost functions} (i.e. Hamiltonians) while in Machine Learning  models are introduced via their {\em Loss functions}, it is instructive to prove that these objects are one-to-one related.  
\newline
Indeed,  for the Hopfield neural network, if we introduce the $K$ Average $L^2$ Loss functions for reconstruction (one per pattern) as 
\begin{align}
    L_{\pm}^\mu(\bm \sigma \vert \bm \xi) =& \dfrac{1}{2N}  \lVert \bm \xi^{\mu} \pm \bm \sigma \rVert^2_2,
\end{align}
it is enough to note that $L_{\pm}^\mu (\bm \sigma \vert \bm \xi)= 1 \pm m_{\mu}$ to rewrite the Hamiltonian provided in eq.~\eqref{EmmeQuadro} as
\begin{align}
H_N^{\textnormal{(Hop)}}(\boldsymbol \sigma | J_0=1, h=0, \boldsymbol \xi)  &= -\dfrac{N}{2} \sum_{\mu=1}^K m_\mu^2= -\dfrac{N}{2} \sum_{\mu=1}^K (1-\mathcal{L}_N^\mu(\bm \sigma \vert \bm \xi)),
\end{align}
where %we tacitly defined the loss function for the Hopfield model as 
\begin{align}
    \mathcal{L}_N^\mu(\bm \sigma \vert \bm \xi)= L_{-}^\mu (\bm \sigma \vert \bm \xi) L_{+}^\mu(\bm \sigma \vert \bm \xi).
\end{align}
Note the presence of both $L_{-}^\mu$ and $L_{+}^\mu$ in $\mathcal{L}_N^\mu$: this  mirrors the fact that the Hopfield Hamiltonian is a quadratic form in $m_{\mu}$ (hence it can store and reconstruct both $\xi^{\mu}$ as well as $-\xi^{\mu}$), see eq.~\eqref{EmmeQuadro}. 
\end{remark}

\subsection{The Hopfield neural network from statistical inference}\label{Sec22}

In this Section we aim to explore the Maximum Entropy Principle (MEP) to obtain the Boltzmann-Gibbs measure  (see eq.~\eqref{BGmeasure}) related to the Hopfield Hamiltonian defined in eq.~\eqref{eq:H_Hop} from an inferential procedure rather than as the relaxation of a stochastic process. 
\newline
In an experimental scenario, to check retrieval performances of the neural network, we should measure at least two numbers per pattern: the mean values of the overlap between the network's configuration (once relaxed toward a steady state) and the stored pattern, i.e. the Mattis magnetizations, and their variances. 
\newline
In other words, the experimental setup requires the observation of the following quantities
\begin{equation}\label{constraints}
\langle m_\mu \rangle _{{\textnormal{exp}}}= \frac1N \sum_{i=1}^N \xi^\mu _i \langle \sigma_i \rangle _{{\textnormal{exp}}},\qquad \langle m_\mu^2 \rangle _{{\textnormal{exp}}} = \frac1{N^2} \sum_{i,j=1}^{N} \xi^\mu _i\xi^\mu _j \langle \sigma_i\sigma_j \rangle_{\textnormal{exp}},
\end{equation}
where the subscript ${\textnormal{exp}}$ means that we are considering experimentally evaluated quantities. Data collection should happen as follows: upon presentation of the new pattern, say $\bm \xi^k$, the network is left free to evolve toward relaxation within  a stable configuration.  Then,  skipped the initial transient, neural activity is recorded over time $t =1,...,T$ and suitably averaged such that 
$$
\langle m_{k} \rangle_{{\textnormal{exp}}} =\frac{1}{N}\sum_i^N \xi_i^k \langle \sigma_i \rangle_{{\textnormal{exp}}} = \lim_{T \to \infty}\frac{1}{T}\sum_{t=1}^T \frac{1}{N}\sum_i^N \xi_i^k\sigma_i(t).
$$
As discussed in Remark \ref{Si2No}, if the network is handling a moderate number of patterns, then the empirical average is informative as $\langle \sigma_i \rangle_{\textnormal{exp}} = \xi^k_i$.  
\newline
Further, to set the maximum entropy optimization procedure, we need to rely also on the theoretical outcomes from the network for means and variances of the Mattis magnetizations and these are achieved via standard Boltzmann-Gibbs sampling and indicated by the brackets (i.e., $\langle m_{\mu}\rangle$ and $\langle m_{\mu}^2\rangle$) that read as
\begin{align*}
\langle m_{k}\rangle =& \mathbb{E}_{\bm \xi}\frac{\sum_{\bm \sigma}  (\frac{1}{N}\sum_{i=1}^N \xi_i^{k}\sigma_i) e^{-\beta H_N^{\textnormal{(Hop)}}(\s \vert J_0, h, \bm \xi)}}{\sum_{\bm \sigma} e^{-\beta H_N^{\textnormal{(Hop)}}(\s \vert J_0, h, \bm \xi)}}, \notag \\
\langle m^2_{k}\rangle =& \mathbb{E}_{\bm \xi} \frac{\sum_{\bm \sigma} (\frac{1}{N^2}\sum_{i,j=1}^{N} \xi_i^{k} \xi_k^{k} \sigma_i \sigma_j) e^{-\beta H_N^{\textnormal{(Hop)}}(\s \vert J_0, h, \bm \xi)}}{\sum_{\bm \sigma} e^{-\beta H_N^{\textnormal{(Hop)}}(\s \vert J_0, h, \bm \xi)}}.
\end{align*}
The brackets represent the Boltzmann average over the ensemble $\omega(.)$ and over the quenched patterns $\xi$, namely --given a generic observable $O(\sigma, \xi)$ (function of both the neural activities and the patterns lying in the synaptic coupling)-- we write
$$
\langle O(\sigma, \xi) \rangle := \mathbb{E}\omega\left(O(\sigma, \xi) \right) =
\mathbb{E}\frac{\sum_{\sigma} O(\sigma, \xi) e^{-\beta H_N^{Hop}(\boldsymbol{\sigma}|J_0,h,\boldsymbol{\xi})}}{Z_N(\boldsymbol{\xi})}\equiv \mathbb{E}\frac{\sum_{\sigma} O(\sigma, \xi) e^{-\beta H_N^{Hop}(\boldsymbol{\sigma}|J_0,h,\boldsymbol{\xi})}}{\sum_{\sigma} e^{-\beta H_N^{Hop}(\boldsymbol{\sigma}|J_0,h,\boldsymbol{\xi})}}.
$$
The goal is then to determine the less structured probability distribution $\mathbb{P}_N(\boldsymbol{\sigma} \vert \boldsymbol{\xi})$ accounting for these data, namely such that $\langle m_{\mu} \rangle =\langle m_{\mu} \rangle_{\textnormal{exp}}$ and $\langle m_{\mu}^2 \rangle = \langle  m_{\mu}^2 \rangle_{\textnormal{exp}}$, for all $\mu =1,...,K$. To do so, \fd{following the Maximum Entropy Principle à la Jaynes, we state}

\begin{proposition}
Given $K$ patterns $\{ \bm \xi^{\mu} \}_{\mu=1,...,K}$ generated as prescribed by  Def. \ref{DefinitionUno} and $N$ Ising neurons, whose activities read as $\sigma_i \in \{ - 1, +1\}$ for all $i =1,...,N$, the less structured probability distribution $\mathbb{P}_N(\boldsymbol{\sigma} \vert \boldsymbol{\xi})$, whose first and second order's moments  $\langle m_{\mu} \rangle$ and $\langle m_{\mu}^2 \rangle$ reproduce the empirical counterparts $\langle m_{\mu} \rangle_{\textnormal{exp}}$ and $\langle m_{\mu}^2 \rangle_{\textnormal{exp}}$, reads as 
\begin{equation}
\mathbb{P}_N(\boldsymbol{\sigma} \vert \boldsymbol{\xi}) = \frac{e^{-\beta H_N^{(Hop)}(\bm \sigma \vert J_0, h, \boldsymbol{\xi})}}{Z(\boldsymbol{\xi})} =\frac{\exp\left(\frac{\beta J_0}{2N}\sum_{i,j=1}^{N} \sum_{\mu=1}^K \xi^\mu _i\xi^\mu _j  \sigma_i\sigma_j + \beta h \sum_{i=1}^{N}  \sum_{\mu=1}^{K}\xi^\mu _i \sigma_i\right)}{\sum_{\boldsymbol{\sigma}} \exp\left(\frac{\beta J_0}{2N}\sum_{i,j=1}^{N} \sum_{\mu=1}^K \xi^\mu _i\xi^\mu _j  \sigma_i\sigma_j + \beta h \sum_{i=1}^{N}  \sum_{\mu=1}^{K}\xi^\mu _i \sigma_i\right)},
\end{equation}
and its Cost function can be interpreted as the Hamiltonian of the Hopfield model, see \eqref{eq:H_Hop}.  This result is obtained by maximizing the  constrained Shannon entropy defined in eq.~\eqref{shannonH} ({\em vide infra}).
\end{proposition}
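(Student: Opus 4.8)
The plan is to set up the constrained optimisation explicitly and solve it by the method of Lagrange multipliers over the probability simplex. First I would write the Shannon entropy of the sought distribution, $\mathcal{S}[\mathbb{P}_N] = -\sum_{\boldsymbol\sigma} \mathbb{P}_N(\boldsymbol\sigma\vert\boldsymbol\xi)\log \mathbb{P}_N(\boldsymbol\sigma\vert\boldsymbol\xi)$ (the object appearing in eq.~\eqref{shannonH}), and impose three families of constraints: normalisation $\sum_{\boldsymbol\sigma}\mathbb{P}_N = 1$, the $K$ first-moment conditions $\langle m_\mu\rangle = \langle m_\mu\rangle_{\textnormal{exp}}$, and the $K$ second-moment conditions $\langle m_\mu^2\rangle = \langle m_\mu^2\rangle_{\textnormal{exp}}$ of eq.~\eqref{constraints}. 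The crucial structural observation is that, for fixed patterns $\boldsymbol\xi$, both $m_\mu(\boldsymbol\sigma\vert\boldsymbol\xi)$ and $m_\mu^2(\boldsymbol\sigma\vert\boldsymbol\xi)$ are deterministic functions of $\boldsymbol\sigma$, so each constraint is \emph{linear} in the unknown $\mathbb{P}_N$; the optimisation is therefore the maximisation of a strictly concave functional over a convex set cut out by affine constraints, which already guarantees a unique maximiser.

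Next I would form the Lagrangian $\mathcal{L} = \mathcal{S}[\mathbb{P}_N] + \lambda_0(\sum_{\boldsymbol\sigma}\mathbb{P}_N - 1) + \sum_{\mu=1}^K \lambda_\mu(\langle m_\mu\rangle - \langle m_\mu\rangle_{\textnormal{exp}}) + \sum_{\mu=1}^K \eta_\mu(\langle m_\mu^2\rangle - \langle m_\mu^2\rangle_{\textnormal{exp}})$ and impose stationarity by differentiating with respect to each value $\mathbb{P}_N(\boldsymbol\sigma\vert\boldsymbol\xi)$. Since $\delta \mathcal{S}/\delta\mathbb{P}_N = -\log\mathbb{P}_N - 1$, the stationarity condition reads $-\log\mathbb{P}_N - 1 + \lambda_0 + \sum_\mu \lambda_\mu m_\mu + \sum_\mu \eta_\mu m_\mu^2 = 0$, so the maximiser is forced into the exponential-family (Gibbs) form $\mathbb{P}_N(\boldsymbol\sigma\vert\boldsymbol\xi) \propto \exp(\sum_\mu \lambda_\mu m_\mu + \sum_\mu \eta_\mu m_\mu^2)$, with $\lambda_0$ fixed by normalisation and thus reabsorbed into the partition function $Z(\boldsymbol\xi)$.

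Then I would substitute the explicit definitions $m_\mu = N^{-1}\sum_i \xi_i^\mu\sigma_i$ and $m_\mu^2 = N^{-2}\sum_{i,j}\xi_i^\mu\xi_j^\mu\sigma_i\sigma_j$ from eq.~\eqref{eq:mmu} into the exponent. The linear term produces the one-body field $\sum_\mu (\lambda_\mu/N)\sum_i \xi_i^\mu\sigma_i$ and the quadratic term produces the two-body coupling $\sum_\mu (\eta_\mu/N^2)\sum_{i,j}\xi_i^\mu\xi_j^\mu\sigma_i\sigma_j$, which is exactly the structure of $-\beta H_N^{\textnormal{(Hop)}}$ in eq.~\eqref{eq:H_Hop}. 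Invoking the homogeneity of the setup — the patterns are i.i.d. and exchangeable by Def.~\ref{DefinitionUno}, so no pattern is singled out — I would take $\lambda_\mu$ and $\eta_\mu$ independent of $\mu$ and identify them with the physical parameters via $\lambda_\mu/N = \beta h$ and $\eta_\mu/N^2 = \beta J_0/(2N)$, i.e. $\lambda_\mu = \beta h N$ and $\eta_\mu = \beta J_0 N/2$. This reproduces precisely the claimed $\mathbb{P}_N(\boldsymbol\sigma\vert\boldsymbol\xi)$ and identifies its cost function with the Hopfield Hamiltonian.

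I expect the genuinely delicate point to be not the formal extremisation — which is routine once linearity in $\mathbb{P}_N$ is recognised — but the interpretation and admissibility of the multipliers. Concretely, the multipliers are defined only implicitly, as solutions of the $2K$ self-consistency equations obtained by inserting the Gibbs form back into the moment constraints, and these cannot be solved in closed form; I would argue existence and uniqueness abstractly from strict concavity of $\mathcal{S}$ together with convexity of the feasible moment region (equivalently, from invertibility of the moment map, whose Jacobian is the positive-definite covariance matrix of the observables $(m_\mu, m_\mu^2)$). The remaining subtlety is to justify the passage from $2K$ abstract multipliers to the two control parameters $\beta h$ and $\beta J_0$: this is exactly where the exchangeability of the patterns is used to collapse $\lambda_\mu,\eta_\mu$ onto $\mu$-independent values, and where the \emph{inferential} (Jaynesian) reading of $\beta$ as an overall gauge — rather than a thermodynamic temperature — makes the identification legitimate.
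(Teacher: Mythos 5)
Your proposal is correct and follows essentially the same route as the paper: constrained maximization of the Shannon entropy with Lagrange multipliers enforcing normalisation and the matching of $\langle m_\mu\rangle$ and $\langle m_\mu^2\rangle$, stationarity of the functional derivative yielding the exponential (Boltzmann--Gibbs) form, and identification of the multipliers as $\lambda_\mu = \beta h N$ and $\eta_\mu = \beta J_0 N/2$, exactly the values $h\beta N$ and $J_0 \beta N/2$ the paper attaches to its constraints. The only differences are presentational: the paper writes $\mu$-independent multipliers into the Lagrangian from the outset (in statistical-mechanics notation), whereas you begin with per-pattern multipliers and collapse them by exchangeability, and you additionally record the uniqueness of the maximiser via strict concavity and the implicit solvability of the moment equations --- refinements the paper leaves tacit.
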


\begin{proof}
The key argument is to obtain $\mathbb{P}_N(\boldsymbol{\sigma} \vert \boldsymbol{\xi})$ by maximizing the Shannon entropy 
$$
S[\mathbb{P}_N(\boldsymbol{\sigma} \vert \boldsymbol{\xi})]=-\sum_{\boldsymbol{\sigma}} \mathbb{P}_N(\boldsymbol{\sigma} \vert \boldsymbol{\xi})\log \mathbb{P}_N(\boldsymbol{\sigma} \vert \boldsymbol{\xi})
$$  
constraining the latter, via Lagrange multipliers, to have lowest order momenta $\langle m_{\mu} \rangle$ and $\langle m^2_{\mu} \rangle$  that reproduce the empirical ones (i.e.,  $\langle m_{\mu} \rangle_{\textnormal{exp}}$ and $\langle m^2_{\mu} \rangle_{\textnormal{exp}}$).
\newline
We should thus maximize the quantity 
\begin{equation}\label{shannonH}
\begin{split}
S_{\lambda,h,J_0}[\mathbb{P}_N(\boldsymbol{\sigma} \vert \boldsymbol{\xi})]=&-\sum_{\boldsymbol{\sigma}} \mathbb{P}_N(\boldsymbol{\sigma} \vert \boldsymbol{\xi})\log \mathbb{P}_N(\boldsymbol{\sigma} \vert \boldsymbol{\xi}) + \lambda N\Big(\sum _{\boldsymbol{\sigma}}\mathbb{P}_N(\boldsymbol{\sigma} \vert \boldsymbol{\xi})-1\Big)\\
&+ h \beta N \sum_{\mu=1}^K \Big(\sum_{\boldsymbol{\sigma}} \mathbb{P}_N(\boldsymbol{\sigma} \vert \boldsymbol{\xi})\frac1N\sum_{i=1}^N \xi^\mu _i  \sigma_i -\langle m_\mu \rangle_{\textnormal{exp}} \Big)\\
&+\frac{J_0 \beta N}{2} \sum_{\mu=1}^K \Big(\sum_{\boldsymbol{\sigma}} \mathbb{P}_N(\boldsymbol{\sigma} \vert \boldsymbol{\xi})\frac1{N^2} \sum_{i,j=1}^{N} \xi^\mu _i\xi^\mu _j  \sigma_i\sigma_j -\langle m_\mu ^2\rangle_{\textnormal{exp}} \Big),
\end{split}
\end{equation}
with respect to $\mathbb{P}_N(\boldsymbol{\sigma} \vert \boldsymbol{\xi})$ and the Lagrangian multipliers  $\lambda,h, J_0$\footnote{The factor $N$ coupled to all the Lagrangian multipliers in \eqref{shannonH} is due to the fact that the Shannon entropy is linearly extensive in $N$, i.e. $S[\mathbb{P}_N(\boldsymbol{\sigma} \vert \boldsymbol{\xi})] \sim O(N^1)$. Further, just in this subsection, the letter chosen for the Lagrange multipliers (e.g. $h \beta, \ J_0 \beta$) are to preserve the notation of Statistical Mechanics.}. The constraint $\partial_{\lambda} S_{\lambda,h,J_0}[\mathbb{P}_N(\boldsymbol{\sigma} \vert \boldsymbol{\xi})]=0$ is equivalent to requiring that $\mathbb{P}_N(\boldsymbol{\sigma} \vert \boldsymbol{\xi})$ is actually a probability distribution, while $\partial_h S_{\lambda,h,J_0}[\mathbb{P}_N(\boldsymbol{\sigma} \vert \boldsymbol{\xi})] = 0$ and $\partial_{J_0} S_{\lambda,h,J_0}[\mathbb{P}_N(\bm \sigma \vert \bm \xi)]=0$ effectively fix the theoretical observables to the experimental quantities ($J_0$ accounting for the variances and $h$ for the biases, in agreement with their role in Statistical Mechanics, where they tune the two-body and one-body interactions, see eq. \eqref{eq:H_Hop}). Finally,
\begin{equation}
\frac{\delta S[\mathbb{P}_N(\boldsymbol{\sigma} \vert \boldsymbol{\xi})]}{\delta \mathbb{P}_N(\boldsymbol{\sigma} \vert \boldsymbol{\xi})}=- \log \mathbb{P}_N(\boldsymbol{\sigma} \vert \boldsymbol{\xi})-1 + \beta h \sum_{i=1}^N\sum_{\mu=1}^K \xi^\mu _i \sigma_i +\frac\beta{2N}\sum_{i,j=1}^{N}\sum_{\mu}^K \xi^\mu _i\xi^\mu _j  \sigma_i\sigma_j=0,
\end{equation}
which means that
\begin{eqnarray}
\mathbb{P}_N(\boldsymbol{\sigma} \vert \boldsymbol{\xi})={\text{const}} \cdot \exp\left(\frac{\beta J_0}{2N}\sum_{i,j=1}^{N} \sum_{\mu=1}^K \xi^\mu _i\xi^\mu _j  \sigma_i\sigma_j+\beta h \sum_{i=1}^{N}\sum_{\mu=1}^K\xi^\mu _i \sigma_i\right).
\end{eqnarray}
By putting the constant equal to ${\text{const}}= Z_N(\boldsymbol{\xi})^{-1}$, we have the desired assertion. 
\end{proof}

Clearly, the first term in the exponential at the r.h.s. of the above expression is the two-body contribution where synapses store {in Hebbian way} the patterns while the second term is a bias driving the network toward one specific attractor. Note the role that the Lagrange multipliers acquire when moving from Statistical Inference to Statistical Mechanics.
\newline
In the following we will restrict (with no loss of generality) the study of Hebbian learning to the two-body coupling, i.e. we assume $h=0$ for all the neurons  (as the field driving toward a pattern does not affect the free energy landscape where patterns lie).  Further, still with no loss of generality, we keep $J_0=1$ for the sake of simplicity.

\section{Main theme: Hebbian learning}
As discussed in the Introduction, despite the expression of the coupling provided in eq.~\eqref{eq:hebb} is often named Hebbian {\em learning}, the above model has little to share with modern Machine Learning as there is no real learning process underlying the storage of patterns within the synaptic matrix of the Hopfield neural network. \red{On the contrary}, mimicking the Machine Learning literature, we would only provide the network with examples of the patterns (and never the patterns themselves) in order to check if and how, once enough examples have been supplied, the network is able to reconstruct the patterns hidden in the information it experienced. 
To do so, we need also to generalize the synaptic coupling in the r.h.s. of eq.~\eqref{eq:H_Hop} in order to let the network experience examples of patterns, rather than pattern themselves. 
%Therefore, we need to generalize the synaptic coupling in the r.h.s. of \eqref{eq:H_Hop} in order to let the network experience a dataset of examples rather than the patterns directly. 
Let us first generate such a dataset of examples.

Starting from the previous K patterns, whose distribution is defined in  eq.~\eqref{eq:rademacher}, we aim to create a collection of $M$ noisy versions for each of them, using the following prescription:
\fd{\begin{definition} (Random structureless dataset)
\label{def:example}
Let $M \in \mathbb{N}$ and $K \in \mathbb{N}$. Given the pattern $\bm \xi^\mu$, made of $N$ elements, we define the set of examples as the dataset $\{ \boldsymbol \eta^{\mu,a} \}_{a=1,...,M}^{\mu=1,...,K}$ obtained by $M$ randomly perturbed copies of each pattern whose generic entry $(i,\mu,a)$ is distributed as
\begin{equation} \label{eq:Bernoulli}
    {\mathbb{P}(\eta_{ i}^{\mu,a}|\xi^\mu_i) = \frac{1-r}{2} \delta_{\eta_{ i}^{\mu,a},-\xi_{ i}^{\mu}} + \frac{1+r}{2} \delta_{\eta_{ i}^{\mu,a},\xi_{ i}^{\mu}}}.
\end{equation}
\end{definition}}
% \begin{definition} (Random structureless dataset)
% \label{def:example}
% Let $M \in \mathbb{N}$ and $K \in \mathbb{N}$. We generate the dataset $\{ \boldsymbol \eta^{\mu,a} \}_{a=1,...,M}^{\mu=1,...,K}$ by creating $M$ randomly perturbed copies of each pattern, to be interpreted as {\em examples} of that pattern $\bm \xi^\mu$, made of $N$ elements,  whose generic entry $(i,\mu,a)$ is sampled from
% \begin{equation} \label{eq:Bernoulli}
%     {\mathbb{P}(\eta_{ i}^{\mu,a}|\xi^\mu_i) = \frac{1-r}{2} \delta_{\eta_{ i}^{\mu,a},-\xi_{ i}^{\mu}} + \frac{1+r}{2} \delta_{\eta_{ i}^{\mu,a},\xi_{ i}^{\mu}}}.
% \end{equation}
% \end{definition}
\fd{Thus, $r \in [0,1]$ assesses the training-set \emph{quality}, that is, as $r \rightarrow 1$ the example matches perfectly the related pattern, whereas for $r \to 0$ the example is 
orthogonal to the related pattern as $N \to +\infty$. $M \in \mathbb{N}$, instead, accounts for the training-set \emph{quantity}.}

We can now use this dataset to generalize the synaptic coupling in the Hopfield model and turn it into a learning machine. This is achieved, so far, just as a working ansatz, by providing the following two definitions (one per protocol) of the generalized Hamiltonians to the case  \cite{prlmiriam,EmergencySN}:

\begin{definition}\label{def:sup}{(Supervised Hebbian learning)}
    Given \fd{the examples} $\{ \bm \eta^{\mu,a} \}_{a=1,...,M}^{\mu=1,...,K}$ generated as prescribed in Def. \ref{def:example} and $N$ Ising neurons, whose activities read as $\sigma_i  \in \{ -1, +1\}$ for all $i =1,...,N$, the Cost function (or {\em Hamiltonian}) of the Hebbian neural network in the supervised regime is \fd{defined as}
    \begin{align}
    H_{N,M}^{\textnormal{(Sup)}}(\s \vert \bm \eta)=-\dfrac{1}{2N\R}\sum_{\mu=1}^K \sum_{i,j=1, i<j}^{N} \left( \dfrac{1}{M} \sum_{a=1}^{M} \eta_i^{\mu,a} \right)\left( \dfrac{1}{M} \sum_{b=1}^{M} \eta_j^{\mu,b} \right) \si \sigma_j
    \label{eq:H_sup}
\end{align}
where $\R:= r^2 + \dfrac{1-r^2}{M}$ is a normalization factor.
\end{definition}

\begin{definition}{(Unsupervised Hebbian learning)}
\label{def:unsup}
    Given \fd{the examples} $\{ \bm \eta^{\mu,a} \}_{a=1,...,M}^{\mu=1,...,K}$ generated as prescribed in Def. \ref{def:example} and $N$ Ising neurons, whose activities read as $\sigma_i \in \{ - 1, +1\}$ for all $i =1,...,N$, the Cost function (or {\em Hamiltonian}) of the Hebbian neural network in the unsupervised regime is 
    \begin{align}
    \label{eq:H_unsup}
    H_{N,M}^{\textnormal{(Uns)}}(\s\vert \bm \eta) = -\dfrac{1}{2N\mathcal{R}M} \sum_{\mu=1}^K \sum_{i,j=1, i<j}^{N} \sum_{a=1}^M  \eta_i^{\mu,a} \eta_j^{\mu,a}\si \sigma_j.
\end{align}
where $\R:= r^2 + \dfrac{1-r^2}{M}$ is a normalization factor\footnote{The value $\R$ corresponds to the variance of the random variable $\dfrac{1}{M} \SOMMA{a=1}{M} \eta_i^{\mu,a}$.}.
\end{definition}

\fd{Note that, contrary to the supervised protocol, in the unsupervised counterpart all the examples are provided at once to the network }(regardless the pattern they pertain to, i.e. $J_{ij} \propto \SOMMA{a=1}{M} \eta_i^{\mu,a}\eta_j^{\mu,a}$), in the supervised scenario a teacher knows which example belongs to which pattern and splits them accordingly (such that the resulting synaptic matrix is much more informative, i.e., $J_{ij} \propto \left(\SOMMA{a=1}{M} \eta_i^{\mu,a}\right)\left(\SOMMA{b=1}{M} \eta_j^{\mu,b}\right)$).

% Note that, in the supervised protocol, the presence of a teacher is hidden but manifest as,  while in the unsupervised counterpart all the examples are provided at once to the network (regardless the pattern they pertain to, i.e. $J_{ij} \propto \SOMMA{a=1}{M} \eta_i^{\mu,a}\eta_j^{\mu,a}$), in the supervised scenario a teacher knows which example belongs to which pattern and splits them accordingly (such that the resulting synaptic matrix is much more informative, i.e., $J_{ij} \propto \left(\SOMMA{a=1}{M} \eta_i^{\mu,a}\right)\left(\SOMMA{b=1}{M} \eta_j^{\mu,b}\right)$).

Mirroring the Hebbian storing, see Def. \ref{MattisOverlap}, we need to introduce the following 
\begin{definition}{(Control and order parameters)}
Beyond the noise in the network $\beta \in \mathcal{R}^+$ and the storage capacity of the network $\alpha:=K/N \in \mathcal{R}^+$ (as in the previous model at work solely with pattern storage), further control parameters for Hebbian learning are  also the dataset quality $r$ and the dataset quantity $M$: note that these two variables merge together to form $\rho= (1-r^2)/Mr^2$ that is the entropy of the dataset\footnote{Stricktly speaking, $\rho := (1-r^2)/Mr^2$ is the conditional entropy of the pattern given the examples, i.e., $S(\xi_i^{\mu}|\bm{\eta_i^{\mu}})$. The latter accounts for the amount of information required to describe the pattern entry $\xi_i^{\mu}$ given the knowledge of $M$ independent observations of (examples of) the pattern  $\bm{\eta}_i^{\mu}$ $=(\eta_i^{\mu,a=1},...,\eta_i^{\mu,a=M})$. While we deepened this point elsewhere (see e.g. \cite{prlmiriam}), to see that $\rho$ quantifies the conditional entropy of the pattern given the examples $S(\xi_i^{\mu}|\bm{\eta_i^{\mu}})$, an intuitive argument is to note that the error probability per bit is $\mathbb{P}(\chi_i^{\mu,a}=-1)=(1-r)/2$, thus by applying the majority rule to the empirical vector $\bm{\eta}_i^{\mu}$ we get $\mathbb{P}(\textrm{sign}(\sum_{a=1}^M \chi_i^{\mu,a}=-1) \sim 1 - erf(1/\sqrt{2\rho})$. Further, it is simple to check that the dataset is maximally informative when its entropy is null, namely $\rho=0$, that is or $r=1$ (i.e. examples are perfect copies of the patterns) or $M \to \infty$ (i.e. examples are infinite).}.
\newline
Beyond the K Mattis magnetizations of the patterns, the natural order parameters of the Hebbian neural network in the unsupervised regime are the $K \times M$ generalized Mattis magnetizations
\begin{align}\label{PdO-Uns}
n_{\mu,a}&= \dfrac{1}{N}\sum_{i=1}^N \eta_i^{\mu,a} \si,
\end{align}
for each $\mu=1, \hdots ,K$ and $a=1, \hdots, M$ \cite{EmergencySN}.  In the supervised regime, instead, beyond the K Mattis magnetizations of the patterns, the generalized Mattis magnetizations (related to the examples) are defined as 
\begin{align}\label{PdO-Sup}
\tilde n_\mu&= \dfrac{1}{M N} \sum_{a=1}^M \sum_{i=1}^N \eta_i^{\mu,a} \si,
\end{align}
for each $\mu=1, \hdots ,K$ \cite{prlmiriam}.
\label{def:ordparam}
\end{definition}
It is a trivial exercise to prove that the Cost functions provided in Def. \ref{def:sup} and Def. \ref{def:unsup} can indeed be written as quadratic forms, respectively, in the order parameters provided by eq.~\eqref{PdO-Uns} and eq.~\eqref{PdO-Sup}. Further, it is elementary to realize that --when moving from storing to learning-- other questions beyond the maximal storage capacity of the network can be addressed, at first the thresholds for learning, namely the minimal amount of information that must be provided to the network in order for it to correctly infer the patterns: these computational aspects are known and they have been discussed elsewhere \cite{prlmiriam,Lucibello}.

\subsection{Maximum entropy principle for shallow neural networks}
Plan of this Section is two-fold: the main one is to derive the (so far assumed) expressions for the Cost functions for the two regimes of Hebbian learning provided in Def.s \ref{def:sup} and \ref{def:unsup} by relying upon the maximum entropy extremization. Further,  by inspecting the way we  must constraint the entropy functional to reach this goal, we can also better understand which kind of correlations the network searches in the dataset to form its own probabilistic representation of the patterns. 
\newline
Once addressed these points, also semi-supervised learning (that is training a network with a mixed dataset built of by labelled and unlabelled examples) is briefly discussed. 
\newline
\newline
The MEP \cite{Jaynes} allows to determine the less structured probability distribution whose moments  best fit  their empirical counterparts\footnote{Note that the empirical moments have to be experimentally measured as for instance discussed for the derivation of the Hopfield storage prescription, see Sec. \ref{Sec22}. See also, e.g. \cite{Application1,Application2,Application3} for concrete applications of this method to real datasets collected in biological sciences (neuroscience, ecology and immunology respectively).}. As we are going to show in the first part of this manuscript, as long as we apply this technique to pairwise networks (as the Hopfield model where neurons interact in couples),  these moments turn out to be just the first-order moments (i.e. means and variances or one-point and two-point correlation functions).  \red{On the contrary}, in the second part of this manuscript, we show how, extending the theory to many body interactions, higher order moments have to be accounted.

\subsubsection{Supervised Hebbian learning}
\label{sec:SUP_MEP}
We deal with the supervised scheme first: let us start with the following 
\begin{definition}
\label{def:empir_sup}
    In the (shallow) supervised Hebbian learning, once provided a dataset as defined in Def. \ref{def:example}, with $\R$ as defined in Def. \ref{def:sup}, $N$ Ising neurons, whose activities read as $\sigma_i \in \{- 1, +1\}$ for all $i =1,...,N$,  the  empirical averages of $\tilde n_\mu$ and $\tilde n_\mu^2$ that we need to collect from the dataset are denoted by 
\begin{align}
    % \langle \tilde n_1 \rangle_{\exp} &= \dfrac{r}{RNM} \sum_{i,a} \eta_i^{1,a} \langle \si \rangle_{\textnormal{exp}}, \\
    % \langle \tilde n_1^2 \rangle_{\exp} &= \left(\dfrac{r}{RNM}\right)^2 \sum_{i,j,a,b} \eta_i^{1,a}\eta_j^{1,b} \langle \si \sigma_j \rangle_{\textnormal{exp}} \\
    \langle \tilde n_\mu \rangle_{\textnormal{exp}} &= \dfrac{r}{\R NM} \sum_{i=1}^N \sum_{a=1}^M \eta_i^{\mu,a} \langle \si \rangle_{\textnormal{exp}}, \\ \label{eq:2.4_1}
    \langle \tilde n_\mu^2 \rangle_{\textnormal{exp}} &= \left(\dfrac{r}{\R NM}\right)^2 \sum_{i,j=1}^{N}\sum_{a,b=1}^{M} \eta_i^{\mu,a}\eta_j^{\mu,b} \langle \si \sigma_j \rangle_{\textnormal{exp}}.
\end{align}
\end{definition}
\fd{Notice that,} as previously remarked, the presence of a teacher allows to write the summation over the examples in the above eq.~\eqref{eq:2.4_1} as performed on two separate indices $(a,b)$.

\begin{proposition}
\label{propMEP}
    Given a dataset $\{ \bm \eta^{\mu,a} \}_{a=1,...,M}^{\mu=1,...,K}$ generated as prescribed in Def. \ref{def:example} and $N$ Ising neurons, whose activities read as $\sigma_i \in \{-1,+1\}$ for all $i=1,...,N$, the less structured probability distribution  $ \mathbb{P}_{N,M}(\s \vert \bm \eta)$ whose first and second order's moments reproduce the empirical averages fixed in Definition \ref{def:empir_sup} reads as 
    \begin{align}
    \label{eq:propMEP_sup}
        \mathbb{P}_{N,M}(\s \vert \bm \eta) =& \dfrac{1}{Z(\bm \eta)} \exp \left(
        %\lambda_1^1 \dfrac{r}{\R NM} \sum_{i,a=1,1}^{N,M} \eta_i^{1,a} \si+ 
        \sum_{\mu=1}^K  \dfrac{\lambda_1^\mu r}{\R NM} \sum_{i=1}^N\sum_{a=1}^{M} \eta_i^{\mu,a} \si  
        %\lambda_2^1 \left(\dfrac{r}{\R NM}\right)^2 \sum_{i,j,a,b=1,1,1,1}^{N,N,M,M} \eta_i^{1,a} \si\eta_j^{1,b} \sigma_j+ 
        +\sum_{\mu=1}^K \lambda_2^\mu \left(\dfrac{r}{\R NM}\right)^2 \sum_{i,j=1}^{N}\sum_{a,b=1}^{M} \eta_i^{\mu,a}\eta_j^{\mu,b}\si \sigma_j\right),
    \end{align}
    where $Z(\bm \eta)$ is the partition function, $\lambda_1^\mu$ and $\lambda_2^\mu$, $\mu = 1,...,K$ are the Lagrangian multipliers. In particular, $\lambda_1^\mu=0$ and $\lambda_2^\mu=\dfrac{\beta \R N}{2 r^2}$, for $\mu=1, \hdots, K$, are the parameters that reproduce the details of the supervised Hebbian learning scheme\footnote{$\lambda_1=0$ simply reflects that the distribution of the examples is symmetric. \\ The independence of $\lambda_2$ by the index $\mu$ reflects that the weights allocated to each pattern (i.e. the amplitudes of their basins of attraction) are all the same.}.
\end{proposition}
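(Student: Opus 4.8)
The plan is to run the very same maximum-entropy machinery already used for Hebbian storing in Sec.~\ref{Sec22}, now promoting the supervised order parameter $\tilde n_\mu$ of eq.~\eqref{PdO-Sup} (and its square) to the role that $m_\mu$ and $m_\mu^2$ played there. Concretely I would assemble the constrained functional exactly as in eq.~\eqref{shannonH}: the Shannon entropy $-\sum_{\s}\mathbb{P}_{N,M}(\s\vert\bm\eta)\log\mathbb{P}_{N,M}(\s\vert\bm\eta)$, a multiplier $\lambda$ enforcing normalization, and for each pattern $\mu$ two multipliers $\lambda_1^\mu,\lambda_2^\mu$ pinning the Gibbs averages of the two observables appearing in Def.~\ref{def:empir_sup} to their empirical values $\langle\tilde n_\mu\rangle_{\textnormal{exp}}$ and $\langle\tilde n_\mu^2\rangle_{\textnormal{exp}}$.

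Next I would impose $\delta S/\delta\mathbb{P}_{N,M}=0$. As in the storing case this is a pointwise (in $\s$) stationarity condition,
\[
-\log\mathbb{P}_{N,M}(\s\vert\bm\eta)-1+\lambda+\sum_{\mu=1}^{K}\lambda_1^\mu\,\frac{r}{\R NM}\sum_{i,a}\eta_i^{\mu,a}\sigma_i+\sum_{\mu=1}^{K}\lambda_2^\mu\Big(\frac{r}{\R NM}\Big)^2\sum_{i,j,a,b}\eta_i^{\mu,a}\eta_j^{\mu,b}\sigma_i\sigma_j=0,
\]
whose solution is precisely the exponential family of eq.~\eqref{eq:propMEP_sup}; the normalization multiplier $\lambda$ is then fixed by stationarity in $\lambda$ and absorbed into the partition function $Z(\bm\eta)$. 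This step is essentially a transcription of the Hopfield case of Sec.~\ref{Sec22}, so I expect it to be routine.

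The genuinely new content is the determination of the two multipliers. For $\lambda_1^\mu$: since the examples are drawn symmetrically (eq.~\eqref{eq:Bernoulli} is invariant under $\xi^\mu\to-\xi^\mu$) and no one-body field is retained, the target measure is invariant under the global flip $\s\to-\s$, hence $\langle\tilde n_\mu\rangle=0$ and the linear term must disappear, forcing $\lambda_1^\mu=0$. For $\lambda_2^\mu$: I would collapse the quadratic observable using $\sum_{i,j,a,b}\eta_i^{\mu,a}\eta_j^{\mu,b}\sigma_i\sigma_j=(MN)^2\,\tilde n_\mu^2$, so that the exponent reduces to $\sum_\mu\lambda_2^\mu (r^2/\R^2)\,\tilde n_\mu^2$, and then compare with the Boltzmann weight $e^{-\beta H_{N,M}^{\textnormal{(Sup)}}}$ written through the quadratic-form representation $H_{N,M}^{\textnormal{(Sup)}}=-\tfrac{N}{2\R}\sum_\mu \tilde n_\mu^2$ of eq.~\eqref{eq:H_sup}. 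Term-by-term identification in $\mu$ then yields $\lambda_2^\mu=\beta\R N/(2r^2)$, manifestly independent of $\mu$, the latter being a direct consequence of the permutation symmetry of the patterns (which weights every basin of attraction equally).

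The main obstacle is purely the bookkeeping of the normalizations $r$ and $\R=r^2+(1-r^2)/M$: one must verify that the precise prefactor $(r/\R NM)^2$ carried by the constrained observable, once multiplied by $\lambda_2^\mu$, reproduces the coupling $\tfrac{1}{2N\R M^2}\big(\sum_a\eta_i^{\mu,a}\big)\big(\sum_b\eta_j^{\mu,b}\big)$ of Def.~\ref{def:sup}, including the passage between the pairwise $i<j$ form and the full double sum $\sum_{i,j}$ (the diagonal $i=j$ contributing only a $\s$-independent constant that is irrelevant to the Gibbs measure). I would also record explicitly that it is the presence of the teacher that lets the summations over examples factor into the two independent indices $a,b$ in $\langle\tilde n_\mu^2\rangle_{\textnormal{exp}}$; this is exactly the feature that makes the quadratic observable collapse onto $\tilde n_\mu^2$ and hence makes the whole identification go through.
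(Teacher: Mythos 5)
Your proposal is correct and follows essentially the same route as the paper's own proof: the same constrained Shannon--entropy functional with multipliers $\lambda_0,\lambda_1^\mu,\lambda_2^\mu$, the same functional-derivative stationarity yielding the exponential family of eq.~\eqref{eq:propMEP_sup}, the same absorption of $\lambda_0$ into $Z(\bm\eta)$, and the same final identification of $\lambda_1^\mu=0$, $\lambda_2^\mu=\beta\R N/(2r^2)$ by matching the exponent against $e^{-\beta H^{(Sup)}_{N,M}}$ with $H^{(Sup)}_{N,M}=-\tfrac{N}{2\R}\sum_\mu\tilde n_\mu^2$. The only difference is that you make explicit what the paper relegates to a footnote and to an ``immediate to verify'' remark (the spin-flip symmetry forcing $\lambda_1^\mu=0$, the collapse $\sum_{i,j,a,b}\eta_i^{\mu,a}\eta_j^{\mu,b}\sigma_i\sigma_j=(MN)^2\tilde n_\mu^2$, and the diagonal/off-diagonal bookkeeping), which is a harmless and indeed welcome elaboration.
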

 
\begin{remark}
A direct consequence of the mathematical expression of the Shannon entropy, in particular the presence of a logarithm in its definition, is that the solution of the MEP can always be seen as a Boltzmann-Gibbs probability distribution of a given Hamiltonian, namely
$$
\mathbb{P}_{N,M}(\s \vert \bm \eta) = \frac{e^{-\beta H_{N,M}(\boldsymbol{\sigma}|\boldsymbol{\eta})}}{Z(\boldsymbol{\eta})}, 
$$
thus it is immediate to verify that $H_{N,M}(\boldsymbol{\sigma}|\boldsymbol{\eta})=H^{(Sup)}_{N,M}(\boldsymbol{\sigma}|\boldsymbol{\eta})$ by comparing the Cost function provided in  Definition \ref{def:sup} and the exponent at the r.h.s. of eq.~\eqref{eq:propMEP_sup} with the values of the Lagrange multipliers selected in Proposition \ref{propMEP}.
\end{remark}
\begin{proof}\textit{(Proposition \ref{propMEP})}
We apply the maximum entropy principle that, from a formal perspective, is nothing but \fd{reduces to} a constrained optimization problem w.r.t. the distribution $\mathbb{P}_{N,M}(\s \vert \bm \eta)$, solvable through method of Lagrangian multipliers. The Lagrange entropic functional is 
\begin{align}
    S[ \mathbb{P}_{N,M}(\s \vert \bm \eta)]=& - \sum_{\s} \mathbb{P}_{N,M}(\s \vert \bm \eta) \log \mathbb{P}_{N,M}(\s \vert \bm \eta) + \lambda_0 \left( \sum_{\s} \mathbb{P}_{N,M}(\s \vert \bm \eta) - 1\right) \notag \\
    &+ \sum_{\mu=1}^K \lambda_1^\mu \left( \sum_{\s} \mathbb{P}_{N,M}(\s \vert \bm \eta)\dfrac{r}{\R NM} \sum_{i=1}^N\sum_{a=1}^M \eta_i^{\mu,a} \si -  \langle \tilde n_\mu \rangle_{\exp}\right) \notag \\
    &+ \sum_{\mu=1}^K \lambda_2^\mu \left( \sum_{\s} \mathbb{P}_{N,M}(\s \vert \bm \eta)\left(\dfrac{r}{\R NM}\right)^2 \sum_{i,j=1}^{N}\sum_{a,b=1}^{M} \eta_i^{\mu,a}\eta_j^{\mu,b}\si \sigma_j - \langle \tilde n_\mu^2 \rangle_{\exp}\right)
\end{align}
where the first term at the r.h.s. represents the entropy of the system and the Lagrangian multipliers clamp $\mathbb{P}_{N,M}(\s \vert \bm \eta)$ to be a probability distribution (this job is done by $\lambda_0$) whose mean values of $\tilde n_\mu$ and $\tilde n_\mu^2$ reproduce their corresponding experimental ones (these matching are guaranteed by $\lambda_1^{\mu}$ and $\lambda_2^{\mu}$ respectively). \\
The stationarity conditions of $S[\mathbb{P}_{N,M}(\s \vert \bm \eta)]$ w.r.t. $\mathbb{P}_{N,M}(\s \vert \bm \eta)$, $\lambda_0$, $\lambda_1^\mu$ and $\lambda_2^\mu$ yield
% In order to maximise the functional we need to derive w.r.t. each Lagrangian multipliers, namely
% Maximising $S[\mathbb{P}_M]$ w.r.t. the probability $\mathbb{P}_M$ we get 
\begin{align}
\label{eq:distrMEP}
\begin{cases}
    \dfrac{\partial S[\mathbb{P}_{N,M}(\s \vert \bm \eta)]}{\partial \lambda_0} =& \sum_{\s} \mathbb{P}_{N,M}(\s \vert \bm \eta) - 1 =0\\ 
    \dfrac{\partial S[\mathbb{P}_{N,M}(\s \vert \bm \eta)]}{\partial \lambda_1^\mu} =&  \sum_{\s} \mathbb{P}_{N,M}(\s \vert \bm \eta)\dfrac{r}{\R NM} \SOMMA{i=1}{N}\SOMMA{a=1}{M} \eta_i^{\mu,a} \si -  \langle \tilde n_\mu \rangle_{\exp} =0 \\
     \dfrac{\partial S[\mathbb{P}_{N,M}(\s \vert \bm \eta)]}{\partial \lambda_2^\mu}=&\sum_{\s} \mathbb{P}_{N,M}(\s \vert \bm \eta)\left(\dfrac{r}{\R NM}\right)^2 \SOMMA{i,j=1}{N}\SOMMA{a,b=1}{M} \eta_i^{\mu,a}\eta_j^{\mu,b}\si \sigma_j - \langle \tilde n_\mu^2 \rangle_{\exp} =0\\
    \dfrac{\delta S[\mathbb{P}_{N,M}(\s \vert \bm \eta)]}{\delta \mathbb{P}_{N,M}} =& -\log \mathbb{P}_{N,M}(\s \vert \bm \eta)-1 +\lambda_0 + \SOMMA{\mu=1}{K} \lambda_1^\mu \dfrac{r}{\R NM} \SOMMA{i=1}{N}\SOMMA{a=1}{M} \eta_i^{\mu,a} \si \notag \\
    &+ \SOMMA{\mu=1}{K} \lambda_2^\mu \left(\dfrac{r}{\R NM}\right)^2 \SOMMA{i,j=1}{N}\sum_{a,b=1}^{M} \eta_i^{\mu,a} \eta_j^{\mu,b}  \sigma_i \sigma_j =0     
    \end{cases}
    \end{align}
    The first three equations give us the condition of $\mathbb{P}_{N,M}(\s \vert \bm \eta)$ to be a discrete probability distribution and the connection between the theoretical averages of $\tilde n_\mu$ and $\tilde n_\mu^2$ and their experimental counterparts. The last equation, instead, returns  the explicit expression of the probability distribution that results from the extremization procedure, namely   
    \begin{align}
    \log \mathbb{P}_{N,M}(\s \vert \bm \eta) =& {\left( \lambda_0 -1 +  \SOMMA{\mu=1}{K} \lambda_1^\mu \dfrac{r}{\R NM} \SOMMA{i=1}{N}\SOMMA{a=1}{M} \eta_i^{\mu,a} \si + \SOMMA{\mu=1}{K} \lambda_2^\mu \left(\dfrac{r}{\R NM}\right)^2 \SOMMA{i,j=1}{N}\SOMMA{a,b=1}{M} \eta_i^{\mu,a} \eta_j^{\mu,b} \sigma_i \sigma_j\right)}.
\end{align}
If we fix the normalization by setting  $Z(\bm \eta)=\exp\left( 1-\lambda_0 \right)$ we recover a Boltzmann-Gibbs distribution and, in particular,  by choosing\footnote{$\lambda_1^\mu=0$ simply reflects that the distribution of examples is symmetric. The independence of $\lambda_2^\mu$ by the index $\mu$ reflects that the weights allocated to each pattern (i.e. the amplitudes of their basins of attraction) are all the same.} $\lambda_1^\mu=0$\ and $\lambda_2^\mu=\dfrac{\beta \R N}{2 r^2}$, for $ \mu=1, \hdots, K$, the parameters reproduce the details of the supervised Hebbian learning scheme.
\end{proof}

\subsubsection{Unsupervised Hebbian learning}
\label{sec:UNSUP_MEP}
Mirroring the structure of subsection \ref{sec:SUP_MEP}, we use maximum entropy principle now to recover the Boltzmann-Gibbs expression of the probability distribution $\mathbb{P}_{N,M}(\s \vert \bm \eta)$ accounting for the unsupervised protocol (i.e. whose Cost functions matches that of shallow unsupervised Hebbian learning provided in Def. \ref{def:unsup}). 
To this task, we need to premise the following 
\begin{definition}
\label{def:empir}
    In the (shallow) unsupervised Hebbian learning, once provided a dataset $\{ \bm \eta^{\mu,a} \}_{a=1,...,M}^{\mu=1,...,K}$ as defined in Def. \ref{def:example}, with $\R$ as introduced in Def. \ref{def:unsup}, the empirical averages of $n_{\mu,a}$ and $n_{\mu,a}^2$ that we need to collect from the dataset read as 
\begin{align}
    % \langle n_{1,a} \rangle_{\exp} &= \dfrac{r}{RN} \sum_{i} \eta_i^{1,a} \langle \si \rangle_{\textnormal{exp}}, \\ 
    % \langle n_{1,a}^2 \rangle_{\exp} &= \left(\dfrac{r}{RN}\right)^2 \sum_{i,j} \eta_i^{1,a}\eta_j^{1,a} \langle \si \sigma_j \rangle_{\textnormal{exp}}, \\
    \langle n_{\mu,a} \rangle_{\textnormal{exp}} &= \dfrac{r}{\R N} \sum_{i=1}^N \eta_i^{\mu,a} \langle \si \rangle_{\textnormal{exp}},\\ \label{eq:2.4}
    \langle n_{\mu,a}^2 \rangle_{\textnormal{exp}} &= \left(\dfrac{r}{\R N}\right)^2 \sum_{i,j=1}^{N} \eta_i^{\mu,a}\eta_j^{\mu,a} \langle \si \sigma_j \rangle_{\textnormal{exp}}, 
\end{align}
where the index $a$ has not been saturated since we lack a teacher now that provides this information.
\end{definition}

Now we state the main proposition of this subsection: 
\begin{proposition}
\label{propMEP_UNSUP}
    Given a dataset $\{ \bm \eta^{\mu,a} \}_{a=1,...,M}^{\mu=1,...,K}$ generated as prescribed in Def. \ref{def:example} and $N$ Ising neurons, whose activities read as $\sigma_i \in \{ -1, +1\}$ for all $i =1,...,N$,  the less structured probability distribution $\mathbb{P}_{N,M}(\s \vert \bm \eta)$ whose first and second order's moments reproduce the empirical averages fixed in Definition \ref{def:empir} reads as 
    \begin{align}
        \mathbb{P}_{N,M}(\s \vert \bm \eta) =& \dfrac{1}{Z(\bm \eta)} \exp \left( 
        %\lambda_1^{1,a} \dfrac{r}{RN} \sum_{i} \eta_i^{1,a} \si 
        \sum_{\mu=1}^K\sum_{a=1}^{M} \lambda_1^{\mu,a} \dfrac{r}{\R N} \sum_{i}^N \eta_i^{\mu,a} \si 
        %\lambda_2^{1,a} \left(\dfrac{r}{RN}\right)^2 \sum_{i,j} \eta_i^{1,a} \si\eta_j^{1,a} \sigma_j+
        +\sum_{\mu=1}^K\sum_{a=1}^M \lambda_2^{\mu,a} \left(\dfrac{r}{\R N}\right)^2 \sum_{i,j=1}^{N} \eta_i^{\mu,a} \eta_j^{\mu,a} \sigma_i \sigma_j\right)
        \label{eq:probMEP_UNSUP}
    \end{align}
    where $Z(\bm \eta)$ is the partition function, $\lambda_1^{\mu,a}$ and $\lambda_2^{\mu,a}$  for all $\mu = 1,...,K$ and $a = 1,...,M$ are Lagrangian multipliers. In particular, $\lambda_1^\mu=0$ and $\lambda_2^\mu=\dfrac{\beta \R N}{2 M r^2}$, for $\mu=1, \hdots, K$, are the parameters that reproduce the details of the unsupervised Hebbian learning scheme.
\end{proposition}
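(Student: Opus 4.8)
The plan is to transcribe the proof of Proposition \ref{propMEP} almost verbatim, the only—but decisive—difference being the bookkeeping of the example index $a$. First I would assemble the Lagrange entropic functional, constraining $\mathbb{P}_{N,M}(\s\vert\bm\eta)$ to be normalized and to reproduce the empirical moments of Definition \ref{def:empir}:
\begin{align*}
S[\mathbb{P}_{N,M}(\s\vert\bm\eta)] =& -\sum_{\s}\mathbb{P}_{N,M}(\s\vert\bm\eta)\log\mathbb{P}_{N,M}(\s\vert\bm\eta) + \lambda_0\Big(\sum_{\s}\mathbb{P}_{N,M}(\s\vert\bm\eta)-1\Big)\\
&+\SOMMA{\mu=1}{K}\SOMMA{a=1}{M}\lambda_1^{\mu,a}\Big(\sum_{\s}\mathbb{P}_{N,M}(\s\vert\bm\eta)\tfrac{r}{\R N}\SOMMA{i=1}{N}\eta_i^{\mu,a}\si - \langle n_{\mu,a}\rangle_{\exp}\Big)\\
&+\SOMMA{\mu=1}{K}\SOMMA{a=1}{M}\lambda_2^{\mu,a}\Big(\sum_{\s}\mathbb{P}_{N,M}(\s\vert\bm\eta)\big(\tfrac{r}{\R N}\big)^2\SOMMA{i,j=1}{N}\eta_i^{\mu,a}\eta_j^{\mu,a}\si\sj - \langle n_{\mu,a}^2\rangle_{\exp}\Big).
\end{align*}
The crucial structural point—which distinguishes this case from the supervised one—is that, lacking a teacher, the index $a$ cannot be saturated: each pair $(\mu,a)$ carries its own constraints and hence its own multipliers $\lambda_1^{\mu,a},\lambda_2^{\mu,a}$, so there are $K\times M$ of each, rather than the $K$ multipliers of Proposition \ref{propMEP}.

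Next I would impose stationarity. Differentiating with respect to $\lambda_0,\lambda_1^{\mu,a},\lambda_2^{\mu,a}$ returns, respectively, the normalization of $\mathbb{P}_{N,M}(\s\vert\bm\eta)$ and the moment-matching conditions of Definition \ref{def:empir}. The functional derivative with respect to $\mathbb{P}_{N,M}(\s\vert\bm\eta)$ yields
\begin{align*}
-\log\mathbb{P}_{N,M}(\s\vert\bm\eta)-1+\lambda_0 + \SOMMA{\mu=1}{K}\SOMMA{a=1}{M}\lambda_1^{\mu,a}\tfrac{r}{\R N}\SOMMA{i=1}{N}\eta_i^{\mu,a}\si + \SOMMA{\mu=1}{K}\SOMMA{a=1}{M}\lambda_2^{\mu,a}\big(\tfrac{r}{\R N}\big)^2\SOMMA{i,j=1}{N}\eta_i^{\mu,a}\eta_j^{\mu,a}\si\sj = 0,
\end{align*}
which, once exponentiated and normalized by setting $Z(\bm\eta)=e^{1-\lambda_0}$, is precisely the Boltzmann–Gibbs form claimed in eq.~\eqref{eq:probMEP_UNSUP}.

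Finally I would identify the multipliers. Setting $\lambda_1^{\mu,a}=0$ removes the one-body contribution, reflecting the symmetry of the example distribution of Def. \ref{def:example}, while taking $\lambda_2^{\mu,a}$ independent of $(\mu,a)$ encodes the assumption that all patterns carry equal statistical weight. Comparing the surviving quadratic exponent with $-\beta H_{N,M}^{\textnormal{(Uns)}}$ of Def. \ref{def:unsup}, the diagonal-in-$a$ structure $\sum_{a}\eta_i^{\mu,a}\eta_j^{\mu,a}$ matches term by term, and reading off the prefactor fixes $\lambda_2^{\mu,a}$ to the stated value $\beta\R N/(2Mr^2)$; the extra $1/M$ relative to the supervised value $\beta\R N/(2r^2)$ is exactly the footprint of the single (rather than double) sum over examples.

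The step I expect to require the most care is this last matching: one must track the interplay of the $(r/\R N)^2$ weight carried by the constraint, the $1/M$ that the unsupervised normalization $\R M$ introduces, and the passage from the unrestricted sum $\sum_{i,j}$ appearing in the MEP exponent to the $i<j$ sum of the Hamiltonian (the diagonal $i=j$ terms being constant and absorbed into $Z(\bm\eta)$). Everything else is a transcription of the supervised argument with $a$ kept as a free index rather than contracted.
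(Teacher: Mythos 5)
Your proposal is correct and follows essentially the same route as the paper's own proof: the same Lagrange entropic functional with $K\times M$ pairs of multipliers, the same stationarity conditions, the same exponentiation and normalization $Z(\bm\eta)=e^{1-\lambda_0}$, and the same final identification $\lambda_1^{\mu,a}=0$, $\lambda_2^{\mu,a}=\beta\R N/(2Mr^2)$. Your closing remark about the diagonal $i=j$ terms and the unrestricted-versus-$i<j$ sum is careful bookkeeping that the paper itself glosses over (its conventions absorb such additive constants and factors silently), so it refines rather than changes the argument.
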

% \begin{corollary}
% \label{cor:MEP_UNSUP}
% In the asymptotic limit of {\em big data}, namely for $M \to +\infty$,  the Boltzmann-Gibbs  probability distribution $\mathbb{P}_{N,M}(\s \vert \bm \eta)$  related to the unsupervised Hebbian learning Cost function approaches the Boltzmann-Gibbs distribution of the Hopfield model, namely
% \begin{equation}
%     \lim_{M \to \infty} \mathbb{P}_{N,M}(\s \vert \bm \eta) = \mathbb{P}_{N}(\s \vert \bm \xi) =  \frac{1}{Z(\bm \xi)}\exp \left( 
%     %\dfrac{1}{N} \sum_{i<j}  \xi_i^{1}\xi_j^{1}\sigma_i \sigma_j+
%     \dfrac{1}{N} \sum_{\mu=1}^K \sum_{i,j=1,1}^{N,N}  \xi_i^{\mu}\xi_j^{\mu}\sigma_i \sigma_j\right).
% \end{equation}  
% where $Z(\bm \xi)$ is the partition function and $\bm \xi^\mu$, for $\mu =1,...,K$, are the i.i.d. Rademacher variables accounting for the patterns. 
% \end{corollary}

\begin{proof}\textit{(Proposition \ref{propMEP_UNSUP})}
We follow the path paved to prove the analogous result within the supervised setting. 
% apply the maximum entropy principle, which, from mathematical perspective, is a constrained optimization problem w.r.t. $P$ solvable through method of Lagrangian multipliers. 
The Lagrange entropic functional, associated to the probability distribution $\mathbb{P}_{N,M}(\s \vert \bm \eta)$ and constrained to reproduce the empirical lowest order statistics provided in Def. \ref{def:empir}, reads as 
\begin{align}
    S[\mathbb{P}_{N,M}(\s \vert \bm \eta)]=& - \sum_{\s} \mathbb{P}_{N,M}(\s \vert \bm \eta) \log \mathbb{P}_{N,M}(\s \vert \bm \eta) + \lambda_0 \left( \sum_{\s} \mathbb{P}_{N,M}(\s \vert \bm \eta) - 1\right) \notag \\
    &+ \sum_{\mu=1}^K\sum_{a=1}^M \lambda_1^{\mu,a} \left( \sum_{\s} \mathbb{P}_{N,M}(\s \vert \bm \eta)\dfrac{r}{RN} \sum_{i=1}^N \eta_i^{\mu,a} \si -  \langle n_{\mu,a} \rangle_{\exp}\right) \notag \\
    &+ \sum_{\mu=1}^K\sum_{a=1}^M \lambda_2^{\mu,a} \left( \sum_{\s} \mathbb{P}_{N,M}(\s \vert \bm \eta)\left(\dfrac{r}{RN}\right)^2 \sum_{i,j=1}^{N} \eta_i^{\mu,a}\eta_j^{\mu,a}\si \sigma_j - \langle n_{\mu,a}^2 \rangle_{\exp}\right)
\end{align}
where the first term represents the Shannon entropy of the network while the Lagrangian multipliers force $\mathbb{P}_{N,M}(\s \vert \bm \eta)$ to be a probability distribution (via $\lambda_0$) and clamp the mean values of $\langle n_{\mu,a}\rangle$ and $\langle n_{\mu,a}^2 \rangle$ to be the experimental ones (via $\lambda_1^{\mu}$ and $\lambda_2^{\mu}$ respectively). \\
The stationarity conditions of $S[\mathbb{P}_{N,M}(\s \vert \bm \eta)]$ w.r.t. $\mathbb{P}_{N,M}(\s \vert \bm \eta)$, $\lambda_1^\mu$ and $\lambda_2^\mu$ for $\mu =1,...,K $ yield
% In order to maximise the functional we need to derive w.r.t. each Lagrangian multipliers, namely
% Maximising $S[\mathbb{P}_M]$ w.r.t. the probability $\mathbb{P}_M$ we get 
\begin{align}
\label{eq:distrMEP2}
\begin{cases}
    \dfrac{\partial S[\mathbb{P}_{N,M}(\s \vert \bm \eta)]}{\partial \lambda_0} =& \sum_{\s} \mathbb{P}_{N,M}(\s \vert \bm \eta) - 1 =0\\ 
    \dfrac{\partial S[\mathbb{P}_{N,M}(\s \vert \bm \eta)]}{\partial \lambda_1^{\mu,a}} =&  \sum_{\s} \mathbb{P}_{N,M}(\s \vert \bm \eta)\dfrac{r}{\R N} \SOMMA{i=1}{N} \eta_i^{\mu,a} \si -  \langle n_{\mu,a} \rangle_{\exp}=0 \\
     \dfrac{\partial S[\mathbb{P}_{N,M}(\s \vert \bm \eta)]}{\partial \lambda_2^{\mu,a}}=&\sum_{\s} \mathbb{P}_{N,M}(\s \vert \bm \eta)\left(\dfrac{r}{\R N}\right)^2 \SOMMA{i,j=1,1}{N,N} \eta_i^{\mu,a}\eta_j^{\mu,a}\si \sigma_j - \langle n_{\mu,a}^2 \rangle_{\exp}=0\\
    \dfrac{\delta S[\mathbb{P}_{N,M}(\s \vert \bm \eta)]}{\delta \mathbb{P}_{N,M}(\s \vert \bm \eta)} =& -\log \mathbb{P}_{N,M}(\s \vert \bm \eta) -1 +\lambda_0 + \SOMMA{\mu=1}{K} \SOMMA{a=1}{M} \lambda_1^{\mu,a} \dfrac{r}{\R N} \SOMMA{i=1}{N} \eta_i^{\mu,a} \si \notag \\
    &+ \SOMMA{\mu=1}{K} \SOMMA{a=1}{M} \lambda_2^{\mu,a} \left(\dfrac{r}{\R N}\right)^2 \SOMMA{i,j=1}{N} \eta_i^{\mu,a} \eta_j^{\mu,a} \sigma_i \sigma_j =0     
    \end{cases}
    \end{align}
    The first three equations return the requirement for $\mathbb{P}_{N,M}(\s \vert \bm \eta)$ to be a discrete probability distribution and guarantee the matching of the theoretical averages of $n_{\mu,a}$ and $n_{\mu,a}^2$ to their experimental counterparts. The last equation, instead, gives us the explicit expression of the probability distribution, namely   
    \begin{align}
    \log \mathbb{P}_{N,M}(\s \vert \bm \eta) =& {\left( \lambda_0 -1 + \SOMMA{\mu=1}{K} \SOMMA{a=1}{M}\lambda_1^{\mu,a} \dfrac{r}{RN} \SOMMA{i=1}{N} \eta_i^{\mu,a} \si+ \SOMMA{\mu=1}{K} \SOMMA{a=1}{M} \lambda_2^{\mu,a} \left(\dfrac{r}{RN}\right)^2 \SOMMA{i,j=1}{N} \eta_i^{\mu,a}  \eta_j^{\mu,a} \sigma_i \sigma_j\right)}.
\end{align}
If we fix the normalization by setting  $Z(\bm \eta)=\exp\left( 1-\lambda_0 \right)$ we recover a Boltzmann-Gibbs distribution and, in particular,  by requiring $\lambda_1^{\mu,a}=0$ and $\lambda_2^{\mu,a}=\dfrac{\R N}{2 r^2 M}$, for all $\mu =1,...,K$, $\mathbb{P}_{N,M}(\s \vert \bm \eta)$ reproduces all the details of the unsupervised Hebbian learning scheme.
\end{proof}

\subsubsection{Semi-supervised Hebbian learning}  
In this subsection we analyze a middle way between the settings of the last two subsections, known in Literature as \textit{semi-supervised learning} \cite{SemiSup}: in this approach the neural network is kept the same, simply the dataset is partially labelled and partially not (namely the teacher helped in grouping together only a subset of all the examples).  Before we start, we premise the following 
\begin{remark}\label{MixedData} 
    Let us consider now $M$ examples split into two disjoint categories whose sizes are $M_1$ and $M_2$ respectively and
    let us introduce a parameter $s \in [0,1]$ such that $s=M_1/M$ and $1-s=M_2/M$ and the two variables $\rho_{sup}$ and $\rho_{uns}$, accounting for the entropies of the labelled and unlabelled parts of the dataset, respectively, as 
    \begin{align}
        \rho_{sup}&= \dfrac{1-r^2}{r^2 M_1}= \dfrac{\rho}{s} \notag \\
        \rho_{uns}&= \dfrac{1-r^2}{r^2 M_2} = \dfrac{\rho}{1-s}.
    \end{align}    
\end{remark}

Moreover, we recall the definition of the order parameters $\tilde{n}_{\mu}$ and $n_{\mu,a}$ in Def. \ref{def:ordparam} such that we can finally write the next

\begin{definition} (Semi-supervised Hebbian learning)
 Given the dataset provided in Remark \ref{MixedData}, the Cost function (or Hamiltonian) of the Hebbian  neural network in the semi-supervised learning setting reads as 
    \begin{align}
  \label{eq:H}
        H_{N,M}^{(Semi)}(\boldsymbol{\sigma} \vert s,\ \bm \eta)= -\dfrac{N}{2}\dfrac{1}{r^2 M_2}\SOMMA{a=1}{M_2}\SOMMA{\mu=1}{K}\left[\dfrac{s}{\sqrt{(1+\rho_{s})}}\tilde n_\mu +\dfrac{(1-s)}{\sqrt{(1+\rho_{us})}}n_{\mu,a}\right]^2
    \end{align}
    \label{def:semisup}
\end{definition}
\begin{remark}
It is a trivial exercise to check that in the limit of a fully labelled (unlabelled) dataset, the Cost function of the supervised (unsupervised) Hebbian learning are recovered, i.e.
\begin{eqnarray}
    H_{N,M}^{(Semi)}(\boldsymbol{\sigma} \vert s=1,\ \bm \eta) &=& H_{N,M}^{(Sup)}(\boldsymbol{\sigma} \vert \bm \eta),\\
    H_{N,M}^{(Semi)}(\boldsymbol{\sigma} \vert s=0,\ \bm \eta) &=& H_{N,M}^{(Uns)}(\boldsymbol{\sigma} \vert \bm \eta).
\end{eqnarray}
\end{remark}
We will not address the computational properties of a pairwise neural network performing semi-supervised learning (these can be found elsewhere, see for instance \cite{SemiSup}), rather we aim to apply the maximum entropy prescription to recover the Boltzmann-Gibbs probability distribution related to this model and, thus, its Cost function. \\
Let us start from the following

\begin{definition}
\label{def:empaverSS}
    In the (shallow) semi-supervised Hebbian learning, the empirical average of $n_{\mu,a}$, $\tilde{n}_\mu$, $n_{\mu,a}^2$, $\tilde n_\mu^2$ and $n_{\mu,a}\tilde n_\mu$, with $a=1, \hdots, M_2$, that we need to collect from the dataset, are 
    \begin{align}
    \langle n_{\mu,a} \rangle_{\textnormal{exp}} &= \dfrac{r}{N \R} \sum_{i=1}^N \eta_i^{\mu,a} \langle \si \rangle_{\textnormal{exp}}, \\ 
    \langle n_{\mu,a}^2 \rangle_{\textnormal{exp}} &= \left(\dfrac{r}{N \R}\right)^2 \sum_{i,j=1}^{N} \eta_i^{\mu,a}\eta_j^{\mu,a} \langle \si \sigma_j \rangle_{\textnormal{exp}}, \\
        \langle \tilde n_\mu \rangle_{\textnormal{exp}} &= \dfrac{r}{N \R M_1} \sum_{i= 1}^{N} \sum_{b=1}^{M_1} \eta_i^{\mu,b} \langle \si \rangle_{\textnormal{exp}}, \\
    \langle \tilde n_\mu^2 \rangle_{\textnormal{exp}} &= \left(\dfrac{r}{N \R M_1}\right)^2 \sum_{i,j=1}^{N}\sum_{b,c=1}^{M_1} \eta_i^{\mu,c}\eta_j^{\mu,b} \langle \si \sigma_j \rangle_{\textnormal{exp}}, \\
    \langle \tilde n_\mu n_{\mu,a} \rangle_{\textnormal{exp}} &= \dfrac{r^2}{N^2\R^2 M_1} \sum_{i,j=1}^{N}\sum_{b=1}^{M_1} \eta_i^{\mu,a}\eta_j^{\mu,b} \langle \si \sigma_j \rangle_{\textnormal{exp}}. 
\end{align}
\end{definition}

\begin{proposition}
     Given a dataset $\{ \bm \eta^{\mu,a} \}_{a=1,...,M}^{\mu=1,...,K}$ generated as prescribed in Def. \ref{def:example} and $N$ Ising neurons, whose activities read as $\sigma_i \in \{ -1, +1\}$ for all $i =1,...,N$,  the less structured probability distribution $\mathbb{P}_{N,M}(\s \vert \bm \eta, s)$ whose first and second order's moments reproduce the empirical ones as fixed in Def. \ref{def:empaverSS} reads as 
    \begin{align}
        \mathbb{P}_{N,M}(\s \vert \bm \eta, s) =& \dfrac{1}{Z(\bm \eta)} \exp \left( \sum_{\mu=1}^K\sum_{a=1}^M \lambda_1^{\mu,a} \dfrac{r}{\R N} \sum_{i=1}^N \eta_i^{\mu,a} \si + \sum_{\mu=1}^K\sum_{a=1}^M \lambda_2^{\mu,a} \left(\dfrac{r}{N \R}\right)^2 \sum_{i,j=1}^{N} \eta_i^{\mu,a} \si\eta_j^{\mu,a} \sigma_j \right. \notag \\
        &+ \sum_{\mu=1}^K \lambda_1^\mu \dfrac{r}{ N \R M_1} \sum_{i=1}^N \sum_{a=1}^{M_1} \eta_i^{\mu,a} \si + \sum_{\mu=1}^K \lambda_2^\mu \left(\dfrac{r}{ N \R M_1}\right)^2 \sum_{i,j=1}^{N}\sum_{a,b=1}^{M_1} \eta_i^{\mu,a} \eta_j^{\mu,b} \si \sigma_j \notag \\
        &\left.+ \sum_{\mu=1}^K \sum_{a=1}^M  \tilde \lambda_2^{\mu,a} \dfrac{r^2}{ N^2 \R^2 M_1}\sum_{i,j=1}^{N}\sum_{b=1}^{M_1} \eta_i^{\mu,a} \eta_j^{\mu,b} \si \sigma_j\right)
    \end{align}
\end{proposition}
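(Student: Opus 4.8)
The plan is to replicate, essentially verbatim in structure, the Lagrangian-multiplier derivations already carried out in Proposition \ref{propMEP} and Proposition \ref{propMEP_UNSUP}, the only novelty being that the semi-supervised protocol imposes \emph{five} families of constraints rather than two. First I would assemble the constrained entropic functional $S[\mathbb{P}_{N,M}(\s \vert \bm \eta, s)]$ by adding to the Shannon entropy $-\sum_{\s}\mathbb{P}_{N,M}\log\mathbb{P}_{N,M}$ the normalization term $\lambda_0(\sum_{\s}\mathbb{P}_{N,M}-1)$ together with one Lagrange term per empirical average listed in Def. \ref{def:empaverSS}: the unlabelled first and second moments $\langle n_{\mu,a}\rangle_{\exp}$ and $\langle n_{\mu,a}^2\rangle_{\exp}$ (multipliers $\lambda_1^{\mu,a}$, $\lambda_2^{\mu,a}$, imposed for $a=1,\dots,M_2$), the labelled first and second moments $\langle \tilde n_\mu\rangle_{\exp}$ and $\langle \tilde n_\mu^2\rangle_{\exp}$ (multipliers $\lambda_1^\mu$, $\lambda_2^\mu$, carrying the double index sum over the $M_1$ labelled examples), and the genuinely new mixed moment $\langle \tilde n_\mu n_{\mu,a}\rangle_{\exp}$ (multiplier $\tilde\lambda_2^{\mu,a}$).

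Next I would impose stationarity. The derivatives with respect to $\lambda_0$ and to the five multiplier families merely reinstate, respectively, the normalization of $\mathbb{P}_{N,M}(\s \vert \bm \eta, s)$ and the five moment-matching identities, exactly as in the systems \eqref{eq:distrMEP} and \eqref{eq:distrMEP2}. The only computational line is the functional derivative $\delta S/\delta\mathbb{P}_{N,M}=0$, which reads $-\log\mathbb{P}_{N,M}(\s \vert \bm \eta, s)-1+\lambda_0+(\text{the five linear-in-}\mathbb{P}\text{ moment terms})=0$; solving for $\mathbb{P}_{N,M}$, exponentiating, and fixing the constant through $Z(\bm\eta)=\exp(1-\lambda_0)$ reproduces the Boltzmann--Gibbs form asserted in the statement, with precisely the five exponential contributions displayed there.

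The remaining---and only delicate---step is to select the multiplier values that turn this maximum-entropy measure into the semi-supervised cost function of Def. \ref{def:semisup}. As in the two previous propositions, the first-order multipliers vanish, $\lambda_1^{\mu,a}=\lambda_1^\mu=0$, by the symmetry of the example distribution \eqref{eq:Bernoulli}. The three second-order multipliers $\lambda_2^{\mu,a}$, $\lambda_2^\mu$, $\tilde\lambda_2^{\mu,a}$ must then be matched to the three quadratic contributions produced by expanding the square $[\,\tfrac{s}{\sqrt{1+\rho_s}}\tilde n_\mu+\tfrac{1-s}{\sqrt{1+\rho_{us}}}n_{\mu,a}\,]^2$ in \eqref{eq:H}: the pure $\tilde n_\mu^2$ term, the pure $n_{\mu,a}^2$ term, and the cross $\tilde n_\mu n_{\mu,a}$ term. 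The main obstacle I anticipate is the bookkeeping of prefactors: one has to check that the bare $(r/\R)$-weighted sums appearing in the exponent combine with $\beta$ and with the definitions $\rho_s=\rho/s$, $\rho_{us}=\rho/(1-s)$, $\R=r^2+(1-r^2)/M$ so that the single cross multiplier $\tilde\lambda_2^{\mu,a}$ absorbs the entire $2\tfrac{s(1-s)}{\sqrt{(1+\rho_s)(1+\rho_{us})}}$ coefficient, and that the $a$-sums in the exponent are correctly restricted to the unlabelled ($M_2$) and labelled ($M_1$) blocks---setting to zero the multipliers attached to the complementary block---so that no spurious cross-block terms survive. Once this coefficient identification is verified, the two boundary checks $s=1$ and $s=0$ of the subsequent Remark follow immediately as consistency tests, collapsing the construction onto Proposition \ref{propMEP} and Proposition \ref{propMEP_UNSUP} respectively.
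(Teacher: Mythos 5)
Your proposal is correct and follows essentially the same route as the paper: the paper in fact omits the detailed derivation, stating only that it mirrors Propositions \ref{propMEP} and \ref{propMEP_UNSUP} and then recording the multiplier values $\lambda_1^{\mu,a}=\lambda_1^\mu=0$, $\lambda_2^\mu$, $\lambda_2^{\mu,a}$, $\tilde\lambda_2^{\mu,a}$, which is precisely the five-constraint Lagrangian extremization plus coefficient matching that you describe. Your explicit attention to the cross term $\tilde n_\mu n_{\mu,a}$ arising from expanding the square in the semi-supervised Hamiltonian, and to restricting the $a$-sums to the unlabelled block (setting the complementary multipliers to zero), supplies exactly the bookkeeping the paper leaves implicit.
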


The proof of the Proposition mirrors those provided for Propositions \ref{propMEP} and \ref{propMEP_UNSUP} hence we omit it. We only stress that by choosing 
\begin{align}
    &\lambda_1^{\mu,a}=\lambda_1^\mu=0, 
    & &\lambda_2^\mu= \dfrac{N \R s^2}{2r^2(1+\rho_s)}, \notag \\
    &\lambda_2^{\mu,a}= \dfrac{N \R (1-s)^2}{2r^2(1+\rho_{us})},
    & &\tilde \lambda_2^{\mu,a}= \dfrac{N \R s(1-s)}{2r^2 \sqrt{(1+\rho_s)(1+\rho_{us})}}.
\end{align}
we recover the expression of the Boltzmann-Gibbs probability distribution related to the semi-supervised setting as prescribed by the  Hamiltonian provided in eq.~\eqref{eq:H}.

\begin{corollary}
    If $s=0$ we recover the Boltzmann-Gibbs probability distribution of the unsupervised protocol  
        \begin{equation}
        \mathbb{P}_{N,M}(\s \vert \bm \eta, s=0)= \dfrac{1}{Z(\bm \eta)} \exp \left(
        \sum_{\mu=1}^K\sum_{a=1}^M \lambda_2^{\mu,a} \left(\dfrac{r}{RN}\right)^2 \sum_{i,j=1}^{N} \eta_i^{\mu,a} \eta_j^{\mu,a} \sigma_i \sigma_j\right);
    \end{equation}
    instead, by selecting $s=1$, the Boltzmann-Gibbs probability distribution  of the supervised protocol is returned
        \begin{equation}
        \mathbb{P}_{N,M}(\s \vert \bm \eta, s=1) = \dfrac{1}{Z(\bm \eta)} \exp \left(
        \sum_{\mu=1}^K \lambda_2^\mu \left(\dfrac{r}{\R NM}\right)^2 \sum_{i,j=1}^{N}\sum_{a,b=1}^{M} \eta_i^{\mu,a}\eta_j^{\mu,b} \sigma_i \sigma_j\right).
    \end{equation}
\end{corollary}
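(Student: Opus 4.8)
The plan is to establish the Corollary by substituting the explicit Lagrange multipliers into the semi-supervised Boltzmann--Gibbs distribution of the preceding Proposition and then specialising to the two endpoints $s=0$ and $s=1$. Since the assertion is structural --- it records which summands survive and with which dependence on $\bm\eta$ and $\bm\sigma$ --- the argument reduces to tracking the $s$-dependence of the prefactors and of the summation ranges.

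First I would isolate the quantities carrying the $s$-dependence. With $s=M_1/M$, $1-s=M_2/M$, $\rho_s=\rho/s$, $\rho_{us}=\rho/(1-s)$ and the selected multipliers
\begin{gather*}
\lambda_1^{\mu,a}=\lambda_1^\mu=0,\quad \lambda_2^\mu=\dfrac{N\R s^2}{2r^2(1+\rho_s)},\quad \lambda_2^{\mu,a}=\dfrac{N\R(1-s)^2}{2r^2(1+\rho_{us})},\\
\tilde\lambda_2^{\mu,a}=\dfrac{N\R s(1-s)}{2r^2\sqrt{(1+\rho_s)(1+\rho_{us})}},
\end{gather*}
the vanishing of all $\lambda_1$'s eliminates the one-body summands at once, so only three quadratic blocks can contribute: the supervised block $\propto\tilde n_\mu^2$ weighted by $\lambda_2^\mu$, the unsupervised block $\propto n_{\mu,a}^2$ weighted by $\lambda_2^{\mu,a}$, and the cross block $\propto\tilde n_\mu n_{\mu,a}$ weighted by $\tilde\lambda_2^{\mu,a}$.

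For $s\to0$ I would show that the supervised and cross blocks die. This is the only delicate point, because $\rho_s=\rho/s\to\infty$ makes each weight an indeterminate product of a vanishing numerator and a diverging denominator. Rewriting $s^2/(1+\rho_s)=s^3/(s+\rho)\to0$ and $s(1-s)/\sqrt{(1+\rho_s)(1+\rho_{us})}\sim s^{3/2}(1-s)/\sqrt{\rho(1+\rho_{us})}\to0$ shows that both $\lambda_2^\mu$ and $\tilde\lambda_2^{\mu,a}$ tend to $0$; equivalently, at $s=0$ one has $M_1=0$, so the sums $\sum_{a,b=1}^{M_1}$ and $\sum_{b=1}^{M_1}$ are empty and the same blocks vanish without any limit. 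The surviving unsupervised block has $\lambda_2^{\mu,a}\to N\R/(2r^2(1+\rho))$ finite and $a$ ranging over all $M=M_2$ examples, which is precisely the distribution of Proposition \ref{propMEP_UNSUP}.

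The case $s\to1$ is the mirror image and I would treat it symmetrically: here $(1-s)^2/(1+\rho_{us})=(1-s)^3/((1-s)+\rho)\to0$ and the cross weight again vanishes, so the unsupervised and cross blocks disappear, while $M_1\to M$ turns $\sum_{a,b=1}^{M_1}$ into $\sum_{a,b=1}^{M}$ and the prefactor $(r/(N\R M_1))^2$ into $(r/(N\R M))^2$, recovering the supervised distribution of Proposition \ref{propMEP} with $\lambda_2^\mu\to N\R/(2r^2(1+\rho))$ finite. The main obstacle is exactly this endpoint analysis --- verifying that the polynomial-in-$s$ numerators dominate the diverging $(1+\rho_s)$ or $(1+\rho_{us})$ factors --- together with the bookkeeping needed if one prefers to argue from the Hamiltonian \eqref{eq:H} directly, where the $0\cdot\infty$ combination $M_2^{-1}\sum_{a=1}^{M_2}$ at $s=1$ and the vanishing envelope $s/\sqrt{1+\rho_s}$ at $s=0$ must be handled consistently; everything else is routine substitution.
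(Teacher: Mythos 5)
Your proof is correct and follows the same route the paper intends: the paper states this Corollary without any proof, treating it as an immediate substitution of $s=0$ and $s=1$ into the semi-supervised distribution with the chosen multipliers, and your endpoint bookkeeping is exactly that verification. The one place where you add genuine value is the explicit resolution of the $0\cdot\infty$ indeterminacies (e.g. $s^2/(1+\rho_s)=s^3/(s+\rho)\to 0$, $s/\sqrt{1+\rho_s}=s^{3/2}/\sqrt{s+\rho}\to 0$, and the $1/M_1$ prefactors weighed against the empty sums $\sum_{b=1}^{M_1}$ at $s=0$), which the paper silently assumes.
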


% \begin{corollary}
% Whatever the value of $s \in [0,1]$, the {\em big data} limit $M \to \infty$ of the above Boltzmann-Gibbs probability distribution, related to semi-supervised Hebbian learning, approaches the one related to Hebbian storage (that is the Hopfield model's one):
% \begin{equation}
%     \lim_{M \to \infty} \mathbb{P}_{N,M}(\s \vert \bm \eta, s) = \mathbb{P}_{N}(\s \vert \bm \xi) = \frac{1}{Z(\bm \xi)}\exp \left( 
%     %\dfrac{1}{N} \sum_{i<j}  \xi_i^{1}\xi_j^{1}\sigma_i \sigma_j+
%     \dfrac{1}{N} \sum_{\mu=1}^K \sum_{i,j=1,1}^{N,N} \xi_i^{\mu}\xi_j^{\mu}\sigma_i \sigma_j\right).
% \end{equation}  
% \end{corollary}

% \begin{proof}
%     There is no $s$ value that prevents the usage of the CLT on the random variables $\dfrac{1}{M_1} \SOMMA{a=1}{M_1} \eta_i^{\mu,a}$ and $\dfrac{1}{M_2} \SOMMA{a=1}{M_2} \eta_i^{\mu,a} \eta_j^{\mu,a}$, with $i \neq j$, as already done in supervised and unsupervised sections,  and this suffices to reach the thesis. 
% \end{proof}

\subsection{Recovering the free energy of Hebbian storage in the big data limit}\label{Guerralike}

Plan of this Section is to prove that the whole free energy of these learning protocols approaches the free energy of the Hopfield model in the $M\to\infty$ limit (the so-called {\em big data} limit), thus, as a consequence all the related properties of that model are recovered \cite{Amit}. To do so, we rely upon Guerra's interpolation technique \cite{GuerraNN}. The idea is to write a generalized free energy that interpolates between the free energies of Hebbian learning and Hebbian storing and then prove that its derivative w.r.t. the interpolation parameter goes to zero as $M$ goes to infinity. This allows a straightforward application of the Fundamental Theorem of Calculus that makes the proof extremely simple.
\newline
As a sideline, we also perform the big data limit to the Boltzmann-Gibbs probability distributions stemming by these learning rules to prove that, in the $M\to\infty$ limit, these approach  the Boltzmann-Gibbs measure corresponding to standard Hopfield storage: this implies that these Cost functions for Hebbian learning  ultimately collapse to the Hopfield Hamiltonian for Hebbian storing, as intuitive in the big data limit.

\subsubsection{Supervised Hebbian learning}
Let us start showing how the free energy of Hebbian learning, within the supervised setting, approaches the free energy of the Hopfield model, in the big data limit. 
\newline
It is useful to write explicitly the next
\begin{definition} (Free energy)
The free energy of the (shallow) supervised Hebbian learning, related to the Cost function provided in Definition \ref{def:sup},  at finite sizes $N,M$ reads as
\begin{equation}\label{Free1Super}
\mathcal{A}_{N,M}(\alpha,\beta) = \dfrac{1}{N} \mathbb{E} \log \sum_{\s} \exp\left( \dfrac{\beta}{2N\R}\sum_{\mu=1}^K \sum_{i,j=1,1}^{N,N} \left( \dfrac{1}{M} \sum_{a=1}^{M} \eta_i^{\mu,a} \right)\left( \dfrac{1}{M} \sum_{b=1}^{M} \eta_j^{\mu,b} \right) \si \sigma_j\right).
\end{equation}
where $\mathbb{E}$ is the average w.r.t. $\bm \eta$, $\bm \eta = \bm \eta (\xi, r)$.
\end{definition}
\fd{We also remind that the free energy of the Hopfield model at finite size $N$ is 
\begin{equation}\label{RemindSmemo}
    \mathcal{A}_N(\alpha, \beta)= \dfrac{1}{N} \mathbb{E} \log \sums \exp\left( \dfrac{\beta}{2N} \sum_{\mu=1}^K \sum_{i,j=1}^N \xi_i^\mu \xi_j^\mu \si \sigma_j\right),
\end{equation}
where $\mathbb{E}$ is now the average w.r.t. $\bm \xi$ (see eq.~\eqref{frienergy}).}

We can now introduce the following interpolating free energy:
\begin{definition}{(Interpolating free energy)}\\   
Considering $t \in [0,1]$, the K patterns as introduced in Def. \ref{DefinitionUno}, their related noisy dataset $\{ \bm \eta^{\mu,a} \}_{a=1,...,M}^{\mu=1,...,K}$ as prescribed in Def. \ref{def:example}, the Cost functions of Hebbian storing (see eq.~\eqref{eq:H_Hop}) and supervised Hebbian learning (see eq.~\eqref{eq:H_sup}), the interpolating free energy is introduced as 
\begin{align}\label{interpo}
    \mathcal{A}_{N,M}(t\vert \alpha,\beta)= \dfrac{1}{N} \mathbb{E} \log \sum_{\s} &\exp \left(t\dfrac{\beta}{2N} \sum_{i,j=1}^{N} \sum_{\mu=1}^K \xi_i^\mu \xi_j^\mu \si \sigma_j\right. \notag \\
    &\left.+ (1-t)\dfrac{\beta}{2N\R}\sum_{\mu=1}^K \sum_{i,j=1}^{N} \left( \dfrac{1}{M} \sum_{a=1}^{M} \eta_i^{\mu,a} \right)\left( \dfrac{1}{M} \sum_{b=1}^{M} \eta_j^{\mu,b} \right) \si \sigma_j\right) \notag \\
    &=\dfrac{1}{N} \mathbb{E}\log Z_{N,M}(t \vert \bm \xi, \bm \eta) = \dfrac{1}{N} \mathbb{E}\log \sum_{\s} B(t \vert \s, \bm \xi, \bm \eta).
\end{align}
where $\R$ is provided in Def. \ref{def:unsup}, $Z_{N,M}(t \vert \bm \xi, \bm \eta)$ is the interpolating partition function, $B(t \vert \s, \bm \xi, \bm \eta)$ the related Boltzmann factor and $\mathbb{E}$ is the expectation w.r.t the distribution of $\bm \xi$ and $\bm \eta$.
\end{definition}
We stress that, for $t=0$, the above expression recovers the expression of free energy related to the supervised Hebbian learning (see  eq.~\eqref{Free1Super}) , whereas for  $t=1$ it reproduces the expression of the free energy of the Hopfield model (see  eq.~\eqref{RemindSmemo}) . 

\begin{remark}
    A generalized average follows from this generalized measure linked to the Boltzmann factor $B(t \vert \bm \sigma, \bm \xi, \bm \eta)$ as
\beq
	\omega_{t} (\cdot) \coloneqq  \frac{1}{ Z_{N, M}( t \vert \bm \xi, \boldsymbol{\eta})} \, \sum_{\boldsymbol \sigma}~ \cdot ~   B_{N, M} 
 ( t \vert \s, \bm \xi, \boldsymbol{\eta})
	\eeq
	and we preserve the following notation for the brackets
\beq
\langle \cdot   \rangle  \coloneqq \mathbb E [ \omega_{t} ( \cdot) ].
\eeq
\end{remark}
Note that $\omega_t(.)$ returns the Boltzmann average of the Hebbian storing when evaluated at $t=1$ while it returns the Boltzmann average of the supervised Hebbian learning when evaluated at $t=0$. 
\newline
With these premises, we can now state the main Theorem of this subsection as
\begin{theorem}
\label{prop:limM_sup}
In the asymptotic (\em{big data}) limit, $M \to +\infty$, the expression of the free energy related to the supervised Hebbian learning approaches that of the Hopfield model.
\end{theorem}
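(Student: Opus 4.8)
The plan is to exploit the interpolation already in place. By the Fundamental Theorem of Calculus,
\[
\mathcal{A}_{N,M}(\alpha,\beta)-\mathcal{A}_N(\alpha,\beta)=\mathcal{A}_{N,M}(t{=}0\vert\alpha,\beta)-\mathcal{A}_{N,M}(t{=}1\vert\alpha,\beta)=-\int_0^1 \partial_t\,\mathcal{A}_{N,M}(t\vert\alpha,\beta)\,dt ,
\]
so it is enough to show that the $t$-derivative of \eqref{interpo} vanishes uniformly in $t\in[0,1]$ as $M\to\infty$. Differentiating and recognising the two exponents as quadratic forms in the magnetizations (the storing term equals $\tfrac{\beta N}{2}\sum_\mu m_\mu^2$ and the supervised term equals $\tfrac{\beta N}{2\mathcal{R}}\sum_\mu \tilde n_\mu^2$), one finds
\[
\partial_t\,\mathcal{A}_{N,M}(t\vert\alpha,\beta)=\frac{\beta}{2}\sum_{\mu=1}^K\Big\langle m_\mu^2-\frac{\tilde n_\mu^2}{\mathcal{R}}\Big\rangle_t ,
\]
with $\langle\,\cdot\,\rangle_t=\mathbb{E}\,\omega_t(\cdot)$. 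The whole argument thus reduces to controlling the single-$\mu$ bracket in the big-data limit.

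First I would make the emergence of the Hopfield coupling explicit by writing $\tfrac1M\sum_{a=1}^M\eta_i^{\mu,a}=r\xi_i^\mu+\Delta_i^\mu$, where, conditionally on $\bm\xi$, the fluctuation $\Delta_i^\mu$ is centred with variance $(1-r^2)/M$ and the $\Delta_i^\mu$ are independent across $i$; note that $\mathcal{R}=r^2+(1-r^2)/M$ is exactly the second moment of $\tfrac1M\sum_a\eta_i^{\mu,a}$ and that $\mathcal{R}\to r^2$. Setting $\tilde n_\mu=r\,m_\mu+D_\mu$ with $D_\mu:=\tfrac1N\sum_{i=1}^N\Delta_i^\mu\sigma_i$ gives the identity
\[
m_\mu^2-\frac{\tilde n_\mu^2}{\mathcal{R}}=\Big(1-\frac{r^2}{\mathcal{R}}\Big)m_\mu^2-\frac{1}{\mathcal{R}}\big(2r\,m_\mu D_\mu+D_\mu^2\big).
\]
The prefactor of the first term is $1-r^2/\mathcal{R}=\tfrac{(1-r^2)/M}{\mathcal{R}}\to0$ and multiplies the bounded quantity $\langle m_\mu^2\rangle_t\le1$, so that contribution is harmless.

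The hard part will be the remaining terms, because $D_\mu$ is coupled to $\bm\sigma$ through the Gibbs weight $\omega_t$ and one cannot simply push the $\bm\eta$-average inside the bracket. The key trick I would use is that, since $|\sigma_i|=1$, the fluctuation admits the $\bm\sigma$-independent pointwise bound $|D_\mu|\le\tfrac1N\sum_{i=1}^N|\Delta_i^\mu|$, so that $\omega_t(D_\mu^2)\le\big(\tfrac1N\sum_i|\Delta_i^\mu|\big)^2$ and $|\omega_t(m_\mu D_\mu)|\le\tfrac1N\sum_i|\Delta_i^\mu|$ for every configuration. Only after this decoupling do I take $\mathbb{E}$: using independence of the $\Delta_i^\mu$ together with $\mathbb{E}|\Delta_i^\mu|\le\sqrt{(1-r^2)/M}$ yields $\langle D_\mu^2\rangle_t=O(1/M)$ and $|\langle m_\mu D_\mu\rangle_t|=O(1/\sqrt M)$. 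Hence each summand is $O(1/\sqrt M)$, the derivative tends to $0$ uniformly in $t$, the integral vanishes, and $\mathcal{A}_{N,M}(\alpha,\beta)\to\mathcal{A}_N(\alpha,\beta)$ (at fixed $N$, so that the $K=\alpha N$ terms are finitely many). I expect this decoupling step to be the crux: the pointwise bound $|D_\mu|\le\tfrac1N\sum_i|\Delta_i^\mu|$ is precisely what sidesteps any need for concentration of the full interacting Gibbs measure, reducing everything to an elementary second-moment estimate for the i.i.d. dataset noise.
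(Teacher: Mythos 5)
Your proof is correct, and its skeleton --- the interpolating free energy \eqref{interpo}, the computation of $\partial_t\mathcal{A}_{N,M}(t\vert\alpha,\beta)=\tfrac{\beta}{2}\sum_{\mu}\big(\langle m_\mu^2\rangle_t-\mathcal{R}^{-1}\langle\tilde n_\mu^2\rangle_t\big)$, and the Fundamental Theorem of Calculus --- coincides with the paper's. Where you genuinely differ is in how the vanishing of the derivative (the paper's Lemma \ref{lemma:dert1}) is established. The paper argues qualitatively: by the CLT the empirical mean $\tfrac{1}{M}\sum_a\eta_i^{\mu,a}$ concentrates as $M\to\infty$, so one ``replaces'' it by its limiting value inside the bracket, turning $\mathcal{R}^{-1}\tilde n_\mu^2$ into $m_\mu^2$; but this replacement is performed inside the Gibbs average $\omega_t$, whose Boltzmann weight depends on the very same realization of $\bm\eta$, and the paper neither justifies that exchange nor provides a rate (it then also needs dominated convergence to commute the $M\to\infty$ limit with the $t$-integral). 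Your argument addresses exactly these points: the decomposition $\tilde n_\mu=r\,m_\mu+D_\mu$, combined with the $\bm\sigma$-independent pointwise bound $|D_\mu|\le\tfrac1N\sum_i|\Delta_i^\mu|$, decouples the dataset noise from the interacting measure \emph{before} any average is taken, and elementary second-moment estimates then yield an explicit $O(K/\sqrt{M})$ bound, uniform in $t$ and valid at fixed $N$, which kills the integral with no appeal to dominated convergence. What the paper's route buys is brevity and a template reused verbatim for Corollary \ref{cor:S_SUP} and the dense analogues; what yours buys is rigour at the one step that actually needs it (the coupling of $D_\mu$ to $\bm\sigma$ through $\omega_t$, which you correctly identify as the crux) plus a quantitative convergence rate. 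Two caveats, neither fatal: your estimates require $r>0$ (so that $\mathcal{R}\ge r^2$ stays bounded away from zero), an assumption implicit in the paper as well, since for $r=0$ the examples carry no signal and the statement fails; and your constants are the correct ones --- the paper's assertions that $\mathbb{E}\big(\tfrac1M\sum_a\eta_i^{\mu,a}\,\big\vert\,\bm\xi\big)=\xi_i^\mu$ and that the replacement value is $\mathcal{R}\,\xi_i^\mu$ are slips for $r\xi_i^\mu$, which your decomposition silently fixes.
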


To prove the Theorem, we need to put beforehand the following 
\begin{lemma}
\label{lemma:dert1}
In the limit $M \to +\infty$, the derivative of the interpolating free energy \eqref{interpo} w.r.t. $t$ is null, namely 
\begin{align}
   \lim_{M \to \infty} \dfrac{d\mathcal{A}_{N,M}(t\vert \alpha,\beta)}{dt} = \lim_{M \to \infty}  \dfrac{\beta}{2}\sum_{\mu=1}^K \left( \langle m_\mu^2 \rangle - \dfrac{1}{\R}\langle \tilde n_\mu^2 \rangle \right)= 0.
\end{align}
\end{lemma}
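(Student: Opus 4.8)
The plan is to split the argument into two movements: an exact evaluation of the streaming term $d\mathcal{A}_{N,M}/dt$, followed by a big data limit on the resulting Gibbs average. For the first movement, I would rewrite the two blocks of the interpolating exponent in \eqref{interpo} as quadratic forms in the order parameters. Using $\frac1N\sum_i\xi_i^\mu\sigma_i=m_\mu$ and $\tilde n_\mu=\frac1N\sum_i\big(\frac1M\sum_a\eta_i^{\mu,a}\big)\sigma_i$, the Hopfield block equals $\frac{\beta N}{2}\sum_\mu m_\mu^2$ and the supervised block equals $\frac{\beta N}{2\R}\sum_\mu\tilde n_\mu^2$, so the exponent is $t\frac{\beta N}{2}\sum_\mu m_\mu^2+(1-t)\frac{\beta N}{2\R}\sum_\mu\tilde n_\mu^2$. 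Differentiating the logarithm of the partition function pulls down exactly the $t$-derivative of this exponent, reweighted by the generalized Boltzmann measure; since $\R$ is independent of $\boldsymbol\sigma$ it factors out of $\omega_t$, and recalling $\omega_t(\cdot)$ together with $\langle\cdot\rangle=\E[\omega_t(\cdot)]$, the factors of $N$ cancel and one obtains the first claimed identity
\[
\dfrac{d\mathcal{A}_{N,M}(t\vert\alpha,\beta)}{dt}=\dfrac{\beta}{2}\sum_{\mu=1}^K\left(\langle m_\mu^2\rangle-\dfrac1\R\langle\tilde n_\mu^2\rangle\right),
\]
which holds for every finite $M$. This step is the routine Guerra streaming computation.

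For the second movement, I would show that each summand vanishes as $M\to\infty$. The crucial structural feature is that $N$ is held \emph{fixed} throughout, so only finitely many empirical means enter. Writing $\bar\eta_i^\mu:=\frac1M\sum_{a=1}^M\eta_i^{\mu,a}$, the entries $\eta_i^{\mu,a}$ are i.i.d.\ bounded variables with conditional mean $\E[\eta_i^{\mu,a}\vert\xi_i^\mu]=r\xi_i^\mu$, read directly off Def.~\ref{def:example}; hence by the strong law of large numbers $\bar\eta_i^\mu\to r\xi_i^\mu$ almost surely, and since there are only $NK$ such variables the convergence is joint. Simultaneously $\R=r^2+(1-r^2)/M\to r^2$. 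Therefore $\tilde n_\mu=\frac1N\sum_i\bar\eta_i^\mu\sigma_i\to r\,m_\mu$ and $\frac1\R\tilde n_\mu^2\to m_\mu^2$ pointwise in $\boldsymbol\sigma$ and in the quenched disorder. In particular the supervised block of the exponent converges to the Hopfield block for \emph{every} $t$, so the interpolating Boltzmann factor becomes $t$-independent and $\omega_t(\cdot)$ collapses onto the Hopfield Gibbs average.

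The remaining, and only delicate, point is the interchange of $M\to\infty$ with both the Gibbs average $\omega_t$ (which itself depends on $M$ through $\boldsymbol\eta$) and the quenched expectation $\E$; this is where I expect the main obstacle to sit. I would resolve it by dominated convergence, using uniform boundedness: $|m_\mu|\le1$, $|\tilde n_\mu|\le1$ (because $|\bar\eta_i^\mu|\le1$), and $\R^{-1}\le r^{-2}$ for every $M\ge1$ at fixed $r\in(0,1]$. Since the integrand $m_\mu^2-\R^{-1}\tilde n_\mu^2$ converges pointwise to zero while remaining bounded by $1+r^{-2}$, and the normalized Gibbs weights are probabilities, the limit passes inside $\langle\cdot\rangle$. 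Each summand thus tends to zero, yielding
\[
\lim_{M\to\infty}\dfrac{\beta}{2}\sum_{\mu=1}^K\left(\langle m_\mu^2\rangle-\dfrac1\R\langle\tilde n_\mu^2\rangle\right)=0,
\]
which is precisely the assertion of the Lemma.
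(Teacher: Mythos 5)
Your proof is correct and follows the same two\mbox{-}step skeleton as the paper's: an exact streaming computation giving $\frac{d\mathcal{A}_{N,M}}{dt}=\frac{\beta}{2}\sum_{\mu}\left(\langle m_\mu^2\rangle-\mathcal{R}^{-1}\langle\tilde n_\mu^2\rangle\right)$ at every finite $M$, followed by concentration of the empirical means $\bar\eta_i^\mu=\frac1M\sum_a\eta_i^{\mu,a}$ in the big data limit. Where you differ is in how the second step is executed, and your version is the more solid one. The paper invokes the CLT to assert that $\bar\eta_i^\mu$ collapses onto a Dirac delta ``peaked in $\xi_i^\mu$'' and then replaces $\bar\eta_i^\mu$ by $\mathcal{R}\xi_i^\mu$; both statements are imprecise, since Def.~\ref{def:example} gives $\mathbb{E}[\eta_i^{\mu,a}\vert\xi_i^\mu]=r\xi_i^\mu$, so the correct a.s.\ limit is $\bar\eta_i^\mu\to r\,\xi_i^\mu$ (a law-of-large-numbers statement rather than a CLT one), and it is precisely the cancellation of the resulting factor $r^2$ against $\mathcal{R}\to r^2$ that yields $\mathcal{R}^{-1}\tilde n_\mu^2\to m_\mu^2$. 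You get this bookkeeping right. You also supply the step the paper leaves implicit: the interchange of $M\to\infty$ with both the Gibbs average $\omega_t$ (which itself depends on $M$ through $\bm\eta$) and the quenched expectation $\mathbb{E}$. Your justification is sound because $N$ is fixed, so there are only $2^N$ configurations and $NK$ empirical means; hence $\vert\omega_t(f_M)\vert\le\max_{\bm\sigma}\vert f_M(\bm\sigma)\vert\to 0$ a.s., and the uniform bound $\vert m_\mu^2-\mathcal{R}^{-1}\tilde n_\mu^2\vert\le 1+r^{-2}$ lets bounded convergence carry the limit through $\mathbb{E}$. One small point worth making explicit: your bound $\mathcal{R}^{-1}\le r^{-2}$, and indeed the Lemma itself, requires $r>0$ (at $r=0$ the examples carry no signal and the bracket does not vanish); the paper never states this restriction, but it is implicit in the whole construction.
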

\begin{proof}
    Let us start with the computation of the derivative: 
    \begin{align}
        d_t \mathcal{A}_{N,M}(t\vert \alpha,\beta) =& \dfrac{1}{N} \mathbb{E}\Bigg[\dfrac{1}{ Z_{N,M}(t \vert \s, \bm \xi, \bm \eta)}\sum_{\s} B(t \vert \s, \bm \xi, \bm \eta)\left( \dfrac{\beta}{2N} \sum_{i,j=1}^{N} \sum_{\mu=1}^K \xi_i^\mu \xi_j^\mu \si \sigma_j\right.\notag \\
        &\left.-\dfrac{\beta}{2N\R}\sum_{\mu=1}^K \sum_{i,j=1}^{N} \left( \dfrac{1}{M} \sum_{a=1}^{M} \eta_i^{\mu,a} \right)\left( \dfrac{1}{M} \sum_{b=1}^{M} \eta_j^{\mu,b} \right) \si \sigma_j\right)\Bigg] \notag \\
        =& \dfrac{\beta}{2 N} \sum_{\mu=1}^K \left(\langle m_\mu^2 \rangle - \dfrac{1}{\R} \langle \tilde n_\mu^2 \rangle  \right).
        \label{eq:deriv}
    \end{align}
    We apply the \red{Central Limit Theorem (CLT)} to the random variable $\dfrac{1}{M} \SOMMA{a=1}{M} \eta_i^{\mu,a}$ in the $M \to +\infty$ limit. With some trivial manipulations we can state that 
\begin{align}
    \mathbb{E} \left(\dfrac{1}{M}\sum_{a=1}^M \eta_i^{\mu,a} \Big \vert \bm \xi\right)&= \xi^\mu_i \\
    \mathbb{E}_{\bm \xi} \left(\dfrac{1}{M}\sum_{a=1}^M \eta_i^{\mu,a}\right)^2 &= r^2+ \dfrac{1-r^2}{M} = \R.
\end{align}
Therefore, we can express $\dfrac{1}{M}\SOMMA{a=1}{M} \eta_i^{\mu,a}$ as a Gaussian variable with null average and whose variance reads $\R/M$, such that, as $M \to \infty$, its probability distribution becomes a Dirac's delta peaked in $\xi_i^\mu$. Hence, in eq.~\eqref{eq:deriv}, in the $M \to +\infty$ limit, we can replace $\dfrac{1}{M} \sum_{a=1}^M \eta_i^{\mu,a}$ with $\R \xi_i^\mu$, so we get the definition of $m_\mu$ (see eq.~\eqref{eq:mmu}) and, thus, we reach the thesis. 
%Finally, we can trace back $\tilde n_\mu$ to $m_\mu$ so to reach the thesis. 
    \end{proof}
    Now we are ready to prove the Theorem.
\begin{proof}(of Theorem \ref{prop:limM_sup})
We apply the fundamental theorem of calculus over the interpolating variable $t$ while $M \to \infty$: 
\begin{align}
    \mathcal{A}_{N,M}(t=1\vert\alpha,\beta)= \mathcal{A}_{N,M}(t=0\vert \alpha,\beta) + \int_0^1 \dfrac{d \mathcal{A}_{N,M}(t\vert \alpha,\beta)}{dt}\vert_{t=s} ds,
\end{align}
where the commutativity of the $M\to\infty$ limit and the integral over the interpolation parameter is justified by dominated convergence. Since for Lemma \ref{lemma:dert1} we have that, in $M \to \infty$ limit, the derivative of the interpolating free energy w.r.t. $t$ becomes zero, this implies 
\begin{align}
    \mathcal{A}_{N}(t=1\vert\alpha,\beta) = \mathcal{A}_{N}(t=0\vert\alpha,\beta),
\end{align}
so the equivalence is proved. 
\end{proof}
Note that this result holds at finite network sizes $N$, hence the reason for the subscript $N$ in $\mathcal{A}_{N}$.

\begin{corollary}
\label{cor:S_SUP}
In the asymptotic limit of {\em big data}, namely for $M \to +\infty$,  the Boltzmann-Gibbs  probability distribution $\mathbb{P}_{N,M}(\s \vert \bm \eta)$,  related to the supervised Hebbian learning Cost function (provided by eq.~\eqref{eq:propMEP_sup}), approaches the Boltzmann-Gibbs  distribution of the Hopfield model, namely
\begin{equation}
    \lim_{M \to \infty} \mathbb{P}_{N,M}(\s \vert \bm \eta) = \mathbb{P}_{N}(\s \vert \bm \xi) = \frac{1}{Z(\bm \xi)}\exp \left( 
    %\dfrac{1}{N}\sum_{i<j}^{N,N} \xi_i^{1}   \xi_j^{1} \sigma_i \sigma_j+
    \dfrac{\beta}{2N}\sum_{\mu=1}^K \sum_{i, j=1}^{N} \xi_i^{\mu}   \xi_j^{\mu} \sigma_i \sigma_j\right).
    \label{eq:Hebb_stor}
\end{equation} 
%where $\bm \xi$ are the $K$ patterns. 
%and $\bm \xi^\mu, \ \mu=2, \hdots, K$ are i.i.d. Gaussian ones.
\end{corollary}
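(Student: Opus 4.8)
The plan is to bypass the free-energy machinery of Theorem \ref{prop:limM_sup} and argue directly at the level of the Boltzmann factor, exploiting the fact that the configuration space $\{-1,+1\}^N$ is finite because only $M$ diverges while $N$ and $K=\alpha N$ are kept fixed. First I would take the explicit Boltzmann--Gibbs form in eq.~\eqref{eq:propMEP_sup}, evaluated at the Lagrange multipliers selected in Proposition \ref{propMEP}, so that its exponent reads
\begin{equation}
\frac{\beta}{2N\R}\sum_{\mu=1}^K\sum_{i,j=1}^N \left(\frac{1}{M}\sum_{a=1}^M \eta_i^{\mu,a}\right)\left(\frac{1}{M}\sum_{b=1}^M \eta_j^{\mu,b}\right)\si\sj .
\end{equation}
The whole problem then reduces to controlling the empirical means $\frac1M\sum_{a}\eta_i^{\mu,a}$ together with the normalization $\R=r^2+(1-r^2)/M$.

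Next I would invoke the very same concentration already used in Lemma \ref{lemma:dert1}: by the Strong Law of Large Numbers applied to the variables $\eta_i^{\mu,a}$, which are i.i.d.\ over $a$ under Def.~\ref{def:example}, one has almost surely (conditionally on $\bm\xi$) that $\frac1M\sum_{a=1}^M\eta_i^{\mu,a}\to r\,\xi_i^\mu$, while at the same time $\R\to r^2$. Substituting these two limits and using that the double sum over $(i,j)$ and the sum over $\mu$ contain only finitely many terms (so no interchange of limit and summation is at stake), the coefficient of each $\si\sj$ converges to $\tfrac1{r^2}(r\xi_i^\mu)(r\xi_j^\mu)=\xi_i^\mu\xi_j^\mu$, and hence the exponent tends to $\frac{\beta}{2N}\sum_{\mu}\sum_{i,j}\xi_i^\mu\xi_j^\mu\si\sj$, precisely the Hopfield exponent of eq.~\eqref{eq:Hebb_stor}.

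The last step is to promote this pointwise convergence of the Boltzmann factors to convergence of the normalized probabilities. Here I would appeal to the continuous-mapping argument available on a finite state space: the map $\{B(\s)\}_{\s}\mapsto B(\s)/\sum_{\s'}B(\s')$ is continuous because the denominator is a finite sum of strictly positive terms, hence bounded away from zero, so the limit of the exponents transfers directly to the limit of the measures. I expect the only genuinely delicate point to be the bookkeeping of the stochastic mode of convergence: the object on the left is a random measure depending on the quenched dataset $\bm\eta$, and what the argument actually delivers is almost-sure convergence to the deterministic Hopfield measure attached to the underlying patterns $\bm\xi$ --- a statement that then immediately implies the weaker convergence in probability and in expectation. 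No uniformity in $\s$ is needed, since the set of configurations is finite, which is exactly what makes the finite-$N$ version of the claim so clean.
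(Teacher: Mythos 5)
Your proposal is correct and follows essentially the same route as the paper's own proof: both rest on the concentration of the block averages $\frac{1}{M}\sum_{a=1}^M \eta_i^{\mu,a}$ in the $M\to\infty$ limit (the paper invokes the CLT, you the strong law of large numbers, but the substance --- a Dirac-delta limit for these empirical means, exactly as in Lemma \ref{lemma:dert1} --- is identical), after which the exponent collapses onto the Hopfield one. If anything, your bookkeeping is cleaner than the paper's: you correctly record the limit as $r\,\xi_i^\mu$ with the compensating normalization $\R \to r^2$ (the paper loosely states the peak sits at $\xi_i^\mu$), and you make explicit the final step --- continuity of the normalization map $\{B(\bm\sigma)\}\mapsto B(\bm\sigma)/\sum_{\bm\sigma'}B(\bm\sigma')$ on the finite configuration space --- which the paper leaves implicit.
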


\begin{proof}
Focusing on $\mathbb{P}_{N,M}(\s \vert \bm \eta)$ in  eq.~\eqref{eq:propMEP_sup} and, in particular, by applying the \red{CLT} to the random variable $\dfrac{1}{M}\SOMMA{a=1}{M} \eta_i^{\mu,a}$ in the $M \to +\infty$ limit, as already done in Lemma \ref{lemma:dert1}, its probability distribution becomes a Dirac's delta peaked in $\xi^\mu_i$ and, thus, we reach the thesis. 
\end{proof}
We highlight the plain agreement, in the \em{big data} limit, among outcomes of Theorem \ref{prop:limM_sup} and those of Corollary \ref{cor:S_SUP} because, as the Cost function of Hebbian learning collapses on the Cost function of the Hopfield model for pattern storage (Corollary \ref{cor:S_SUP}), by no surprise their related free energies become equal (Theorem \ref{prop:limM_sup}), yet we stress that the methods underlying the proofs of these two results are independent, the former based on the Fundamental Theorem of Calculus, the latter on concentration of measure induced by CLT arguments.

\subsubsection{Unsupervised Hebbian learning}
Purpose of this Section is to prove that also the free energy (and, in turn, the Boltzmann-Gibbs probability distribution too) related to the Hebbian unsupervised protocol, in the \em{big data} limit, approaches that of the corresponding Hebbian storage by the Hopfield model. Since the strategy we aim to use is quite the same as the previous one, we report only the main definitions and results. 
\begin{definition} (Free energy)
The free energy of the (shallow) unsupervised Hebbian learning, related to the Cost function provided in Definition \ref{def:unsup},  at finite sizes $N,M$ reads as as\begin{equation}\label{freeunsup}
\mathcal{A}_{N,M}(\alpha,\beta) =  \dfrac{1}{N} \mathbb{E} \log \sum_{\s} \exp \left(\dfrac{\beta}{2N\R M}\sum_{\mu=1}^K \sum_{i,j=1}^{N} \sum_{a=1}^{M} \eta_i^{\mu,a} \eta_j^{\mu,a} \si \sigma_j\right),   
\end{equation}
where $\mathbb{E}$ is the average w.r.t. $\bm \eta$. 
We also remind that the free energy of the Hopfield model at finite size $N$ is 
\begin{equation}\label{Doppiona}
    \mathcal{A}_N(\alpha, \beta)= \dfrac{1}{N} \mathbb{E} \log \sums \exp\left( \dfrac{\beta}{2N} \sum_{\mu=1}^K \sum_{i,j=1}^N \xi_i^\mu \xi_j^\mu \si \sigma_j\right),
\end{equation}
where $\mathbb{E}$ is now the average w.r.t. $\bm \xi$.
\end{definition}
We can thus introduce the interpolating free energy as stated by the next
\begin{definition}{(Interpolating free energy)}\\
Considering $t \in [0,1]$, the K patterns as introduced in Def. \ref{DefinitionUno}, their related noisy dataset $\{ \bm \eta^{\mu,a} \}_{a=1,...,M}^{\mu=1,...,K}$ generated as prescribed in Def. \ref{def:example}, the Cost functions of Hebbian storing (see eq.~\eqref{eq:H_Hop}) and of unsupervised Hebbian learning (see eq.~\eqref{eq:H_unsup}), the interpolating free energy is introduced as
\begin{align} 
    \mathcal{A}_{N,M}(t\vert \alpha,\beta) = \dfrac{1}{N} \mathbb{E} \log \sum_{\s} &\exp \left(t\dfrac{\beta}{2N} \sum_{i,j=1}^{N} \sum_{\mu=1}^K \xi_i^\mu \xi_j^\mu \si \sigma_j\right. \notag \\
    &\left.+ (1-t)\dfrac{\beta}{2N\R M}\sum_{\mu=1}^K \sum_{i,j=1}^{N} \sum_{a=1}^{M} \eta_i^{\mu,a} \eta_j^{\mu,a} \si \sigma_j\right) \notag \\
    &\hspace{-2.2cm}=\dfrac{1}{N} \mathbb{E}\log Z_{N,M}(t \vert \bm \xi, \bm \eta)
    %= \dfrac{1}{N} \mathbb{E}\log \sum_{\s} B(t \vert \s, \bm \xi, \bm \eta).
\end{align}
where $\R$ is provided in Def. \ref{def:unsup} and $Z_{N,M}(t \vert  \bm \xi, \bm \eta)$ is the interpolating partition function. 
%and $B(t \vert \s, \bm \xi, \bm \eta)$ the Boltzmann factor. 
\end{definition}
We stress that, for $t=0$, the above interpolating free energy reduces to its expression related to unsupervised Hebbian learning  (see  eq.~\eqref{freeunsup}), whereas for  $t=1$ it approaches the expression related to Hebbian storing, namely that of the Hopfield model (see eq.~\eqref{Doppiona}).

\begin{theorem}
\label{prop:limM_UNSUP}
In the asymptotic (\em{big data}) limit, i.e., $M \to +\infty$, the expression of the free energy related to the unsupervised Hebbian learning approaches that of the Hopfield model.
\end{theorem}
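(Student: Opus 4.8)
The plan is to follow verbatim the strategy already deployed for the supervised case (Theorem~\ref{prop:limM_sup}): I would first prove the auxiliary Lemma asserting that $\lim_{M\to\infty} d_t\mathcal{A}_{N,M}(t\vert\alpha,\beta)=0$, and then close the argument with the Fundamental Theorem of Calculus, writing $\mathcal{A}_{N,M}(1\vert\alpha,\beta)=\mathcal{A}_{N,M}(0\vert\alpha,\beta)+\int_0^1 d_t\mathcal{A}_{N,M}(s\vert\alpha,\beta)\,ds$ and letting $M\to\infty$. The exchange of the $M\to\infty$ limit with the $s$-integration is licensed by dominated convergence, since all neural and example variables are bounded by $1$, hence the integrand is uniformly bounded on $[0,1]$.

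First I would compute the streaming derivative. Differentiating the interpolating free energy and recognising the two energy densities as quadratic forms in the order parameters (using $\sum_{i,j}\xi_i^\mu\xi_j^\mu\sigma_i\sigma_j=N^2 m_\mu^2$ and $\sum_{i,j}\eta_i^{\mu,a}\eta_j^{\mu,a}\sigma_i\sigma_j=N^2 n_{\mu,a}^2$) yields, exactly mirroring eq.~\eqref{eq:deriv},
\begin{align}
d_t\mathcal{A}_{N,M}(t\vert\alpha,\beta)=\frac{\beta}{2}\sum_{\mu=1}^K\left(\langle m_\mu^2\rangle-\frac{1}{\R M}\sum_{a=1}^M\langle n_{\mu,a}^2\rangle\right).
\end{align}
The key observation is that $\frac{1}{\R M}\sum_a n_{\mu,a}^2=\frac{1}{\R N^2}\sum_{i,j}\big(\frac{1}{M}\sum_a\eta_i^{\mu,a}\eta_j^{\mu,a}\big)\sigma_i\sigma_j$, so the relevant random object is now the empirical \emph{covariance} $\frac{1}{M}\sum_a\eta_i^{\mu,a}\eta_j^{\mu,a}$, rather than the empirical mean $\frac{1}{M}\sum_a\eta_i^{\mu,a}$ that governed the supervised proof. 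For fixed $i\neq j$ the summands $\eta_i^{\mu,a}\eta_j^{\mu,a}$ are i.i.d. in $a$ with conditional expectation $r^2\xi_i^\mu\xi_j^\mu$, so by the Law of Large Numbers $\frac{1}{M}\sum_a\eta_i^{\mu,a}\eta_j^{\mu,a}\to r^2\xi_i^\mu\xi_j^\mu$ while simultaneously $\R\to r^2$. Since $N$ is finite the number of pairs is finite, the concentration is uniform, the whole interpolating Boltzmann factor collapses onto the Hopfield one, and hence $\frac{1}{\R M}\sum_a n_{\mu,a}^2\to m_\mu^2$, forcing each summand of the derivative to vanish.

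The hard part — and the point where the unsupervised argument genuinely diverges from the supervised one — is the treatment of the self-interaction (diagonal, $i=j$) terms. In the supervised setting the order parameter involves the squared empirical mean $(\frac{1}{M}\sum_a\eta_i^{\mu,a})^2$, whose diagonal limit is $r^2$, so that dividing by $\R\to r^2$ reproduces the Hopfield diagonal exactly. In the unsupervised case the diagonal of $n_{\mu,a}^2$ is instead $\frac{1}{M}\sum_a(\eta_i^{\mu,a})^2=1$, which is \emph{not} suppressed by $r^2$; retaining it would leave a residual $\sigma$-independent shift $\frac{\beta}{2N}\sum_\mu(1-\R^{-1})\to\frac{\alpha\beta}{2}(1-r^{-2})$ in the derivative. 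I would dispose of this precisely as the model is defined: the Hebbian coupling in Def.~\ref{def:unsup} runs over $i<j$, so self-couplings are absent, the diagonal never enters, and at most it would contribute an additive constant that factors out of the partition function and is therefore immaterial. Restricting every double sum to $i\neq j$ from the outset, the shift disappears and the derivative vanishes identically, which is the Lemma; the Fundamental Theorem of Calculus then gives $\mathcal{A}_N(1\vert\alpha,\beta)=\mathcal{A}_N(0\vert\alpha,\beta)$ at finite $N$. The parallel collapse of the Boltzmann--Gibbs measure (the unsupervised analogue of Corollary~\ref{cor:S_SUP}) follows at once by applying the same concentration argument directly to eq.~\eqref{eq:probMEP_UNSUP}.
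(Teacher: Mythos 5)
Your proposal is correct and takes essentially the same route as the paper: the same interpolating free energy, the same key lemma that $d_t\mathcal{A}_{N,M}(t\vert\alpha,\beta)\to 0$ obtained by concentration (LLN/CLT) of the empirical covariance $\frac{1}{M}\sum_{a=1}^M\eta_i^{\mu,a}\eta_j^{\mu,a}\to r^2\xi_i^\mu\xi_j^\mu$ for $i\neq j$ together with $\R\to r^2$, and the same closure via the Fundamental Theorem of Calculus with dominated convergence. Your explicit treatment of the diagonal $i=j$ terms, with the quantified residual $\frac{\alpha\beta}{2}\left(1-r^{-2}\right)$ that would otherwise survive, is a careful spelling-out of a point the paper handles only with the passing remark that ``we consider only different indices for $i$ and $j$'' (consistent with the $i<j$ sums of Def.~\ref{def:unsup}), so it is a refinement of, not a departure from, the paper's argument.
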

Mirroring the previous subsection, the proof of Theorem \ref{prop:limM_UNSUP} is trivial provided the next
\begin{lemma}
\label{lemma:dert_UNSUP}
In the limit $M \to +\infty$, the derivative of the interpolating free energy w.r.t. $t$ is null, namely 
\begin{align}
    \dfrac{d\mathcal{A}_{N,M}(t\vert \alpha, \beta)}{dt} = \dfrac{\beta}{2}\sum_{\mu=1}^K \left( \langle m_\mu^2 \rangle - \dfrac{1}{\R M} \sum_{a=1}^M\langle n_{\mu,a}^2 \rangle \right) \xrightarrow{M\to \infty} 0.
\end{align}
\end{lemma}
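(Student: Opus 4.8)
The plan is to mirror Lemma~\ref{lemma:dert1} almost verbatim, adapting only the concentration step to the unsupervised synaptic kernel. \emph{First}, I would differentiate the interpolating free energy under the expectation: differentiating $\log Z_{N,M}(t\vert\bm\xi,\bm\eta)$ brings down the $\omega_t$-average of the difference of the two energy densities. Recognising that $\frac{1}{N^2}\sum_{i,j}\xi_i^\mu\xi_j^\mu\sigma_i\sigma_j=m_\mu^2$ and $\frac{1}{N^2}\sum_{i,j}\eta_i^{\mu,a}\eta_j^{\mu,a}\sigma_i\sigma_j=n_{\mu,a}^2$, the two energy terms become $\frac{\beta N}{2}\sum_\mu m_\mu^2$ and $\frac{\beta N}{2\R M}\sum_\mu\sum_a n_{\mu,a}^2$, so after dividing by $N$ the derivative collapses to
\[
\frac{d\mathcal{A}_{N,M}(t\vert\alpha,\beta)}{dt}=\frac{\beta}{2}\sum_{\mu=1}^{K}\left(\langle m_\mu^2\rangle-\frac{1}{\R M}\sum_{a=1}^{M}\langle n_{\mu,a}^2\rangle\right),
\]
which is exactly the quantity in the statement; this step is purely algebraic and uses no hypothesis on $M$.

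\emph{Second}, I would run the big-data concentration. The object that now plays the role of the single-site empirical mean $\frac1M\sum_a\eta_i^{\mu,a}$ of the supervised proof is the \emph{pairwise} empirical average $\frac1M\sum_a\eta_i^{\mu,a}\eta_j^{\mu,a}$. Using the conditional independence of distinct examples together with $\mathbb{E}[\eta_i^{\mu,a}\mid\bm\xi]=r\xi_i^\mu$ from Def.~\ref{def:example}, one checks that for $i\neq j$ this average has conditional mean $r^2\xi_i^\mu\xi_j^\mu$ and variance $O(1/M)$, so by the law of large numbers it concentrates on $r^2\xi_i^\mu\xi_j^\mu$ as $M\to\infty$. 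Since simultaneously $\R=r^2+(1-r^2)/M\to r^2$, the prefactor $\frac{1}{\R M}\sum_a$ reconstructs, off the diagonal, exactly the off-diagonal part of $m_\mu^2$; the overlaps being bounded by unity, dominated convergence lets me commute $\lim_M$ with $\omega_t$ and $\mathbb{E}$, so the off-diagonal contributions to $\langle m_\mu^2\rangle-\frac{1}{\R M}\sum_a\langle n_{\mu,a}^2\rangle$ cancel.

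\emph{The main obstacle} I expect is the diagonal $i=j$ contribution, which is precisely where the unsupervised case parts company with the supervised one. In the supervised kernel the self-term $(\frac1M\sum_a\eta_i^{\mu,a})^2$ relaxes smoothly to $r^2$ and cancels cleanly against $\R\to r^2$; here, instead, the self-overlap $\frac1M\sum_a(\eta_i^{\mu,a})^2\equiv 1$ is pinned to unity for every $M$, which after division by $\R\to r^2$ leaves a residual $\frac{1}{N}\frac{1-r^2}{r^2}$ per pattern. Summed over the $K=\alpha N$ patterns this residual is $O(\alpha)$, hence it does \emph{not} vanish in the thermodynamic limit, so the cancellation is exact only once the diagonal is removed from both kernels — that is, with the $i<j$ restriction already built into the Hamiltonians of Def.~\ref{def:sup} and Def.~\ref{def:unsup}. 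With that convention both quadratic forms converge to the same off-diagonal limit $m_\mu^2-1/N$ and the derivative is genuinely null at finite $N$. I would therefore carry out the entire computation with the self-interaction excluded (or subtract it explicitly); the remaining bookkeeping — the law-of-large-numbers estimate on $\frac1M\sum_a\eta_i^{\mu,a}\eta_j^{\mu,a}$ and the uniform bound needed for dominated convergence — is routine and parallels Lemma~\ref{lemma:dert1}.
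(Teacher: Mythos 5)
Your proof is correct and follows the same route as the paper's own: an algebraic differentiation of the interpolant, followed by concentration of the pairwise empirical average $\frac{1}{M}\sum_a \eta_i^{\mu,a}\eta_j^{\mu,a}$ (for $i\neq j$) onto $r^2\xi_i^{\mu}\xi_j^{\mu}$, which cancels against $\langle m_\mu^2\rangle$ once divided by $\R\to r^2$.

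What you flag as the main obstacle — the diagonal — is a genuine point, and it is precisely what the paper's proof passes over. The displayed free energies \eqref{freeunsup}, \eqref{Doppiona} and the interpolant are written with unrestricted $(i,j)$ sums, yet the paper's concentration argument opens with ``since we consider only different indices for $i$ and $j$'', i.e.\ it silently reverts to the $i<j$ kernels of Def.~\ref{def:sup} and Def.~\ref{def:unsup}. As you observe, with the diagonal kept, the self-terms contribute $\frac{1}{N}$ to $\langle m_\mu^2\rangle$ but $\frac{1}{\R N}\to\frac{1}{r^2N}$ to $\frac{1}{\R M}\sum_a\langle n_{\mu,a}^2\rangle$, so the derivative tends to
\begin{equation*}
-\frac{\beta}{2}\,\frac{K(1-r^2)}{r^2 N}\;=\;-\frac{\beta\alpha(1-r^2)}{2r^2}\;\neq\;0 \qquad (K=\alpha N),
\end{equation*}
and the lemma as literally displayed would fail for any $r<1$. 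Your remedy — carrying out the whole computation with the self-interaction excluded, consistently with the $i<j$ convention of the Hamiltonians, so that both quadratic forms converge to $m_\mu^2-\frac{1}{N}$ at finite $N$ — is the correct reading, and it makes the statement hold exactly rather than up to an $O(\alpha)$ residual. The remaining ingredients you list (conditional independence giving mean $r^2\xi_i^\mu\xi_j^\mu$ and variance $O(1/M)$, boundedness of the overlaps, dominated convergence to pass the limit through $\omega_t$ and $\mathbb{E}$) are sound and match the paper's intent. In short: same method as the paper, but your handling of the diagonal is what actually closes the argument.
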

\begin{proof}
As already done for the supervised protocol, we can apply the Fundamental Theorem of Calculus on the interpolating free energy and reach the thesis of Theorem \ref{prop:limM_UNSUP}.
\newline
We only highlight the application of CLT in this case: if we focus on the random variables $X_{i,j}^{\mu,a}:=\eta_i^{\mu,a} \eta_j^{\mu,a}$, since we consider only different indices for $i$ and $j$, we can say that $\eta_i^{\mu,a}$ and $\eta_j^{\mu,a}$ are i.i.d., therefore, for the CLT, as $M \to + \infty$,
\begin{align}
    \dfrac{1}{M} \SOMMA{a=1}{M} X_{i,j}^{\mu,a} \xrightarrow{\quad d \quad} \bm \bar{X}=\bar{\eta}_i^{\mu} \bar{\eta}_j^\mu% = \xi_i^\mu \xi_j^\mu
\end{align}
where $\xrightarrow{\quad d \quad}$ expresses the convergence in distribution and $\bar{\eta}$ is the empirical average of all the $\eta_i^{\mu,a}$, which approaches the pattern when $M \to + \infty$. In this way, if we consider $\dfrac{1}{M} \sum_{a=1}^M n_{\mu,a}^2$ and the definition of $n_{\mu,a}$ (see eq.~\eqref{PdO-Uns}), in the $M\to\infty$ limit we can replace $\dfrac{1}{M} \sum_{a=1}^M \eta_i^{\mu,a} \eta_j^{\mu,a}$ with $\R \xi_i^\mu \xi_j^\mu$ so to get $m_\mu^2$ (see eq.~\eqref{eq:mmu}) and, thus, we reach the thesis.
\end{proof}

\begin{corollary}
\label{cor:MEP_UNSUP}
In the asymptotic limit of {\em big data}, namely for $M \to +\infty$,  the Boltzmann-Gibbs  probability distribution $\mathbb{P}_{N,M}(\s \vert \bm \eta)$, related to the unsupervised Hebbian learning Cost function provided in Def. \ref{def:unsup}, approaches the Boltzmann-Gibbs distribution of the Hopfield model, namely
\begin{equation}
    \lim_{M \to \infty} \mathbb{P}_{N,M}(\s \vert \bm \eta) = \mathbb{P}_{N}(\s \vert \bm \xi) =  \frac{1}{Z(\bm \xi)}\exp \left( 
    %\dfrac{1}{N} \sum_{i<j}  \xi_i^{1}\xi_j^{1}\sigma_i \sigma_j+
    \dfrac{\beta}{2N} \sum_{\mu=1}^K \sum_{i,j=1}^{N}  \xi_i^{\mu}\xi_j^{\mu}\sigma_i \sigma_j\right).
\end{equation}  
%where $Z(\bm \xi)$ is the partition function and $\bm \xi^\mu$, for $\mu =1,...,K$, are the i.i.d. Rademacher variables accounting for the patterns. 
\end{corollary}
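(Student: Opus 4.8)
The plan is to mirror the strategy used for Corollary \ref{cor:S_SUP}, working directly with the explicit Boltzmann--Gibbs form of $\mathbb{P}_{N,M}(\s \vert \bm \eta)$ rather than with the free energy, and replacing the empirical correlations by their almost-sure limits. First I would fix the Lagrange multipliers to the values singled out in Proposition \ref{propMEP_UNSUP}, so that the exponent reduces to that of the unsupervised Cost function of Def.~\ref{def:unsup} (equivalently, the integrand of eq.~\eqref{freeunsup}), and then rewrite it by isolating the empirical two-point correlator
\begin{equation*}
C_{ij}^{\mu} := \dfrac{1}{\R M} \sum_{a=1}^{M} \eta_i^{\mu,a} \eta_j^{\mu,a},
\end{equation*}
so that the exponent becomes $\tfrac{\beta}{2N}\sum_{\mu=1}^K \sum_{i,j=1}^N C_{ij}^{\mu}\, \si \sigma_j$, i.e.\ a finite sum of terms each governed by a single sample average.

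The core step is to control the $M \to \infty$ behaviour of $C_{ij}^{\mu}$, exactly as in Lemma~\ref{lemma:dert_UNSUP}. For off-diagonal indices $i \neq j$ the entries $\eta_i^{\mu,a}$ and $\eta_j^{\mu,a}$ are conditionally independent given $\bm \xi$ (they sit on distinct coordinates), with conditional means $r\xi_i^\mu$ and $r\xi_j^\mu$; hence the strong law of large numbers gives $\tfrac{1}{M}\sum_{a=1}^M \eta_i^{\mu,a}\eta_j^{\mu,a} \to r^2 \xi_i^\mu \xi_j^\mu$ almost surely. Since $\R = r^2 + (1-r^2)/M \to r^2$, the normalization exactly compensates this $r^2$ and yields $C_{ij}^{\mu} \to \xi_i^\mu \xi_j^\mu$. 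The diagonal terms $i=j$ must be handled separately: there $(\eta_i^{\mu,a})^2 = 1$, so $C_{ii}^{\mu} = 1/\R$ is a $\bm\sigma$-independent constant that contributes only an overall multiplicative factor, which cancels against the partition function $Z(\bm \eta)$ (and is matched by the analogous self-interaction constant in the Hopfield measure).

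Because $N$ is held fixed, the exponent is a finite sum of continuous functions of finitely many sample averages, so the almost-sure convergence of each $C_{ij}^{\mu}$ propagates --- by continuity of the exponential and of the finite sum defining $Z(\bm\eta)$ --- to the almost-sure convergence of $\mathbb{P}_{N,M}(\s\vert\bm\eta)$ to the Hopfield Boltzmann--Gibbs measure $\tfrac{1}{Z(\bm\xi)}\exp\big(\tfrac{\beta}{2N}\sum_\mu\sum_{i,j}\xi_i^\mu\xi_j^\mu\,\si\sigma_j\big)$, which is the thesis. The main obstacle, relative to the supervised case, is that the product $\eta_i^{\mu,a}\eta_j^{\mu,a}$ sits under a single sum over $a$ and therefore does \emph{not} factor into a product of two independent empirical means; the correct object is the sample mean of the product variable $\eta_i^{\mu,a}\eta_j^{\mu,a}$, and the delicate point is to keep the diagonal contribution $i=j$ --- where conditional independence fails and the naive limit $r^2\xi_i^\mu\xi_j^\mu$ would be incorrect --- cleanly separated as a harmless $\bm\sigma$-independent constant.
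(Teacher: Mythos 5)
Your proof is correct and takes essentially the paper's own route: the paper states this corollary without an explicit proof, but its intended argument (cf. the proof of Corollary \ref{cor:S_SUP} and the CLT step in Lemma \ref{lemma:dert_UNSUP}) is exactly your concentration of the empirical correlator $\tfrac{1}{\R M}\sum_{a=1}^{M}\eta_i^{\mu,a}\eta_j^{\mu,a}$ onto $\xi_i^\mu\xi_j^\mu$ for $i\neq j$, followed by continuity of the finite-$N$ Boltzmann factor and partition function. Your write-up is in fact sharper than the paper's: you invoke the strong law of large numbers where the paper loosely says the CLT yields ``a Dirac delta'', and you explicitly dispose of the diagonal $i=j$ terms --- where conditional independence fails and the limit $r^2\xi_i^\mu\xi_j^\mu$ would be wrong --- as a $\bm\sigma$-independent constant absorbed by the normalization, a point the paper passes over in silence.
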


\subsubsection{Semi-supervised Hebbian learning}

As we proved convergence of both the supervised and unsupervised protocols to the storage picture painted by Hopfield it is trivial to do the same for the semi-supervised case. We simply report the main definitions and the main corollary, without proving them. 
\begin{definition} (Free energy)
The free energy of the (shallow) semi-supervised Hebbian learning, related to the Cost function provded in Def. \ref{def:semisup}, at finite sizes $N,M$, reads as
\begin{equation}\label{FreeSuper_semisup}
\mathcal{A}_{N,M}(\alpha,\beta) = \dfrac{1}{N} \mathbb{E} \log \sum_{\s} \exp\left( \dfrac{\beta N}{2r^2 M_2} \SOMMA{a=1}{M_2}\SOMMA{\mu=1}{K}\left[\dfrac{s}{\sqrt{(1+\rho_{s})}}\tilde n_\mu +\dfrac{(1-s)}{\sqrt{(1+\rho_{us})}}n_{\mu,a}\right]^2\right).
\end{equation}
where $\mathbb{E}$ is the average w.r.t. $\bm \eta$. 
We remind that the free energy of the Hopfield model at finite size $N$ is 
\begin{align}
    \mathcal{A}_N(\alpha, \beta)= \dfrac{1}{N} \mathbb{E} \log \sums \exp\left( \dfrac{\beta}{2N} \sum_{\mu=1}^K \sum_{i,j=1}^N \xi_i^\mu \xi_j^\mu \si \sigma_j\right),
\end{align}
where $\mathbb{E}$ is now the average w.r.t. $\bm \xi$.
\end{definition}

\begin{definition}{(Interpolating free energy)}\\
Considering a real positive variable $t \in [0,1]$, a dataset $\{ \bm \eta^{\mu,a} \}_{a=1,...,M}^{\mu=1,...,K}$ generated as prescribed in Def. \ref{def:example}, splitting in two parts, made of $M_1$ and $M_2$ items, and the Cost functions of Hebbian storing (see Def. \ref{def:HOP}) and semi-supervised Hebbian learning (see Def. \ref{def:semisup}), the interpolating free energy is introduced as 
\begin{align}\label{interpo_semisup}
    \mathcal{A}_{N,M}(t\vert \alpha,\beta)= \dfrac{1}{N} \mathbb{E} \log \sum_{\s} &\exp \left(t\dfrac{\beta}{2N} \sum_{i,j=1}^{N} \sum_{\mu=1}^K \xi_i^\mu \xi_j^\mu \si \sigma_j\right. \notag \\
    &\left.+ (1-t)\dfrac{\beta N}{2r^2 M_2} \SOMMA{a=1}{M_2}\SOMMA{\mu=1}{K}\left[\dfrac{s}{\sqrt{(1+\rho_{s})}}\tilde n_\mu +\dfrac{(1-s)}{\sqrt{(1+\rho_{us})}}n_{\mu,a}\right]^2\right) \notag \\
    &\hspace{-2.2cm}=\dfrac{1}{N} \mathbb{E}\log Z_{N,M}(t \vert \bm \xi, \bm \eta) = \dfrac{1}{N} \mathbb{E}\log \sum_{\s} B(t \vert \s, \bm \xi, \bm \eta).
\end{align}
where $\R$ as defined in Def. \ref{def:unsup}, $Z_{N,M}(t \vert \bm \xi, \bm \eta)$ is the interpolating partition function, $B(t \vert \s, \bm \xi, \bm \eta)$ the related Boltzmann factor and $\mathbb{E}$ is the expectation w.r.t the variables $\bm \xi$ and $\bm \eta$.
\end{definition}

% %to demonstrate also the following corollary.
\begin{theorem}
\label{thm:semisup}
In the asymptotic (\em{big data}) limit, i.e., $M \to +\infty$, the expression of the free energy related to the semi-supervised Hebbian learning approaches that of the Hopfield model.
\end{theorem}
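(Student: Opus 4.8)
The plan is to follow verbatim the two-step strategy behind Theorems \ref{prop:limM_sup} and \ref{prop:limM_UNSUP}: first show that the derivative of the interpolating free energy \eqref{interpo_semisup} with respect to $t$ vanishes as $M\to+\infty$, then apply the Fundamental Theorem of Calculus on $[0,1]$ to equate the endpoints $\mathcal{A}_N(t=0\vert\alpha,\beta)$ (semi-supervised learning) and $\mathcal{A}_N(t=1\vert\alpha,\beta)$ (Hopfield storage). I hold the labelling fraction $s\in(0,1)$ fixed, so that both block sizes $M_1=sM$ and $M_2=(1-s)M$ diverge; the extreme cases $s=0$ and $s=1$ are already settled by Theorems \ref{prop:limM_UNSUP} and \ref{prop:limM_sup} respectively.

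First I would differentiate \eqref{interpo_semisup}. Since $t$ enters linearly in the exponent, the derivative brings down the difference of the two quadratic forms and, averaging against the associated Boltzmann measure, gives
\begin{align}
\frac{d\mathcal{A}_{N,M}(t\vert\alpha,\beta)}{dt}=\frac{\beta}{2}\sum_{\mu=1}^K\left(\langle m_\mu^2\rangle-\frac{1}{r^2M_2}\sum_{a=1}^{M_2}\left\langle\left[\frac{s}{\sqrt{1+\rho_s}}\tilde n_\mu+\frac{1-s}{\sqrt{1+\rho_{us}}}n_{\mu,a}\right]^2\right\rangle\right).
\end{align}
The core of the argument is to show that the averaged bracket collapses onto $m_\mu^2$. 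I would expand the square into $\tilde n_\mu^2$, $n_{\mu,a}^2$ and the cross term $\tilde n_\mu n_{\mu,a}$, then apply the Central Limit Theorem exactly as in Lemmas \ref{lemma:dert1} and \ref{lemma:dert_UNSUP}. Because $M_1,M_2\to\infty$, the block-wise empirical means $\tfrac{1}{M_1}\sum_b\eta_i^{\mu,b}$ and $\tfrac{1}{M_2}\sum_a\eta_i^{\mu,a}$ concentrate on the patterns, so that $\tilde n_\mu$ and the empirical mean of $n_{\mu,a}$ both reduce to a common multiple of $m_\mu$, while the diagonal piece uses $\tfrac{1}{M_2}\sum_a\eta_i^{\mu,a}\eta_j^{\mu,a}\to r^2\xi_i^\mu\xi_j^\mu$ for $i\neq j$. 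Simultaneously $\rho=(1-r^2)/(Mr^2)\to0$ forces $\rho_s=\rho/s\to0$ and $\rho_{us}=\rho/(1-s)\to0$, so the prefactors $1/\sqrt{1+\rho_s}$ and $1/\sqrt{1+\rho_{us}}$ tend to $1$.

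I expect the main obstacle to be the bookkeeping that recombines the three pieces into a perfect square. After averaging over the unlabelled block, the $\tilde n_\mu^2$, $n_{\mu,a}^2$ and cross contributions each converge to $r^2 m_\mu^2$, weighted respectively by $s^2/(1+\rho_s)$, $(1-s)^2/(1+\rho_{us})$ and $2s(1-s)/\sqrt{(1+\rho_s)(1+\rho_{us})}$; since $\rho_s,\rho_{us}\to0$ these weights tend to $s^2$, $(1-s)^2$ and $2s(1-s)$, whose sum is $(s+(1-s))^2=1$, so the $1/r^2$-normalised bracket tends to $m_\mu^2$ and cancels the first term. The delicate point is that the cross term $\tilde n_\mu n_{\mu,a}$ — built from examples in different blocks and corresponding to the empirical correlation constrained in Def. \ref{def:empaverSS} — factorises in the limit into the product of the labelled and unlabelled block means, each concentrating on $r\xi^\mu$, and must deliver exactly the same $r^2 m_\mu^2$ as the diagonal pieces; were its weight or its limit off, a residual term would survive and the derivative would not vanish. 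Once the derivative is shown to tend to zero pointwise in $t$, dominated convergence permits exchanging $M\to\infty$ with $\int_0^1 dt$, and the Fundamental Theorem of Calculus yields $\mathcal{A}_N(t=0\vert\alpha,\beta)=\mathcal{A}_N(t=1\vert\alpha,\beta)$, which is the thesis.
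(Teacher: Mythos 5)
Your proposal is correct and follows exactly the same route as the paper: Guerra interpolation between the two Cost functions, a lemma showing the $t$-derivative vanishes as $M \to \infty$ via CLT concentration of the block empirical means, and then the Fundamental Theorem of Calculus (with dominated convergence) to equate the endpoints. In fact you supply more detail than the paper, which declares the proof ``trivial'' and only states the lemma: your expansion of the square into $\tilde n_\mu^2$, $n_{\mu,a}^2$ and cross contributions with weights $s^2$, $(1-s)^2$, $2s(1-s)$ summing to one is precisely the omitted bookkeeping, and your prefactor $1/(r^2 M_2)$ in the derivative is the one consistent with the Hamiltonian of Def.~\ref{def:semisup} (the paper's lemma drops the $1/r^2$).
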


As we have already done for the previous subsections, the proof of Theorem \ref{thm:semisup} is trivial once stated the following 
\begin{lemma}
    In the limit $M \to \infty$, the derivative of the interpolating quenched statistical pressure w.r.t. is null, namely
    \begin{align}
        \dfrac{d\mathcal{A}_{N,M}(t \vert \alpha, \beta)}{dt} = \dfrac{\beta}{2} \sum_{\mu=1}^K \left( \l m_\mu^2 \r - \dfrac{1}{M_2} \sum_{a=1}^{M_2} \left\l \left(\dfrac{s}{\sqrt{(1+\rho_{s})}}\tilde n_\mu +\dfrac{(1-s)}{\sqrt{(1+\rho_{us})}}n_{\mu,a}\right)^2 \right\r\right).
    \end{align}
\end{lemma}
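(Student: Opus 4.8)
The plan is to reproduce, in this mixed setting, the two-step strategy of Lemmas \ref{lemma:dert1} and \ref{lemma:dert_UNSUP}: first obtain the exact $t$-derivative of the interpolating free energy \eqref{interpo_semisup}, then show that the labelled and unlabelled empirical statistics both concentrate on the underlying pattern as $M\to\infty$, so that the storing and learning quadratic forms coincide and the derivative vanishes. Theorem \ref{thm:semisup} then follows from the Fundamental Theorem of Calculus exactly as in the supervised case.

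First I would differentiate \eqref{interpo_semisup} with respect to $t$. Pulling the exponent down inside the generalized Boltzmann average $\omega_t$, the derivative is the difference between the storing quadratic form and the semi-supervised one; using $m_\mu^2 = N^{-2}\sum_{i,j}\xi_i^\mu\xi_j^\mu\sigma_i\sigma_j$ and cancelling the $1/N$ prefactor against the explicit $N$ in \eqref{interpo_semisup}, one lands on the displayed identity, valid at every finite $M$. I stress that the learning term carries the $r^{-2}$ normalization inherited from the coupling of \eqref{FreeSuper_semisup} --- exactly the analogue of the $\R^{-1}$ appearing in Lemmas \ref{lemma:dert1} and \ref{lemma:dert_UNSUP} --- and it is this normalization that will make the final cancellation exact.

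Next I would perform the big-data limit on the second bracket. Because $s\in(0,1)$ is held fixed, $M\to\infty$ sends both $M_1=sM$ and $M_2=(1-s)M$ to infinity, and the dataset entropies $\rho_s=\rho/s$, $\rho_{us}=\rho/(1-s)$ vanish with $\rho=(1-r^2)/(Mr^2)$; hence the prefactors $s/\sqrt{1+\rho_s}$ and $(1-s)/\sqrt{1+\rho_{us}}$ flow to $s$ and $1-s$. Invoking the CLT/LLN argument of Lemma \ref{lemma:dert1} on the labelled block and of Lemma \ref{lemma:dert_UNSUP} on the unlabelled block, the empirical means $\frac1{M_1}\sum_{b=1}^{M_1}\eta_i^{\mu,b}$ and $\frac1{M_2}\sum_{a=1}^{M_2}\eta_i^{\mu,a}$ both concentrate on $r\xi_i^\mu$, so that $\tilde n_\mu\to r\,m_\mu$ and $\frac1{M_2}\sum_a n_{\mu,a}\to r\,m_\mu$. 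Expanding the square into $s^2\tilde n_\mu^2$, the cross term $2s(1-s)\,\tilde n_\mu\cdot\frac1{M_2}\sum_a n_{\mu,a}$, and $(1-s)^2\,\frac1{M_2}\sum_a n_{\mu,a}^2$ --- the last requiring $\frac1{M_2}\sum_a\eta_i^{\mu,a}\eta_j^{\mu,a}\to r^2\xi_i^\mu\xi_j^\mu$ for $i\neq j$, with the diagonal contributing only an $O(1/N)$ remainder --- these collect into $\big(s^2+2s(1-s)+(1-s)^2\big)r^2 m_\mu^2=r^2 m_\mu^2$. Reinstating the $r^{-2}$ normalization turns this into $\l m_\mu^2\r$, which cancels the first term $\mu$ by $\mu$, leaving a null derivative.

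I expect the bookkeeping of the cross structure to be the only real obstacle: unlike the purely supervised or unsupervised cases, one must simultaneously control the labelled block (through $\tilde n_\mu$ and $\rho_s$) and the unlabelled block (through the running index $a$, $n_{\mu,a}$ and $\rho_{us}$), verify that the convex combination reassembles into a perfect square via $\big(s+(1-s)\big)^2=1$, and check that the $O(1/N)$ diagonal remainder from $\frac1{M_2}\sum_a n_{\mu,a}^2$ stays harmless at fixed $N$. Once the derivative is shown to vanish, dominated convergence lets the $M\to\infty$ limit commute with $\int_0^1(\cdot)\,dt$, so $\mathcal{A}_N(t=1\vert\alpha,\beta)=\mathcal{A}_N(t=0\vert\alpha,\beta)$ and the semi-supervised free energy collapses onto that of the Hopfield model, proving Theorem \ref{thm:semisup}.
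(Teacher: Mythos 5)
Your proposal is correct and follows exactly the route the paper intends: for this lemma the paper supplies no proof at all (the subsection announces that the semi-supervised results are reported without proof), and the only argument it gives nearby --- the proof of the subsequent corollary --- invokes precisely the two concentration facts you use, namely the CLT/LLN for $\frac{1}{M_1}\sum_{b=1}^{M_1}\eta_i^{\mu,b}$ and for $\frac{1}{M_2}\sum_{a=1}^{M_2}\eta_i^{\mu,a}\eta_j^{\mu,a}$ with $i\neq j$; your derivative computation, the reductions $\tilde n_\mu\to r\,m_\mu$ and $\frac{1}{M_2}\sum_a n_{\mu,a}\to r\,m_\mu$, and the reassembly via $s^2+2s(1-s)+(1-s)^2=1$ are the faithful semi-supervised transcription of Lemmas \ref{lemma:dert1} and \ref{lemma:dert_UNSUP}. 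Two remarks are worth recording. First, you are right that the learning term of the exact derivative carries the normalization $1/(r^2M_2)$ inherited from the coupling $\beta N/(2r^2M_2)$ in \eqref{interpo_semisup}; this $1/r^2$ is absent from the lemma as displayed, and without it the claim would fail, since the square bracket concentrates on $r^2 m_\mu^2$ and the displayed right-hand side would then tend to $\frac{\beta}{2}(1-r^2)\sum_{\mu}\l m_\mu^2\r\neq 0$. Your proof therefore establishes the corrected statement; note also that $r^2(1+\rho_s)=r^2+\frac{1-r^2}{M_1}$ and $r^2(1+\rho_{us})=r^2+\frac{1-r^2}{M_2}$ are exactly the analogues of $\R$ built on the labelled and unlabelled blocks, which is why your $1/r^2$ plays the role of the $1/\R$ of the earlier lemmas. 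Second, the diagonal remainder is not quite ``harmless at fixed $N$'': the $i=j$ terms of $\frac{1}{M_2}\sum_a n_{\mu,a}^2$ survive the $M\to\infty$ limit (because $(\eta_i^{\mu,a})^2=1$ while the labelled block's squared empirical mean tends to $r^2$) and leave $(1-s)^2(1-r^2)/(r^2N)$ per pattern, i.e. an $O(\beta K/N)$ offset in the derivative, which disappears only when the diagonal is excluded consistently with the $i<j$ convention of Def.s \ref{def:sup} and \ref{def:unsup}, or in the regime $K/N\to 0$; this imprecision, however, is inherited from the paper itself --- the same residue affects Lemma \ref{lemma:dert_UNSUP} --- so it does not set your argument apart from the intended one.
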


\begin{corollary}
Whatever the value of $s \in [0,1]$, the {\em big data} limit $M \to \infty$ of the above Boltzmann-Gibbs probability distribution, related to (shallow) semi-supervised Hebbian learning, approaches the one related to Hebbian storage (that is the Hopfield model's one):
\begin{equation}
    \lim_{M \to \infty} \mathbb{P}_{N,M}(\s \vert \bm \eta, s) = \mathbb{P}_{N}(\s \vert \bm \xi) = \frac{1}{Z(\bm \xi)}\exp \left( 
    %\dfrac{1}{N} \sum_{i<j}  \xi_i^{1}\xi_j^{1}\sigma_i \sigma_j+
    \dfrac{\beta}{2N} \sum_{\mu=1}^K \sum_{i,j=1}^{N} \xi_i^{\mu}\xi_j^{\mu}\sigma_i \sigma_j\right).
\end{equation}  
\end{corollary}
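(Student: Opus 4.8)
The plan is to follow verbatim the strategy behind Corollaries~\ref{cor:S_SUP} and \ref{cor:MEP_UNSUP}: I would apply the law of large numbers (the same concentration argument used in Lemmas~\ref{lemma:dert1} and \ref{lemma:dert_UNSUP}) directly to the empirical moments appearing in the exponent of $\mathbb{P}_{N,M}(\s \vert \bm \eta, s)$. Since $N$ is kept finite, the configuration space $\{-1,+1\}^N$ is finite, so it suffices to prove that the Boltzmann factor converges pointwise in $\s$ (for almost every dataset realization) to the Hopfield one: convergence of the normalized measure then follows because the finitely many terms of both the numerator and of the partition function $Z(\bm \eta)$ converge termwise.

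First I would make the big data limit explicit. For fixed $s \in (0,1)$ one has $M_1 = sM \to \infty$ and $M_2 = (1-s)M \to \infty$, so both dataset entropies vanish, $\rho_s = \rho/s \to 0$ and $\rho_{us} = \rho/(1-s) \to 0$, and the prefactors $1/\sqrt{1+\rho_s}$ and $1/\sqrt{1+\rho_{us}}$ tend to $1$. Next I would expand the square inside the exponent and carry out the average over the unlabelled index $a = 1,\dots,M_2$; this produces three pieces controlled, respectively, by $\tilde n_\mu^2$, by $\frac{1}{M_2}\sum_{a=1}^{M_2} n_{\mu,a}$ and by $\frac{1}{M_2}\sum_{a=1}^{M_2} n_{\mu,a}^2$. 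Applying the law of large numbers as in Lemma~\ref{lemma:dert1} gives $\frac{1}{M}\sum_a \eta_i^{\mu,a} \to r\,\xi_i^\mu$ and, for $i \neq j$, $\frac{1}{M_2}\sum_a \eta_i^{\mu,a}\eta_j^{\mu,a} \to r^2\,\xi_i^\mu\xi_j^\mu$; hence $\tilde n_\mu \to r\,m_\mu$, $\frac{1}{M_2}\sum_a n_{\mu,a} \to r\,m_\mu$, and the $\s$-dependent part of $\frac{1}{M_2}\sum_a n_{\mu,a}^2$ tends to $r^2 m_\mu^2$.

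Collecting these limits, the bracket per pattern converges to $r^2 m_\mu^2\,\big(s^2 + 2s(1-s) + (1-s)^2\big)$, and the decisive step is the perfect-square identity $s^2 + 2s(1-s) + (1-s)^2 = (s + (1-s))^2 = 1$, which is exactly what renders the labelled/unlabelled split irrelevant in the big data limit for every $s$. The overall factor $\frac{\beta N}{2r^2 M_2}\sum_{a}$ then collapses the exponent to $\frac{\beta N}{2}\sum_\mu m_\mu^2 = \frac{\beta}{2N}\sum_\mu \sum_{i,j} \xi_i^\mu\xi_j^\mu \si \sigma_j$, i.e. the Hopfield exponent, which proves the claim; the endpoints $s=0$ and $s=1$ reduce to Corollaries~\ref{cor:MEP_UNSUP} and \ref{cor:S_SUP} respectively.

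The only delicate point, and what I expect to be the main technical obstacle, is the diagonal $i=j$ contribution to $\frac{1}{M_2}\sum_a n_{\mu,a}^2$: because $\eta_i^{\mu,a}\eta_i^{\mu,a}=1$ identically, it yields a $\s$-independent additive constant of order $1/N$ in the exponent. This constant factorizes out of the Boltzmann weight and cancels against the normalization $Z(\bm \eta)$, so it leaves the limiting measure untouched (equivalently, one simply restricts the double sums to $i \neq j$, as is already done in Def.~\ref{def:unsup}). A secondary care is that $n_{\mu,a}$ for a single $a$ does not concentrate, so the law of large numbers may be invoked only after the average over $a$ has been taken — which is precisely why the expansion of the square must precede the passage to the limit.
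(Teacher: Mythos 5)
Your proposal is correct and takes essentially the same route as the paper: the paper's own (two-line) proof consists precisely in invoking CLT/LLN concentration of $\frac{1}{M_1}\sum_{a=1}^{M_1}\eta_i^{\mu,a}$ and of $\frac{1}{M_2}\sum_{a=1}^{M_2}\eta_i^{\mu,a}\eta_j^{\mu,a}$ (with $i\neq j$) for every value of $s$ --- exactly the two empirical averages you concentrate --- and then appealing to the supervised and unsupervised cases. Everything you add beyond that (the expansion of the square, the identity $s^2+2s(1-s)+(1-s)^2=1$, the cancellation of the $\boldsymbol{\sigma}$-independent diagonal contribution against $Z(\boldsymbol{\eta})$, and the finite-$N$ pointwise-convergence framing) is detail the paper leaves implicit, not a different method.
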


\begin{proof}
    There is no $s$ value that prevents the usage of the CLT on the random variables $\dfrac{1}{M_1} \SOMMA{a=1}{M_1} \eta_i^{\mu,a}$ and $\dfrac{1}{M_2} \SOMMA{a=1}{M_2} \eta_i^{\mu,a} \eta_j^{\mu,a}$, with $i \neq j$, as already done in supervised and unsupervised Sections,  and this suffices to reach the thesis. 
\end{proof}

\subsection{Cost functions vs Loss functions in shallow Hebbian networks}% - NEW VERSION}
\label{sec:loss_dense}
As it is pivotal for cross-fertilization to relate observables in Statistical Mechanics with observables in Machine Learning, in this subsection we generalize the correspondence between Cost and Loss functions provided for Hebbian storage (see Remark \ref{rem:loss}) within these learning schemes: we anticipate that the correspondence still holds and, in particular, the Cost function for the supervised protocol will be shown to match the $L^2$ Loss function \cite{Bishop}, while its unsupervised counterpart matches the empirical risk \cite{bojanowski2017unsupervised} .

Let us start with the following definition  
\begin{definition}
% \label{def:loss}
    The Loss function of the Hebbian neural network in the supervised setting is 
    %\textbf{Non mancano le somme su $\mu$?}
    \begin{align}
    \mathcal{Ls}^\mu(\bm \sigma \vert \bm \eta^\mu)= Ls_{-}^\mu(\bm \sigma\vert \bm \eta^\mu) Ls_{+}^\mu(\bm \sigma\vert \bm \eta^\mu)
    \end{align}
    where the single pattern $L^2$ Loss function reads as
    \begin{align}
    Ls_{\pm}^\mu(\bm \sigma\vert \bm \eta^\mu) =& \dfrac{1}{2N}  \lVert \dfrac{1}{M}\sum_{a=1}^M \bm{\eta^{\mu,a}} \pm \bm{\sigma} \rVert^2_2 = \notag \\
    =&  \dfrac{1}{2N} \sum_{i=1}^N \left[ \left( \dfrac{1}{M}\sum_{a=1}^M\eta_i^{\mu,a}\right)^2 + \si^2 - 2 \si \dfrac{1}{M} \sum_{a=1}^M \eta_i^{\mu,a} \si \right] = \notag \\
    =& \dfrac{1}{2} \pm \tilde n_\mu + \dfrac{1}{2N}\sum_{i=1}^N \left( \dfrac{1}{M}\sum_{a=1}^M\eta_i^{\mu,a}\right)^2,
\end{align}
so that 
\begin{align}
    \tilde n_\mu^2= \dfrac{1}{4}  + \dfrac{1}{2N} \sum_{i=1}^N \left( \dfrac{1}{M}\sum_{a=1}^M\eta^{\mu,a}\right)^2 + \dfrac{1}{4N^2} \sum_{i,j=1,1}^{N,N} \left( \dfrac{1}{M}\sum_{a=1}^M\eta_i^{\mu,a}\right)^2\left( \dfrac{1}{M}\sum_{a=1}^M\eta_j^{\mu,a}\right)^2-\mathcal{Ls}^\mu(\bm \sigma \vert \bm \eta^\mu).
\end{align}
\end{definition}
We stress that for $M \to \infty$, using CLT, $\dfrac{1}{M} \SOMMA{a=1}{M} \eta_i^{\mu,a}$ can be expressed as a Gaussian variable with null average and $\R/M$ variance. Thus, as $M \to +\infty$, its probability distribution becomes a delta peaked in $\xi_i^\mu$ and $\left(\dfrac{1}{M} \SOMMA{a=1}{M} \eta_i^{\mu,a}\right)^2=(\xi_i^\mu)^2=1$. Therefore, 
\begin{align}
    \tilde n_\mu^2 = 1- \mathcal{Ls}^\mu(\bm \sigma \vert \bm \eta^\mu).
\end{align}

\begin{definition}
The Loss function of the Hebbian neural network in the unsupervised setting is 
    \begin{align}
\mathcal{L}u^{\mu,a}(\bm \sigma \vert \bm \eta^{\mu,a})=Lu_{-}^{\mu,a}(\bm \sigma \vert \bm \eta)Lu_{+}^{\mu,a}(\bm \sigma \vert \bm \eta^{\mu,a})
        \label{eq:lossUNSUP1}
    \end{align}
    %     \begin{align}
    %     \mathcal{L}u^\mu=\dfrac{1}{M} \sum_{a=1}^M \left(Lu_{-}^{\mu,a}Lu_{+}^{\mu,a}\right)
    % \end{align}
    where
    \begin{align}
    Lu_{\pm}^{\mu,a}(\bm \sigma \vert \bm \eta^{\mu,a})=\dfrac{1}{2N} \lVert \eta^{\mu,a} \pm \sigma \rVert^2_2 
    %= \dfrac{1}{2N} \sum_{i=1}^N 1 \pm \dfrac{1}{N} \sum_{i=1}^N \eta^{\mu,a}_i \sigma_i 
    = 1\pm  n_{a, \mu}(\bm \sigma \vert \bm \eta^{\mu,a}).
    \label{eq:lossUNSUP2}
    \end{align}
\end{definition}

Using eq.~\eqref{eq:lossUNSUP1} we can rewrite the Cost functions of the Hebbian neural network in supervised and unsupervised settings, respectively eq.~\eqref{eq:H_sup} and eq.~\eqref{eq:H_unsup}, as
\begin{align}
&H_{N,M}^{(Sup)}(\s \vert \bm \eta)= -\dfrac{N }{2\R}\sum_{\mu=1}^K \tilde n_\mu^2 \label{cialtro} \\
&= -\dfrac{N}{2\mathcal{R}} \sum_{\mu=1}^K \left( \dfrac{1}{4} + \dfrac{1}{2N} \sum_{i=1}^N  \left( \dfrac{1}{M}\sum_{a=1}^M\eta_i^{\mu,a}\right)^2 + \dfrac{1}{4N^2} \sum_{i,j=1}^{N}  \left( \dfrac{1}{M}\sum_{a=1}^M\eta_i^{\mu,a}\right)^2 \left( \dfrac{1}{M}\sum_{a=1}^M\eta_j^{\mu,a}\right)^2-\mathcal{Ls}^\mu(\s \vert \bm \eta)\right)  \notag \\
&H_{N,M}^{(Uns)}(\s \vert \bm \eta)= -\dfrac{N }{2 M\R}\sum_{\mu=1}^K \sum_{a=1}^M n_{\mu,a}^2= - \dfrac{N}{2\mathcal{R}} \sum_{\mu=1}^K \dfrac{1}{M} \sum_{a=1}^M \left( 1- \mathcal{L}u^\mu(\s \vert \bm \eta^{\mu,a})\right) \label{eq:HLOSSUNSUP}
\end{align}

\begin{remark}
    Recalling the definition of Empirical Risk $R_\mu^{emp}(\bm \sigma)$ for each pattern $\mu$, which is in this case 
    \begin{align}
        R_\mu^{emp}(\bm \sigma \vert \bm \eta^{\mu,a}) = \dfrac{1}{M} \sum_{a=1}^M \mathcal{L}u^{\mu, a} (\bm \sigma \vert \bm \eta^{\mu,a})
    \end{align}
    we can rewrite it, using eq.~\eqref{eq:lossUNSUP1} and eq.~\eqref{eq:lossUNSUP2}  as
    \begin{align}
        R_\mu^{emp}(\bm \sigma \vert \bm \eta^\mu)= 1- \dfrac{1}{M} \sum_{a=1}^M n_{\mu,a}^2.
    \end{align}
    Therefore, we can write eq.~\eqref{eq:HLOSSUNSUP} as
    \begin{align}
        H_{N,M}^{(Uns)}(\bm \sigma\vert \bm \eta^\mu) = - \dfrac{N}{2\mathcal{R}} \sum_{\mu=1}^K \left( 1- R_\mu^{emp}(\bm \sigma \vert \bm \eta^\mu)\right). 
    \end{align}
and eq.~\eqref{cialtro} as
\begin{align}\label{eccessiva}
        H_{N,M}^{(Sup)}(\bm \sigma\vert \bm \eta^\mu) = -\dfrac{N}{2\mathcal{R}} \sum_{\mu=1}^K \left(  C(\bm \eta^\mu) -\mathcal{Ls}^\mu(\s \vert \bm \eta^\mu)\right),
\end{align}
where $C(\bm \eta )$ is constant in $\bm \sigma$ and thus highlights that all the terms at the r.h.s. of eq.~\eqref{eccessiva} apart $\mathcal{Ls}^\mu(\s \vert \bm \eta^\mu)$ are not functions of neural activities.
\end{remark} 
These calculations show that, in both cases, the strong connection between Cost function in Statistical Mechanics and Loss function in Machine Learning keep holding  and, in particular, all the results derived until now within a statistical mechanical setting by using the Cost function can be obtained also starting from the definition of Loss function within a Machine Learning perspective.

\subsection{Maximum entropy principle for dense neural networks}
So far we inspected only networks whose neuronal interactions happen in couples. While this assumption (the pairwise interactions) seems rather reasonable in modelling biological information processing \cite{Tuckwell}, in Machine Learning's research there is not such a restriction, whose removal can indeed sensibly increase the computational capabilities of the resulting networks \cite{Krotov2018,HopKro1,HopfieldKrotovEffective,unsup,BarraPRLdetective}.
\newline
Roughly speaking, it is reasonable to expect that --as long as the neurons interact in couples (see, e.g. \eqref{eq:H_Hop})-- the Hamiltonians related to the networks are quadratic forms in the order parameters (the $m_{\mu}$ for the Hopfield case, see Def. \ref{def:ordparam}), hence the Boltzmann-Gibbs probabilities that the MEP prescribes to them have Gaussian shape (see e.g. eq.~\eqref{BGmeasure}) and, consistently, solely means and variances in the datasets contribute to their formation (see e.g. eq.~\eqref{constraints}).
\newline
Bypassing the requirement of pairwise interactions, we expect that higher order correlations can be detected by the network\footnote{This is indeed the case and dense networks outperform pairwise neural networks in terms of storage capacity, pattern recognition skills, robustness against adversarial attacks, etc. however we will not deepen network's capabilities in this paper as they can be found elsewhere, see for instance \cite{HopKro1,Krotov2018,HopfieldKrotovEffective,super,unsup,KrotovAlone}.} and this will shine by a glance at the constraints that we impose via the maximum entropy method to achieve their Cost functions, that we provide hereafter. 
\newline
As we did for the pairwise case, we start addressing dense Hebbian storage, then we move toward dense Hebbian learning but, before deepening the derivation of their Cost functions, it is useful to write them explicitly. For dense Hebbian storage \cite{Baldi,Burioni},  dense Hebbian supervised \cite{super} and dense Hebbian unsupervised \cite{unsup} learning these are provided by the next 
\begin{definition}
Let us consider a neural network made of $N$ Ising neurons $\sigma_i \in \{ - 1, +1\}$ for all $i =1,...,N$, the $K$ patterns $\xi^{\mu}$, $\mu =1,...,K$ Rademacher distributed and the standard datasets built of by $M$ blurred examples $\eta^{\mu,a}$, $a =1,...,M$, per pattern. The Hamiltonians of the dense neural networks are the natural many-body extensions of the previously investigated pairwise limits, namely, \fd{for a fixed $P$}
\begin{itemize}
\item    The Cost function of the dense neural network for Hebbian storage is 
    \begin{align}
    \label{eq:H_DHN}
        H_N^{(DHN)}(\bm \sigma \vert \bm \xi) = \dfrac{1}{P! N^{P-1}} \sum_{\mu=1}^K \sum_{i_1, \hdots, i_P=1}^{N} \xi_{i_1}^\mu \cdots \xi_{i_P}^\mu \sigma_{i_1} \cdots \sigma_{i_P}.
    \end{align}
\item   The Cost function of the dense neural network for Hebbian supervised learning is
    \begin{align}
    \label{eq:H_DSN}
    H^{(Sup)}_{N,M}(\boldsymbol{\sigma} \vert {\bm \eta})=& -\dfrac{N}{ P!\mathcal{R}^{P/2} N^P M^P}\SOMMA{\mu=1}{K}\SOMMA{i_1,\hdots,i_P=1}{N}\SOMMA{a_1,\hdots,a_P=1}{M}{\eta_i^{\mu,a_1}}\cdots{\eta_{i_P}^{\mu,a_P}}\sigma_{i_1}\cdots\sigma_{i_P}.
    \end{align}
\item    The Cost function of the dense neural network for Hebbian unsupervised learning is
    \begin{align}
    \label{eq:H_DUN}
    {H}_{N,M}^{(Uns)}(\bm \sigma \vert {\bm \eta}) =& -\dfrac{1}{P! \R^{P/2}\,M\,N^{P-1}}\SOMMA{\mu=1}{K}\SOMMA{a=1}{M}\SOMMA{i_1,\hdots,i_P=1}{N}\eta^{\mu\,,a}_{i_1}\cdots\eta^{\mu\,,a}_{i_P}\sigma_{i_1}\cdots\sigma_{i_P}.
    \end{align}
\end{itemize}
\end{definition}
\red{Note that, while these dense networks can rely upon a massive storage capacity, as $\alpha \propto K^{P-1}/N$, the price to pay is the need for large volumes of examples for training as, e.g. in the worst scenario of unsupervised learning, $M_c \propto 1/r^{2P}$ \cite{unsup}. On the contrary, in the supervised setting, the presence of a teacher allows the standard regime $M_c \propto 1/r^{2}$ \cite{Lucibello}.}

In the next subsections we aim to provide a derivation of all of these Cost functions via the maximum entropy prescription. As in the shallow limit, by inspecting the Lagrange multipliers that proliferate within this approach, this also shed lights on the statistics collected by the network to form its own representation of the patterns\footnote{For the sake of transparency we note that, as long as we work with unstructured datasets (as standard when analytically studying these networks \cite{Coolen}) there is no real need for dense interactions as the Gaussian scenario outlined in the previous Section suffices to collect information to infer the pattern lying behind the provided examples. Yet, at work on structured datasets, the performances of these dense networks are sensibly higher w.r.t. those of the Hopfield model, whatever the protocol (because collecting solely means and variances no longer guarantees a satisfactory probabilistic description of the patterns hidden in the datasets). However we will not deepen their capabilities here, see for instance \cite{super,unsup,Krotov2018,HopKro1,BarraPRLdetective}.}.
 
\subsubsection{Dense Hebbian Storing}
Mirroring the path followed in the previous Section, hereafter we start facing $P-$dense Hebbian storing: this time we have to consider higher order moments. 
\begin{definition}
\label{def:empP}
    In the (dense) Hebbian storing, the empirical averages of the generic $p$-th momentum of the Mattis magnetization $m^p$ with $p=1, \hdots P$ that we need to collect from the dataset read as 
    \begin{align}
        \langle m_\mu^p \rangle_{\textnormal{exp}} = \dfrac{1}{N^p} \sum_{i_1, \hdots, i_p=1}^{N} \xi_{i_1}^\mu\cdots \xi_{i_p}^\mu \langle \sigma_{i_1} \cdots \sigma_{i_p}\rangle_{\textnormal{exp}}
    \end{align}
Note that, in the $p^{th}$ momentum  $m_\mu^p$ related to the generic pattern $\mu$, the $p$-point correlation function among $p$ bits of the pattern is considered.  
\end{definition}
Once defined the empirical quantities we want this dense network to reproduce, we can make the next
\begin{proposition}
\label{prop:MEXDense}
    Given $K$ patterns $\{\boldsymbol{\xi}^{\mu}\}_{\mu =1,...,K}$, generated as prescribed in Def.\ref{DefinitionUno} and $N$ Ising neurons, whose activities read as $\sigma_i \in \{-1, +1 \}$ for all $1,...,N$, the less structured probability distribution $\mathbb{P}_{N}(\bm \sigma \vert \bm \xi)$ that reproduces the empirical average fixed in Def. \ref{def:empP} is 
    \begin{align}
        \mathbb{P}_{N}(\bm \sigma \vert \bm \xi) = \dfrac{1}{Z(\bm \xi)} \exp \left[ \sum_{\mu=1}^K \sum_{p=1}^P \lambda_p^\mu \dfrac{1}{N^p} \left( \sum_{i=1}^N \xi_i^\mu \si \right)^p\right]
    \end{align}
    where $Z(\bm \xi)$ is the partition function and $\lambda_p^\mu, \ \mu=1, \hdots, P$ are the Lagrangian multipliers. 
\end{proposition}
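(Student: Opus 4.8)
The plan is to reproduce, order by order in $p$, the variational argument already carried out for the pairwise protocols in Propositions \ref{propMEP} and \ref{propMEP_UNSUP}. First I would assemble the Lagrange entropic functional attached to $\mathbb{P}_{N}(\bm \sigma \vert \bm \xi)$: the Shannon entropy $-\sum_{\bm \sigma}\mathbb{P}_{N}\log\mathbb{P}_{N}$, a normalization term weighted by a multiplier $\lambda_0$, and, for every pattern $\mu=1,\dots,K$ and every order $p=1,\dots,P$, a term weighted by $\lambda_p^\mu$ that clamps the theoretical moment to the empirical value $\langle m_\mu^p \rangle_{\textnormal{exp}}$ fixed in Def.~\ref{def:empP}.

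The one algebraic observation that reduces the dense case to the shallow one is that the multi-index empirical correlation collapses into a pure power of the Mattis overlap,
\begin{equation}
\frac{1}{N^p}\sum_{i_1,\dots,i_p=1}^{N} \xi_{i_1}^\mu\cdots\xi_{i_p}^\mu\,\sigma_{i_1}\cdots\sigma_{i_p}=\left(\frac{1}{N}\sum_{i=1}^{N}\xi_i^\mu\sigma_i\right)^{p}=m_\mu^p,
\end{equation}
so that each constraint is simply $\langle m_\mu^p \rangle=\langle m_\mu^p \rangle_{\textnormal{exp}}$. Imposing $\partial_{\lambda_0}S=0$ and $\partial_{\lambda_p^\mu}S=0$ returns, respectively, normalization and the matching of all moments up to order $P$, while the functional derivative $\delta S/\delta\mathbb{P}_{N}=0$ gives $-\log\mathbb{P}_{N}-1+\lambda_0+\sum_{\mu=1}^K\sum_{p=1}^P\lambda_p^\mu m_\mu^p=0$. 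Solving for $\mathbb{P}_{N}$ and absorbing $e^{1-\lambda_0}$ into the partition function $Z(\bm \xi)$ yields exactly the claimed exponential family. To single out the dense Hebbian storage Hamiltonian \eqref{eq:H_DHN}, I would then set $\lambda_p^\mu=0$ for all $p<P$ and fix the single surviving, $\mu$-independent multiplier $\lambda_P^\mu$ by matching the prefactor $\beta N/P!$ carried by the top-order term, so that only that contribution persists and reproduces the Boltzmann weight $e^{-\beta H_N^{(DHN)}}$.

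Since the Shannon entropy is strictly concave on the probability simplex and every constraint is linear in $\mathbb{P}_{N}$, the stationary point found above is the unique global maximizer, so no second-order check is required; the computation is thus routine rather than genuinely hard. The only conceptual point worth stressing is the interpretive one anticipated in the text: retaining nonzero multipliers up to order $P$ (and killing the intermediate ones only when matching the pure $P$-spin Hamiltonian) is precisely what encodes the fact that a $P$-dense network probes $p$-point correlations up to $p=P$, in contrast with the pairwise case where solely $p=1,2$ enter.
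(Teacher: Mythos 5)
Your proposal is correct and follows essentially the same route as the paper's own proof: the constrained Shannon-entropy functional with multipliers $\lambda_0$ and $\lambda_p^\mu$, the stationarity conditions, and absorption of $e^{1-\lambda_0}$ into $Z(\bm \xi)$ yield the claimed exponential family exactly as in the paper (your explicit factorization of the multi-index sum into $m_\mu^p$ and the concavity/uniqueness remark are welcome clarifications that the paper leaves implicit). Note only that your final step about killing the multipliers with $p<P$ pertains to the subsequent Corollary rather than to the Proposition itself, and there the paper instead keeps all orders alive via $\lambda_p^\mu=\beta/(P!N^{P-p-1})$, obtaining a sum of dense networks rather than the pure $P$-spin term; both choices are legitimate specializations of the general result.
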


\begin{proof}
    We aim to apply again the maximum entropy procedure. The Lagrange entropic functional suitable to the present case reads as 
    \begin{align}
        S[\mathbb{P}_{N,M}(\bm \sigma \vert \bm \xi)] =& - \sum_{\bm \sigma} \mathbb{P}_{N,M}(\bm \sigma \vert \bm \xi) \log \mathbb{P}_{N,M}(\bm \sigma \vert \bm \xi) + \lambda_0 \left( \sum_{\bm \sigma} \mathbb{P}_{N,M}(\bm \sigma \vert \bm \xi) - 1 \right) \notag \\
        &+ \sum_{p=1}^P \sum_{\mu=1}^K \lambda_p^\mu\left(  \sum_{\bm \sigma} \mathbb{P}_{N,M}(\bm \sigma \vert \bm \xi)  \left( \dfrac{1}{N}\sum_{i=1}^N \xi_i^\mu \si\right)^p - \langle m_\mu^p \rangle_{\textnormal{exp}} \right)
    \end{align}

The stationary conditions of $S[\mathbb{P}_{N,M}(\bm \sigma \vert \bm \xi)]$ w.r.t. $\lambda_0$, $\lambda_p^\mu$ and $\mathbb{P}_{N,M}(\bm \sigma \vert \bm \xi)$ yields
%\footnote{We omitted, for the sake of simplicity, the long list of constraints that result from extremizing w.r.t. $\lambda_p^\mu$  as they return the matching -at all the order up to $P$- of the theoretical correlation functions to their empirical counterparts.}
\begin{align}
    \dfrac{\partial S[\mathbb{P}_{N,M}(\bm \sigma \vert \bm \xi)]}{\partial \lambda_0} =& \sum_{\s} \mathbb{P}_{N,M}(\bm \sigma \vert \bm \xi) - 1 =0,\\
     \dfrac{\partial S[\mathbb{P}_{N,M}(\bm \sigma \vert \bm \xi)]}{\partial \lambda_p^\mu} =& \sum_{\bm \sigma} \mathbb{P}_{N,M}(\bm \sigma \vert \bm \xi)  \left( \dfrac{1}{N}\sum_{i=1}^N \xi_i^\mu \si\right)^p - \langle m_\mu^p \rangle_{\textnormal{exp}} =0, \ \ \ p=1, \hdots P,\\
    \dfrac{\delta S[\mathbb{P}_{N,M}(\bm \sigma \vert \bm \xi)]}{\delta \mathbb{P}_{N,M}(\bm \sigma \vert \bm \xi)} =& -\log \mathbb{P}_{N,M}(\bm \sigma \vert \bm \xi) -1 +\lambda_0 + \sum_{\mu=1}^K \sum_{p=1}^P \lambda_p^\mu \dfrac{1}{N^p} \left( \sum_{i=1}^N \xi_i^\mu \si\right)^p =0,      
\end{align}
namely $\mathbb{P}_{N,M}(\bm \sigma \vert \bm \xi)$ can be written as
\begin{align}
\label{eq:PMEP}
    \mathbb{P}_{N,M}(\bm \sigma \vert \bm \xi)= \exp \left( \lambda_0 -1\right) \exp \left[ \sum_{\mu=1}^K \sum_{p=1}^P \lambda_p^\mu \dfrac{1}{N^p} \left( \sum_{i=1}^N \xi_i^\mu \si\right)^p\right], 
\end{align}
so, if we define $\lambda_0$ in such a way that $\exp (1-\lambda_0) = Z( \bm \xi)$, we reach the thesis. 
\end{proof}
 
\begin{corollary}
    If we fix $\lambda_\mu^p$ in Proposition \ref{prop:MEXDense} as 
    \begin{align}
        \lambda^p_\mu = \dfrac{\beta }{P!N^{P-p-1}},
    \end{align}
    we obtain the expression of the probability distribution of the sum of $P$ dense Hebbian neural networks for pattern storage.     
\end{corollary}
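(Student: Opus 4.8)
The plan is to prove the corollary by direct substitution into the probability distribution already obtained in Proposition~\ref{prop:MEXDense}, since no new variational problem is involved: the corollary merely fixes the Lagrange multipliers that the maximum-entropy argument left free. First I would recall that, by Proposition~\ref{prop:MEXDense},
\begin{equation}
\mathbb{P}_{N}(\bm \sigma \vert \bm \xi) = \dfrac{1}{Z(\bm \xi)} \exp \left[ \sum_{\mu=1}^K \sum_{p=1}^P \lambda_p^\mu \dfrac{1}{N^p} \Big( \sum_{i=1}^N \xi_i^\mu \si \Big)^p\right],
\end{equation}
and then insert the prescribed values $\lambda_p^\mu=\beta/(P!\,N^{P-p-1})$. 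The key observation, and the point that makes the computation collapse, is the cancellation of the $p$-dependence in the power of $N$: since $\lambda_p^\mu\,N^{-p}=\beta/(P!\,N^{P-1})$ for \emph{every} $p$, each of the $P$ monomials in the exponent inherits one and the same coefficient $\beta/(P!\,N^{P-1})$, independent of the interaction order $p$.

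Next I would re-expand each power as a genuine $p$-body interaction, writing $\big(\sum_{i}\xi_i^\mu\si\big)^p=\sum_{i_1,\dots,i_p=1}^N\xi_{i_1}^\mu\cdots\xi_{i_p}^\mu\sigma_{i_1}\cdots\sigma_{i_p}$, so that the exponent becomes
\begin{equation}
\dfrac{\beta}{P!\,N^{P-1}}\sum_{\mu=1}^K\sum_{p=1}^P\sum_{i_1,\dots,i_p=1}^N\xi_{i_1}^\mu\cdots\xi_{i_p}^\mu\sigma_{i_1}\cdots\sigma_{i_p}.
\end{equation}
Each term labelled by $p$ is, up to the overall sign convention, a $p$-body dense Hebbian storing cost function of exactly the form~\eqref{eq:H_DHN}; reading the whole expression as $-\beta\sum_{p=1}^P H_N^{(p)}$ then identifies $\mathbb{P}_{N}(\bm\sigma\vert\bm\xi)$ with the Boltzmann--Gibbs measure of the sum of these $P$ dense Hebbian networks, which is the assertion. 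I would flag explicitly here that all $P$ contributions share the common normalisation $1/(P!\,N^{P-1})$ rather than each carrying its own natural $1/(p!\,N^{p-1})$ — it is precisely this shared normalisation that produces the $p$-independent coefficient above and justifies the wording \emph{sum of $P$ dense Hebbian networks}.

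Finally, I would close with a consistency remark showing that the decomposition is thermodynamically coherent. Since $\sum_i\xi_i^\mu\si=N m_\mu$, the $p$-th monomial equals $(\beta/P!)\,N^{\,p-P+1}\,m_\mu^{\,p}$, so only the top term $p=P$ is extensive (scaling as $N$) and reproduces the pure $P$-body cost function~\eqref{eq:H_DHN}, while all lower-order monomials are subleading in $N$ and drop out of the intensive free energy as $N\to\infty$. There is no genuine analytic obstacle in this proof: the work is entirely algebraic bookkeeping, and the only two points demanding care are the sign convention relating the exponent to $-\beta H$ (the storing coupling being attractive) and the normalisation subtlety just highlighted, both of which should be stated rather than computed.
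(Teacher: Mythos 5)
Your proof is correct and coincides with the paper's (implicit) argument: the paper states this corollary without an explicit proof, treating it---exactly as you do, and as it does explicitly for the subsequent exponential-model corollary---as a direct substitution of $\lambda_p^\mu=\beta/(P!\,N^{P-p-1})$ into eq.~\eqref{eq:PMEP}, where the cancellation $\lambda_p^\mu N^{-p}=\beta/(P!\,N^{P-1})$ for every $p$ produces the sum of $p$-body Hebbian terms, in agreement with the expression that reappears in Corollary~\ref{cor:S_SUP_dense}. Your extra remarks (the shared $1/(P!\,N^{P-1})$ normalisation in place of the natural $1/(p!\,N^{p-1})$ per-network one, the sign convention, and the fact that only the $p=P$ term is extensive) are correct and consistent with the paper.
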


\begin{corollary}
    Supposing $P \to +\infty$, if we fix $\lambda_\mu^p$ as 
    \begin{align}
        \lambda_p^\mu= \dfrac{\beta N^{p}}{ p!},
        \label{eq:lambdaexp}
    \end{align}
    we obtain the expression of the probability distribution of the {\em exponential Hopfield model} \cite{demircigil,Lucibello,AllU}. 
\end{corollary}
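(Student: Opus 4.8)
The plan is to feed the prescribed multipliers \eqref{eq:lambdaexp} into the maximum-entropy distribution established in Proposition \ref{prop:MEXDense} and to recognise the resulting exponent as a truncated exponential series. Writing $S_\mu := \sum_{i=1}^N \xi_i^\mu \si = N m_\mu$ for the unnormalised overlap, the generic summand in that exponent is $\lambda_p^\mu\, N^{-p}\big(\sum_{i=1}^N \xi_i^\mu \si\big)^p = \lambda_p^\mu\, N^{-p} S_\mu^p$. Substituting $\lambda_p^\mu = \beta N^p/p!$ makes the factor $N^p$ cancel, so the whole exponent collapses to
\begin{equation}
\sum_{\mu=1}^K \sum_{p=1}^P \lambda_p^\mu \dfrac{S_\mu^p}{N^p} = \beta \sum_{\mu=1}^K \sum_{p=1}^P \dfrac{S_\mu^{p}}{p!}.
\end{equation}

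First I would take the limit $P\to\infty$ inside the (finite) sum over $\mu$. Since the configuration space $\{-1,+1\}^N$ is finite and $N$ is held fixed, each overlap is bounded, $|S_\mu|\le N$, so the inner series is the Taylor partial sum of $e^{S_\mu}$ and converges absolutely and uniformly over the finitely many configurations. The limit therefore exists termwise and equals $\sum_{\mu=1}^K \beta\big(e^{S_\mu}-1\big)$, the $-1$ arising because the summation starts at $p=1$ rather than $p=0$.

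Next I would absorb the configuration-independent constant $-\beta K$ into the normalisation. Redefining $Z(\bm{\xi})$ accordingly, the limiting distribution becomes
\begin{equation}
\lim_{P\to\infty}\mathbb{P}_N(\bm{\sigma}\vert\bm{\xi}) = \dfrac{1}{Z(\bm{\xi})}\exp\left(\beta\sum_{\mu=1}^K e^{\sum_{i=1}^N \xi_i^\mu \si}\right) = \dfrac{1}{Z(\bm{\xi})}\exp\left(\beta\sum_{\mu=1}^K e^{N m_\mu}\right),
\end{equation}
which is exactly the Boltzmann--Gibbs measure associated with the Hamiltonian $H_N^{(\exp)}(\bm{\sigma}\vert\bm{\xi}) = -\sum_{\mu=1}^K \exp\big(\sum_{i=1}^N \xi_i^\mu \si\big)$, i.e. the exponential Hopfield model of \cite{demircigil,Lucibello,AllU}. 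This identification closes the argument.

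There is essentially no analytic obstacle here: because $N$ is fixed and the state space is finite, every manipulation is a finite sum of elementary limits, so interchanging $\lim_{P\to\infty}$ with the sum over $\mu$ and with the sum over $\bm{\sigma}$ entering $Z(\bm{\xi})$ is automatic. The only point that requires care is bookkeeping --- tracking the $p=0$ shift that produces the constant $-\beta K$ and confirming that the chosen multipliers reproduce precisely the Demircigil normalisation rather than a $\cosh$-type variant; matching this convention is the single step where one must be explicit.
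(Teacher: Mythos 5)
Your proposal is correct and takes essentially the same route as the paper's own proof: substitute the multipliers \eqref{eq:lambdaexp} into the maximum-entropy distribution of Proposition \ref{prop:MEXDense}, cancel the $N^p$ factors, and resum the series over $p$ into $\exp\left(\sum_{i=1}^N \xi_i^\mu \sigma_i\right)$. If anything, your version is slightly more careful than the paper's, which passes silently from the sum starting at $p=1$ to the exponential identity starting at $p=0$; your explicit tracking of the constant $-\beta K$ and its absorption into $Z(\bm{\xi})$ closes that minor bookkeeping step.
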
 

\begin{proof}
    If we replace the content of eq.~\eqref{eq:lambdaexp} in eq.~\eqref{eq:PMEP} we find
    \begin{align}
         \mathbb{P}_{N,M}(\bm \sigma \vert \bm \xi) = \exp \left( \lambda_0 -1\right) \exp \left[ \sum_{\mu=1}^K \sum_{p=1}^{+\infty} \dfrac{\beta}{ p!}  \left( \sum_{i=1}^N \xi_i^\mu \si\right)^p\right]
    \end{align}
    but, since we have that 
    \begin{align}
        \sum_{p=0}^{+\infty} \dfrac{1}{p!} \left( \sum_{i=1}^N \xi_i^\mu \si\right)^p = \exp \left( \sum_{i=1}^N \xi_i^\mu \si \right),
    \end{align}
 and   the series is convergent and summable, we reach the thesis. 
\end{proof}

\subsubsection{Dense Hebbian supervised learning}
Along the same lines of the previous subsection, we now generalize the maximum entropy procedure to obtain the Cost function of the supervised Hebbian leaning for $P-$dense neural networks. Let us start by the next
\begin{definition}
\label{def:empirsup}
    In the $P-$dense {supervised Hebbian learning}, the empirical averages of $\tilde n_\mu^p$, $p=1, \hdots, P$ that we need to collect from the dataset read as 
\begin{align}
\label{eq:2.4_2}
    \langle \tilde n_\mu^p \rangle_{\textnormal{exp}} &= \left(\dfrac{r}{\R NM}\right)^p \sum_{i_1,\hdots, i_p=1}^{N}\sum_{a_1, \hdots , a_p=1}^{M} \eta_{i_1}^{\mu,a_1}\eta_{i_p}^{\mu,a_p} \langle \sigma_{i_1} \cdots \sigma_{i_p} \rangle_{\textnormal{exp}},
\end{align}
because the presence of a teacher allows to write the summation over the examples in the above  eq.~\eqref{eq:2.4_2} as performed on  separate indices $a_1, \hdots , a_p$. 
\end{definition}
We can thus give the next
\begin{proposition}
\label{propMEP_supsup}
    Given a dataset $\{ \bm \eta^{\mu,a} \}_{a=1,...,M}^{\mu=1,...,K}$ generated as prescribed in Def. \ref{def:example}, with $r \in [0,1]$ and $\R$ as defined in Def. \ref{def:unsup}, and the $N$ Ising neurons whose neural activities read $\sigma_i = \{- 1, +1\}$ for all $i = 1,...,N$, the less structured probability distribution that reproduces the empirical average fixed in Def. \ref{def:empirsup} is 
    \begin{align}
        \mathbb{P}_{N,M}(\bm \sigma \vert \bm \eta)= \dfrac{1}{Z(\bm \eta)} \exp \left( \sum_{\mu=1}^K \sum_{p=1}^P \lambda_p^\mu \dfrac{1}{N^p}\sum_{i_1, \hdots i_p=1}^{N} \sum_{a_1, \hdots a_p=1}^{M} \eta_i^{\mu,a_1} \cdots \eta_{i_p}^{\mu,a_p} \sigma_{i_1} \cdots \sigma_{i_p} \right),
        \label{eq:propMEP_supsup}
    \end{align}
    where $Z(\bm \eta)$ is the partition function, $\lambda_p^{\mu}$  for all $\mu$ are Lagrangian multipliers, $p=1, \hdots, P$.
\end{proposition}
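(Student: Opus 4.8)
The plan is to mirror exactly the maximum-entropy arguments already carried out for the shallow supervised case (Proposition \ref{propMEP}) and for dense Hebbian storage (Proposition \ref{prop:MEXDense}), since the only novelty here is the order $P$ of the moments being constrained together with the teacher-induced multi-index structure recorded in Definition \ref{def:empirsup}. First I would write down the Lagrange entropic functional attached to $\mathbb{P}_{N,M}(\bm\sigma\vert\bm\eta)$: the Shannon entropy $-\sum_{\bm\sigma}\mathbb{P}_{N,M}(\bm\sigma\vert\bm\eta)\log\mathbb{P}_{N,M}(\bm\sigma\vert\bm\eta)$, a normalization term $\lambda_0\big(\sum_{\bm\sigma}\mathbb{P}_{N,M}(\bm\sigma\vert\bm\eta)-1\big)$, and, for each $\mu=1,\dots,K$ and each $p=1,\dots,P$, a constraint term with multiplier $\lambda_p^\mu$ that clamps the theoretical moment $\langle \tilde n_\mu^p\rangle$ to its empirical value $\langle \tilde n_\mu^p\rangle_{\textnormal{exp}}$ fixed in Definition \ref{def:empirsup}.

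Then I would impose stationarity. Differentiating with respect to $\lambda_0$ returns the normalization of $\mathbb{P}_{N,M}(\bm\sigma\vert\bm\eta)$, while differentiating with respect to each $\lambda_p^\mu$ reproduces the $P\times K$ moment-matching conditions $\langle \tilde n_\mu^p\rangle=\langle \tilde n_\mu^p\rangle_{\textnormal{exp}}$. The decisive step is the functional derivative $\delta S/\delta\mathbb{P}_{N,M}(\bm\sigma\vert\bm\eta)=0$, which, exactly as in the proofs of Propositions \ref{propMEP} and \ref{prop:MEXDense}, yields $-\log\mathbb{P}_{N,M}(\bm\sigma\vert\bm\eta)-1+\lambda_0+\sum_{\mu=1}^K\sum_{p=1}^P\lambda_p^\mu(\cdots)=0$, where the bracket is the $p$-th moment kernel. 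Exponentiating and setting $\exp(1-\lambda_0)=Z(\bm\eta)$ to absorb the normalization gives the Boltzmann--Gibbs form claimed in \eqref{eq:propMEP_supsup}.

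The one point that genuinely needs care is the expansion of the $p$-th moment of the supervised overlap. Using $\tilde n_\mu=\frac{1}{MN}\sum_{a=1}^M\sum_{i=1}^N\eta_i^{\mu,a}\sigma_i$ one finds $\tilde n_\mu^p=(MN)^{-p}\sum_{i_1,\dots,i_p=1}^N\sum_{a_1,\dots,a_p=1}^M\eta_{i_1}^{\mu,a_1}\cdots\eta_{i_p}^{\mu,a_p}\sigma_{i_1}\cdots\sigma_{i_p}$, so that the $p$ example indices $a_1,\dots,a_p$ run independently. This is precisely the teacher-induced factorization encoded in Definition \ref{def:empirsup}, and it is what distinguishes the supervised kernel from the unsupervised one, where a single shared index $a$ would survive. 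Once this expansion is in hand, the constant prefactors $(r/\R NM)^p$ appearing in the moment constraints are reabsorbed into a redefinition of the multipliers $\lambda_p^\mu$, producing the $1/N^p$ normalization displayed in \eqref{eq:propMEP_supsup}.

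I expect no conceptual obstacle: the argument is a routine Lagrange-multiplier extremization identical in spirit to the earlier propositions, and the only realistic source of error is the bookkeeping of the combinatorial prefactors and the multi-index ranges. As in the storage case (cf.\ the corollaries following Proposition \ref{prop:MEXDense}), a subsequent choice of the $\lambda_p^\mu$ would then single out the specific dense supervised Cost function \eqref{eq:H_DSN}; that identification, however, is a matter of matching constants rather than of proof.
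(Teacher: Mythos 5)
Your proposal is correct and takes exactly the route the paper intends: the paper itself omits this proof with the remark that it is ``similar to the previous one,'' and your argument is precisely that template --- the Lagrange entropic functional of Propositions \ref{propMEP} and \ref{prop:MEXDense} with the $K\times P$ moment constraints of Definition \ref{def:empirsup}, the stationarity conditions in $\lambda_0$, $\lambda_p^\mu$ and $\mathbb{P}_{N,M}(\bm\sigma\vert\bm\eta)$, exponentiation with $Z(\bm\eta)=\exp(1-\lambda_0)$, and reabsorption of the constant prefactors $\left(r/(\R NM)\right)^p$ into the multipliers. Your explicit expansion of $\tilde n_\mu^p$ into independent example indices $a_1,\hdots,a_p$ (the teacher-induced factorization) is a detail the paper leaves implicit, and it is handled correctly.
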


The proof is similar to the previous one, therefore we omit it. 

\begin{corollary}
    If we fix $\lambda_p^{\mu}$ in Proposition \ref{propMEP_supsup} as 
    \begin{align}
        \lambda_{\mu}^p = \dfrac{\beta \R }{2 r^2 P! N^{P-p-1}}, 
    \end{align}
    we obtain the expression of the probability distribution of the sum of $P$ dense neural networks for supervised Hebbian learning. 
\end{corollary}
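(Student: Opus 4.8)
The plan is to lean on the Remark following Proposition~\ref{propMEP}, which guarantees that every maximum-entropy solution is automatically of Boltzmann--Gibbs type $\mathbb{P}_{N,M}(\s\vert\bm\eta)=e^{-\beta H_{N,M}(\s\vert\bm\eta)}/Z(\bm\eta)$. Consequently the corollary is not a fresh optimisation but a pure bookkeeping statement: it suffices to insert the prescribed multipliers $\lambda_p^\mu=\beta\R/(2r^2P!\,N^{P-p-1})$ into the exponent of \eqref{eq:propMEP_supsup} and recognise $-\beta$ times a cost function. So the first step is simply that substitution, exactly mirroring the one carried out for Hebbian storage in the corollary to Proposition~\ref{prop:MEXDense}.

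The second step is to collapse each $p$-body block of the exponent onto the supervised order parameter $\tilde n_\mu$ of Def.~\ref{def:ordparam}. Since $\sum_{i,a}\eta_i^{\mu,a}\si=MN\,\tilde n_\mu$, the unrestricted multi-index sum factorises as $\sum_{i_1,\dots,i_p}\sum_{a_1,\dots,a_p}\eta_{i_1}^{\mu,a_1}\cdots\eta_{i_p}^{\mu,a_p}\sigma_{i_1}\cdots\sigma_{i_p}=(MN\,\tilde n_\mu)^p$, so the $p$-th block reduces to the pure power $\tilde n_\mu^p$ multiplied by an explicit prefactor in $r,\R,M,N,p$. The engineered denominator $N^{P-p-1}$ in $\lambda_p^\mu$ is precisely what endows every collapsed block with the common scaling $N^{\,p+1-P}$, so that only the top block $p=P$ is extensive and the lower ones are subleading in $N$, just as in the storage corollary. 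For the top block one checks against the clean rewriting $H^{(Sup)}_{N,M}=-\tfrac{N}{P!\,\R^{P/2}}\sum_{\mu}\tilde n_\mu^{P}$ obtained by substituting $\tilde n_\mu$ directly into \eqref{eq:H_DSN}; matching this fixes the remaining constants, and the analogous $p<P$ blocks reproduce the corresponding lower-order dense supervised costs. Summing over $p=1,\dots,P$ then yields exactly the Boltzmann--Gibbs weight of the sum of $P$ dense supervised networks, which is the assertion. The only structural difference from the storage case is the extra example-to-pattern rescaling factor ($\propto \R/r^{2}$), which is the very same constant that already distinguishes the shallow supervised scheme of Proposition~\ref{propMEP} from plain Hebbian storage.

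I expect the main obstacle to be entirely in the constant bookkeeping rather than in any conceptual point: one must track the interplay of the prefactors $r^{p}$, powers of $\R$, $M^{-p}$, and of $p!$ against the global $P!$ across the different orders $p$, so that each collapsed block aligns with the normalisation convention fixed by \eqref{eq:H_DSN}. A secondary subtlety is that the index sums in \eqref{eq:propMEP_supsup} are unrestricted and therefore contain diagonal (coincident-index) contributions absent from a genuine $p$-body interaction; for Ising spins these collapse to lower-degree polynomials in the $\tilde n_\mu$ or to $\bm\sigma$-independent constants, the latter being reabsorbed into $Z(\bm\eta)$ and the former being subleading in $N$, so neither affects the identification in the relevant scaling. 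Once these factors are aligned the proof closes in one line, which is why, as for Proposition~\ref{propMEP_supsup} itself, it is natural to present it as a short corollary rather than a full derivation.
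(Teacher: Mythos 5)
Your proof is correct and is essentially the paper's own (implicit) argument: the corollary is stated there without any proof, exactly like its storage counterpart after Proposition~\ref{prop:MEXDense}, the intended verification being precisely your substitution of $\lambda_p^\mu$ into eq.~\eqref{eq:propMEP_supsup}, the factorization of each unrestricted multi-index sum into $(NM\,\tilde n_\mu)^p$, and the identification of the resulting exponent with $-\beta$ times a sum of $p$-body supervised Hamiltonians, of which only the $p=P$ block is extensive. The frictions you flag are real but are defects of the paper's statement rather than gaps in your argument: the mutually inconsistent normalizations (the $1/N^p$ of eq.~\eqref{eq:propMEP_supsup}, versus the $\left(r/(\R NM)\right)^p$ of Def.~\ref{def:empirsup}, versus the $\R^{P/2}$ of eq.~\eqref{eq:H_DSN}) prevent any choice of multipliers from reproducing eq.~\eqref{eq:H_DSN} exactly at finite $M$ (the prefactors only align as $\R \to r^2$, i.e.\ for $M \to \infty$), and your worry about diagonal, coincident-index contributions is actually moot here because eq.~\eqref{eq:H_DSN} itself is written with unrestricted index sums.
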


\begin{corollary}
\label{cor:S_SUP_dense}
In the asymptotic limit of {\em big data}, namely for $M \to +\infty$,  the Boltzmann-Gibbs  probability distribution $ \mathbb{P}_{N,M}(\s \vert \bm \eta)$,  related to the $P-$dense supervised Hebbian learning (see eq.~\eqref{eq:propMEP_supsup}), approaches the Boltzmann-Gibbs  distribution of the sum of $P$ dense Hebbian neural networks for pattern storage, namely
\begin{equation}
    \lim_{M \to \infty} \mathbb{P}_{N,M}(\s \vert \bm \eta) = \mathbb{P}_{N}(\s \vert \bm \xi) = \frac{1}{Z(\bm \xi)}\exp \left[ \sum_{\mu=1}^K \sum_{p=1}^P \dfrac{\beta}{P!N^{P-1}} \sum_{i_1, \hdots, i_p=1}^{N} \xi_{i_1}^\mu  \cdots \xi_{i_p}^\mu \sigma_{i_1} \hdots \sigma_{i_p}\right].
    \label{eq:Hebb_stordense}
\end{equation} 
%where $\bm \xi$ are the $K$ patterns. 
%and $\bm \xi^\mu, \ \mu=2, \hdots, K$ are i.i.d. Gaussian ones.
\end{corollary}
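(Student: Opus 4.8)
The plan is to run the \emph{same} concentration argument used for the shallow supervised case in Corollary~\ref{cor:S_SUP}, now applied termwise to each of the $P$ many-body contributions. Starting from the distribution \eqref{eq:propMEP_supsup} with the Lagrange multipliers frozen at their supervised values, the first move is to exploit that in the supervised protocol every example index is summed independently, so that the $p$-body example sum factorises over the $p$ slots:
\[
\frac{1}{M^{p}}\sum_{a_1,\dots,a_p=1}^{M}\eta_{i_1}^{\mu,a_1}\cdots\eta_{i_p}^{\mu,a_p}
=\prod_{k=1}^{p}\Big(\frac{1}{M}\sum_{a=1}^{M}\eta_{i_k}^{\mu,a}\Big).
\]
This rewrites the entire exponent as a fixed-degree polynomial (degree $\le P$) in the $N\times K$ empirical means $\bar\eta_i^{\mu}:=\frac{1}{M}\sum_{a}\eta_i^{\mu,a}$.

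Next I would invoke Lemma~\ref{lemma:dert1}: as $M\to\infty$ each $\bar\eta_i^{\mu}$ concentrates, its law collapsing onto the Dirac mass at its conditional mean $\mathbb{E}[\eta_i^{\mu,a}\mid\bm\xi]=r\,\xi_i^{\mu}$, while simultaneously $\R=r^{2}+(1-r^{2})/M\to r^{2}$. Because every $p$-body term contains exactly $p$ such factors and $p\le P$ is fixed, and because the limits are \emph{deterministic}, joint convergence of the product is automatic and the product tends to $\prod_{k=1}^{p} r\,\xi_{i_k}^{\mu}=r^{p}\prod_{k}\xi_{i_k}^{\mu}$; note that coincident indices require no separate treatment here (unlike the unsupervised Lemma~\ref{lemma:dert_UNSUP}), since $(\bar\eta_i^{\mu})^{2}\to r^{2}=r^{2}(\xi_i^{\mu})^{2}$ already agrees with the intended limit. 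The multipliers are arranged precisely so that the $p$ factors of $r$ so produced are cancelled by the compensating factor $\R^{-p/2}\to r^{-p}$ carried by the normalisation, leaving exactly the dense-storage coefficient $\beta/(P!\,N^{P-1})$ in front of $\sum_{i_1,\dots,i_p}\xi_{i_1}^{\mu}\cdots\xi_{i_p}^{\mu}\sigma_{i_1}\cdots\sigma_{i_p}$.

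Finally I would pass the limit through the partition function. Since $N$ is finite, both the Boltzmann weight and $Z(\bm\eta)$ are continuous, bounded functions of the finitely many variables $\bar\eta_i^{\mu}$; as these converge in probability to the deterministic values $r\,\xi_i^{\mu}$, dominated convergence lets me commute $\lim_{M\to\infty}$ with the finite sums over $\s$ defining numerator and denominator, yielding the claimed dense-storage distribution \eqref{eq:Hebb_stordense}. The only genuinely delicate point—and thus the main obstacle—is this interchange: one must ensure that convergence of the exponent is strong enough to survive the normalisation. It is harmless precisely because the limit $r\,\xi_i^{\mu}$ is deterministic (so convergence in distribution upgrades to convergence in probability) and because all sums are finite at fixed $N$; the subtlety would genuinely bite only if one attempted to take $N\to\infty$ simultaneously, which is not required for the statement at hand.
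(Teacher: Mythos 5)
Your proposal is correct and takes essentially the same approach as the paper: the paper's own proof of this corollary is simply omitted (it defers to the shallow supervised Corollary~\ref{cor:S_SUP}, whose argument is the CLT/concentration of $\frac{1}{M}\sum_{a=1}^{M}\eta_i^{\mu,a}$ onto a Dirac mass), and your termwise concentration argument is precisely that route carried out in the dense setting. If anything, your write-up is more careful than the paper's own treatment, since it correctly locates the peak at $r\,\xi_i^{\mu}$ (rather than $\xi_i^{\mu}$, as the paper sloppily states) with the compensating factor $\R^{-p/2}\to r^{-p}$, handles coincident indices, and justifies commuting the $M\to\infty$ limit with the finite sums defining the normalisation.
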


\begin{proof}
    The proof is rather similar to the one for pairwise neural networks hence we skip it for the sake of {brevity}. 
%    Indeed, we consider all the terms written as $\dfrac{1}{M}\SOMMA{a_j=1}{M} \eta_i^{\mu, a_j}$, for any $i=1, \hdots, N$ and $j=1, \hdots p$ with fixed index $p$. \\
 %   For the CLT we have that this random variable for $M \to + \infty$ behaves as a gaussian one with $\xi_i^{\mu}$ mean, so as a Dirac's delta peaked on $\xi_i^\mu$ for $M \to +\infty$. Replacing each of them in the expression for any value of $p$ we find the thesis.     
\end{proof}

\subsubsection{Dense Hebbian unsupervised learning}
Along the same lines of the two previous subsections, we now generalize the maximum entropy procedure to obtain the Cost function of the unsupervised Hebbian leaning for dense neural networks. Let us start by the next
\begin{definition}
\label{def:empirUNSUP}
    In the $P-$dense {unsupervised Hebbian learning}, the empirical averages of $n_{\mu,a}^p$, $p =1, \hdots, P$ that we need to collect from the dataset read as 
\begin{align}
    \langle n_{\mu,a}^p \rangle_{\textnormal{exp}} &= \left(\dfrac{r}{\R N}\right)^p \sum_{i_1, \hdots, i_p=1}^{N} \eta_{i_1}^{\mu,a}\cdots \eta_{i_p}^{\mu,a} \langle \sigma_{i_1} \cdots \sigma_{i_p} \rangle_{\textnormal{exp}}, \ \ \ \mu=1, \hdots, K
\end{align}
where the index $a$ has not been saturated since we lack a teacher that provides this information.
\end{definition}
We can thus give the next
\begin{proposition}
\label{propMEP_UNSUP2}
    Given a dataset $\{ \bm \eta^{\mu,a} \}_{a=1,...,M}^{\mu=1,...,K}$ generated as prescribed in Def. \ref{def:example}, with $r \in [0,1]$ and $\R$ as defined in Def. \ref{def:unsup}, and the $N$ Ising neurons whose neural activities read $\sigma_i \in \{ -1, + 1 \}$ for all $i =1,...,N$, the less structured probability distribution that reproduces the empirical average fixed in Def. \ref{def:empirUNSUP} is 
    \begin{align}\label{etichetta}
       \mathbb{P}_{N,M}(\bm \sigma \vert \bm \eta) = \dfrac{1}{Z(\bm \eta)} \exp \left( \sum_{\mu=1}^K \sum_{a=1}^M \sum_{p=1}^P \lambda_p^{\mu,a} \left( \dfrac{r}{\R N}\right)^p \sum_{i_1, \hdots i_p=1}^{N} \eta_{i_1}^{\mu,a} \cdots \eta_{i_p}^{\mu,a} \sigma_{i_1} \cdots \sigma_{i_p} \right) 
    \end{align}
    where $Z(\bm \eta)$ is the partition function, $\lambda_p^{\mu,a}$  for all $\mu=1, \hdots, K$, {$p=1, \hdots, P$} and $a=1, \hdots, M$ are Lagrangian multipliers.
\end{proposition}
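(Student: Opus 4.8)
The plan is to repeat, in spirit, the maximum-entropy derivation already carried out for the shallow unsupervised case (Proposition \ref{propMEP_UNSUP}) and for dense storing (Proposition \ref{prop:MEXDense}), the only novelty being the bookkeeping of the many-body multi-indices and the fact that, lacking a teacher, the example label $a$ is never saturated and hence carries its own family of constraints. Concretely, I would introduce the Lagrange entropic functional
\begin{align}
S[\mathbb{P}_{N,M}(\s \vert \bm \eta)] =& -\sum_{\s} \mathbb{P}_{N,M}(\s \vert \bm \eta) \log \mathbb{P}_{N,M}(\s \vert \bm \eta) + \lambda_0\left(\sum_{\s}\mathbb{P}_{N,M}(\s \vert \bm \eta) - 1\right) \notag \\
&+ \sum_{\mu=1}^K \sum_{a=1}^M \sum_{p=1}^P \lambda_p^{\mu,a}\left(\sum_{\s}\mathbb{P}_{N,M}(\s \vert \bm \eta)\left(\frac{r}{\R N}\right)^p \sum_{i_1,\hdots,i_p=1}^{N}\eta_{i_1}^{\mu,a}\cdots\eta_{i_p}^{\mu,a}\sigma_{i_1}\cdots\sigma_{i_p} - \langle n_{\mu,a}^p\rangle_{\textnormal{exp}}\right),
\end{align}
where $\lambda_0$ enforces normalization and each $\lambda_p^{\mu,a}$ clamps the theoretical $p$-th moment of $n_{\mu,a}$ to the empirical value fixed in Definition \ref{def:empirUNSUP}.

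Next I would impose stationarity. The derivatives with respect to $\lambda_0$ and to each $\lambda_p^{\mu,a}$ simply return, respectively, the normalization of $\mathbb{P}_{N,M}$ and the moment-matching conditions, exactly as in the first three lines of \eqref{eq:distrMEP2}. The informative equation is the functional derivative with respect to the distribution itself, which yields
\begin{align}
-\log \mathbb{P}_{N,M}(\s \vert \bm \eta) - 1 + \lambda_0 + \sum_{\mu=1}^K\sum_{a=1}^M\sum_{p=1}^P \lambda_p^{\mu,a}\left(\frac{r}{\R N}\right)^p \sum_{i_1,\hdots,i_p=1}^{N}\eta_{i_1}^{\mu,a}\cdots\eta_{i_p}^{\mu,a}\sigma_{i_1}\cdots\sigma_{i_p} = 0.
\end{align}
Exponentiating and absorbing the constant by setting $Z(\bm \eta) = \exp(1-\lambda_0)$ gives precisely the Gibbs form \eqref{etichetta}; concavity of the Shannon entropy together with the linearity of the constraints guarantees that this stationary point is the unique maximiser, so that \eqref{etichetta} is indeed the least structured distribution compatible with the prescribed data.

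Finally, to link \eqref{etichetta} back to the unsupervised dense Cost function \eqref{eq:H_DUN}, I would select the multipliers as in the corollaries following the storing and supervised cases: taking $\lambda_p^{\mu,a}$ independent of $\mu$ and $a$ and vanishing for $p<P$, with the single surviving value chosen so that $\lambda_P^{\mu,a}(r/\R N)^P$ reproduces the prefactor $\beta/(P!\,\R^{P/2} M N^{P-1})$, collapses the exponent onto $-\beta H_{N,M}^{(\textnormal{Uns})}$. I do not expect any genuine obstacle here: the argument is identical in structure to the proofs already displayed, and the only points demanding care are the multi-index combinatorics and the rescaling factor $(r/\R N)^p$ (rather than the bare $1/N^p$), which is exactly what makes the constrained observable coincide with the normalised overlap $n_{\mu,a}$ raised to the $p$-th power. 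For this reason the computation is routine and could, as for the analogous Proposition \ref{propMEP_supsup}, be stated and its elementary verification omitted.
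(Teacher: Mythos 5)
Your proof is correct and is essentially the argument the paper intends: the paper omits this proof explicitly because it mirrors those of Propositions \ref{propMEP_UNSUP} and \ref{prop:MEXDense}, and your Lagrangian-functional/stationarity/exponentiation route (with one multiplier per triple $(\mu,a,p)$ and the normalization absorbed into $Z(\bm \eta)=\exp(1-\lambda_0)$) is precisely that mirroring, with the uniqueness remark via concavity a harmless bonus. Note only that your closing paragraph on selecting the multipliers (vanishing for $p<P$, one surviving value at $p=P$) pertains to the subsequent corollary rather than to the proposition itself, and differs from the paper's own corollary, which keeps all $p$-body terms with $\lambda_p^{\mu,a}=\beta\R/(2r^2MP!N^{P-p-1})$ so as to obtain the sum of $P$ dense networks rather than the single $P$-body Cost function \eqref{eq:H_DUN}.
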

As in the previous subsection, we omit the proof for the sake of {brevity}.

\begin{corollary}
    If we fix $\lambda_p^{\mu,a}$ in Proposition \ref{propMEP_UNSUP2} as 
    \begin{align}
        \lambda_p^{\mu,a} = \dfrac{\beta \R }{2 r^2 M P! N^{P-p-1}},
    \end{align}
    we obtain the expression of the probability distribution of the sum of finite $P$ dense neural networks for unsupervised Hebbian learning. 
\end{corollary}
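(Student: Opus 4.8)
The plan is to proceed exactly as in the dense storage and dense supervised corollaries that precede it (whose proofs are omitted because they reduce to mere substitutions), namely to insert the prescribed multipliers into the probability distribution \eqref{etichetta} of Proposition \ref{propMEP_UNSUP2} and rearrange the prefactors. No new tools are required beyond \eqref{etichetta}: the whole argument is a bookkeeping exercise on the powers of $N$, $r$ and $\R$.

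First I would substitute $\lambda_p^{\mu,a}=\dfrac{\beta\R}{2r^2MP!N^{P-p-1}}$ into the exponent of \eqref{etichetta} and collect it with the explicit factor $\left(\dfrac{r}{\R N}\right)^p$ carried by the $p$-th constraint. The two $N$-powers combine as $N^{-(P-p-1)}N^{-p}=N^{-(P-1)}$, so every summand inherits the common volume scaling $N^{P-1}$ of a $P$-dense network, while the $r,\R$ factors collapse to
\begin{equation}
\lambda_p^{\mu,a}\left(\frac{r}{\R N}\right)^{p}=\frac{\beta\,r^{\,p-2}}{2\,M\,P!\,\R^{\,p-1}\,N^{P-1}}.
\end{equation}
Writing the resulting exponent as $-\beta$ times a Hamiltonian then yields
\begin{equation}
\mathbb{P}_{N,M}(\s \vert \bm \eta)=\frac{1}{Z(\bm\eta)}\exp\left(-\beta\sum_{p=1}^{P}H^{(p)}_{N,M}(\s\vert\bm\eta)\right),
\end{equation}
\begin{equation}
H^{(p)}_{N,M}(\s\vert\bm\eta)=-\frac{r^{\,p-2}}{2\,M\,P!\,\R^{\,p-1}\,N^{P-1}}\sum_{\mu=1}^{K}\sum_{a=1}^{M}\sum_{i_1,\dots,i_p=1}^{N}\eta_{i_1}^{\mu,a}\cdots\eta_{i_p}^{\mu,a}\,\sigma_{i_1}\cdots\sigma_{i_p},
\end{equation}
which is manifestly a finite sum, over $p=1,\dots,P$, of $p$-body unsupervised Hebbian couplings built from the single-index example combination $\eta_{i_1}^{\mu,a}\cdots\eta_{i_p}^{\mu,a}$ (no teacher, hence a single label index $a$ throughout, in contrast with the supervised case). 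This is precisely the Boltzmann--Gibbs weight of the sum of finitely many dense unsupervised networks, which is the assertion.

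The one point demanding care — and the only place an error could hide — is matching the normalisations: checking that the leading $p=P$ term of $\sum_p H^{(p)}$ reproduces, under the convention adopted in \eqref{eq:H_DUN}, the genuine $P$-dense unsupervised Cost function, while each lower-order term ($p<P$) carries an extra $N^{-(P-p)}$ relative to its natural $p$-dense normalisation $N^{p-1}$ and is therefore subleading in the thermodynamic limit (only the $p=P$ contribution stays extensive). I would then close by invoking the general remark already used in the shallow case: because of the logarithm in the Shannon entropy, any maximiser of the constrained functional is automatically of Boltzmann--Gibbs form, so identifying the exponent with $-\beta\sum_{p=1}^P H^{(p)}_{N,M}$ suffices. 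As in the supervised and storage settings there is no conceptual obstacle; the proof is the routine substitution, and its only subtlety is the $r,\R$-bookkeeping described above.
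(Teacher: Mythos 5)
Your proposal is correct and coincides with what the paper intends: the paper states this corollary without proof precisely because it is the routine substitution you carry out, and your bookkeeping $\lambda_p^{\mu,a}\left(\frac{r}{\R N}\right)^{p}=\frac{\beta\,r^{p-2}}{2\,M\,P!\,\R^{p-1}N^{P-1}}$, followed by reading the exponent as $-\beta$ times a sum over $p=1,\dots,P$ of $p$-body unsupervised Hebbian Hamiltonians, is exactly the intended argument. The one caveat you flag — that the $p=P$ term carries $\R^{P-1}$ in place of the $\R^{P/2}$ of eq.~\eqref{eq:H_DUN}, the two agreeing only when $\R\to r^{2}$ (i.e. $M\to\infty$) — is an inconsistency in the paper's own normalization conventions rather than a flaw in your derivation, so it does not affect the validity of your proof.
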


\begin{corollary}
\label{cor:MEP_UNSUPunsup}
In the asymptotic limit of {\em big data}, namely for $M \to +\infty$,  the Boltzmann-Gibbs  probability distribution $\mathbb{P}_{N,M}(\s \vert \bm \eta)$  related to the $P-$dense unsupervised Hebbian learning (see eq.~\eqref{etichetta}) approaches the Boltzmann-Gibbs distribution of the sum of $P$ dense Hebbian neural networks for pattern storage, namely
\begin{equation}
    \lim_{M \to \infty} \mathbb{P}_{N,M}(\s \vert \bm \eta) = \mathbb{P}_{N}(\s \vert \bm \xi) = \frac{1}{Z(\bm \xi)}\exp \left[ \sum_{\mu=1}^K \sum_{p=1}^P \dfrac{\beta}{P!N^{P-1}} \sum_{i_1, \hdots, i_p=1}^{N} \xi_{i_1}^\mu  \cdots \xi_{i_p}^\mu \sigma_{i_1} \hdots \sigma_{i_p}\right].
    \label{eq:Hebb_stordenseunsup}
\end{equation}   
%where $Z(\bm \xi)$ is the partition function and $\bm \xi^\mu$, for $\mu =1,...,K$, are the i.i.d. Rademacher variables accounting for the patterns. 
\end{corollary}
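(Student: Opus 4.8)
The plan is to replicate, in the dense unsupervised setting, the concentration-of-measure argument already used for the pairwise unsupervised case (Corollary~\ref{cor:MEP_UNSUP} and Lemma~\ref{lemma:dert_UNSUP}) and for the dense supervised case (Corollary~\ref{cor:S_SUP_dense}). First I would insert the Lagrange multipliers $\lambda_p^{\mu,a}=\beta\R/(2r^2 M P! N^{P-p-1})$ prescribed by the preceding corollary into eq.~\eqref{etichetta}, so that $\mathbb{P}_{N,M}(\s\vert\bm\eta)$ is exactly the Boltzmann--Gibbs weight of the unsupervised dense Hamiltonian eq.~\eqref{eq:H_DUN}. It then suffices to show that, as $M\to\infty$, the exponent of this weight converges to that of the dense storage distribution eq.~\eqref{eq:Hebb_stordenseunsup}; convergence $Z(\bm\eta)\to Z(\bm\xi)$ follows by continuity of the finite sum over $\s$.

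The key step is a law-of-large-numbers statement for the empirical average over examples. For fixed $\mu$ and a tuple of \emph{distinct} indices, define $X^{\mu,a}_{i_1,\dots,i_P}:=\eta_{i_1}^{\mu,a}\cdots\eta_{i_P}^{\mu,a}$. Because the factors carry distinct site indices they are independent under eq.~\eqref{eq:Bernoulli}, so $\mathbb{E}\!\left[X^{\mu,a}_{i_1,\dots,i_P}\,\middle|\,\bm\xi\right]=\prod_{k=1}^{P}\big(r\,\xi_{i_k}^\mu\big)=r^{P}\,\xi_{i_1}^\mu\cdots\xi_{i_P}^\mu$, and the variables are i.i.d.\ in $a$ with finite variance (being valued in $\{-1,+1\}$). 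Hence $\tfrac1M\sum_{a=1}^{M}X^{\mu,a}_{i_1,\dots,i_P}$ concentrates, its law tending to a Dirac mass at $r^{P}\xi_{i_1}^\mu\cdots\xi_{i_P}^\mu$. Since $\R^{P/2}=\big(r^2+(1-r^2)/M\big)^{P/2}\to r^{P}$, the normalized coefficient $\R^{-P/2}\,\tfrac1M\sum_a X^{\mu,a}_{i_1,\dots,i_P}$ tends to $\xi_{i_1}^\mu\cdots\xi_{i_P}^\mu$, recovering the storage weight term by term and establishing eq.~\eqref{eq:Hebb_stordenseunsup}.

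The main obstacle is twofold. Contrary to the supervised protocol, here all $P$ factors share the \emph{same} example index $a$, so the limit must be taken on the single empirical average of the product $X^{\mu,a}$ rather than on $P$ separate sums; the finite-variance check above is what legitimizes this. More delicate are the \emph{coincident} indices: when two site indices agree, a square $(\eta_{i}^{\mu,a})^2=1$ collapses a pair of factors, so for such a tuple the normalized coefficient tends to $r^{-2}$ times the corresponding storage value rather than to the value itself, because the identity $(\xi_i^\mu)^2=1$ on the storage side is not matched by a compensating factor $r^{2}$. I would dispose of these diagonal tuples by splitting $\sum_{i_1,\dots,i_P}$ into its distinct and coincident parts: the coincident part counts only $O(N^{P-1})$ tuples, so after the $N^{-(P-1)}$ normalization in eq.~\eqref{eq:H_DUN} it is $O(1)$ against the $O(N)$ off-diagonal contribution and thus vanishes in the thermodynamic limit, exactly as the analogous diagonal terms are discarded in the pairwise treatment of Lemma~\ref{lemma:dert_UNSUP}.
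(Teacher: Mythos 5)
Your core argument coincides with the paper's intended one: the paper actually states Corollary \ref{cor:MEP_UNSUPunsup} with no proof at all, leaving the reader to transpose the concentration argument of Lemma \ref{lemma:dert1}, Lemma \ref{lemma:dert_UNSUP} and Corollary \ref{cor:S_SUP_dense}, and your law-of-large-numbers step is exactly that transposition --- executed with the correct constants, since your conditional mean $\mathbb{E}[\eta_i^{\mu,a}\vert\bm\xi]=r\,\xi_i^\mu$ together with the compensation $\R^{P/2}\to r^P$ is right, whereas the paper's own pairwise lemma writes $\mathbb{E}\big(\tfrac1M\sum_a\eta_i^{\mu,a}\vert\bm\xi\big)=\xi_i^\mu$, which is off by the factor $r$. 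So on the distinct-index part your proof is the paper's proof, done more carefully.

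The genuine addition, and also the one soft spot, is your treatment of coincident site indices. You are right that the problem exists and that the paper ignores it: in \eqref{eq:H_DUN} and \eqref{eq:Hebb_stordenseunsup} the site sums are unrestricted (unlike the pairwise Hamiltonians, which carry $i<j$, so that the ``distinct indices'' remark in Lemma \ref{lemma:dert_UNSUP} is legitimate there), and on a tuple with a repeated index the normalized coefficient converges to $r^{-2}\xi_{i_3}^\mu\cdots\xi_{i_P}^\mu$ rather than to the storage value. But your disposal of these tuples does not prove the corollary as literally stated, which is a finite-$N$ identity ($M\to\infty$ only): your argument invokes $N\to\infty$, and ``$O(1)$ against $O(N)$'' is a statement about free-energy densities, not about Gibbs measures --- an additive, $\sigma$-dependent term of order one per pattern in the exponent (schematically $\propto(r^{-2}-1)\sum_\mu m_\mu^{P-2}$ from single coincidences) does not drop out of the probability distribution at any finite $N$, and summed over $\mu$ it is $O(K)$, which in the dense high-load regime $K\propto N^{P-1}$ is even of the same order as the off-diagonal signal. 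The honest statement your argument supports is either (i) the corollary holds exactly at finite $N$ once the sums in \eqref{eq:H_DUN} and \eqref{eq:Hebb_stordenseunsup} are restricted to distinct tuples, which your LLN step then proves completely, or (ii) with unrestricted sums it holds only modulo the diagonal discrepancy you identified, i.e.\ the literal identity fails for $P\ge 4$ and $r<1$. You should say this explicitly instead of folding both cases into ``vanishes in the thermodynamic limit'': that final step is the only place where your proof claims more than it delivers, and the defect it papers over is really a defect of the statement itself.
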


\subsection{Recovering the free energy of dense Hebbian storage in the \em{big data} limit}\label{Guerralike_dense} 

Plan of this Section is to prove  that the whole free energy (and not only the Cost or Loss functions) related to dense Hebbian learning approaches the free energy of the dense Hebbian storage  in the \em{big data} limit: this Section generalizes to many body interactions results collected in Section \ref{Guerralike}. To this task, we rely again upon Guerra's interpolation technique \cite{GuerraNN}.  

\subsubsection{Supervised dense Hebbian learning}

As the ultimate purpose here is to show that the free energy related to dense Hebbian learning collapses on that of dense Hebbian storing, it turns useful to write explicitly the next
\begin{definition}
\label{def:supfree}  (Free energy)
The free energy of the $P-$dense supervised Hebbian learning, whose Cost function is provided in eq.~\eqref{eq:H_DSN},  at finite sizes $N,M$, reads as
\begin{equation}\label{FreeSuper}
\mathcal{A}_{N,M}(\alpha,\beta) = \dfrac{1}{N} \mathbb{E} \log \sum_{\s} \exp\left( \dfrac{\b}{2N^{P-1}\R^{P/2}M^P}\sum_{\mu=1}^K \sum_{i_1,\hdots , i_P=1}^{N} \sum_{a_1, \hdots, a_P=1}^{M} \eta_{i_1}^{\mu,a_1} \cdots \eta_{i_P}^{\mu,a_P} \right),
\end{equation}
where $\b = \dfrac{2\beta}{P!}$ and $\mathbb{E}=\mathbb{E}_{\bm \xi}\mathbb{E}_{(\bm \eta \vert \bm \xi)}$.\\
The free energy of the dense neural network for Hebbian storage, whose Cost function is provided in eq.~\eqref{eq:H_DHN}, at finite size N is 
\begin{align}
    \mathcal{A}_{N}(\alpha,\beta)= \dfrac{1}{N} \mathbb{E} \log \sum_{\s} &\exp \left(\dfrac{\b}{2N^{P-1}} \sum_{\mu=1}^K \sum_{i_1, \hdots, i_P=1}^{N} \xi_{i_1}^\mu \cdots \xi_{i_P}^\mu \sigma_{i_1} \cdots \sigma_{i_P}\right),
\end{align}
where $\mathbb{E}$ is the average w.r.t. $\bm \xi$.
\end{definition}

We can now introduce the following interpolating free energy:
\begin{definition}{(Interpolating free energy)}\\
Considering $t \in [0,1]$, the $K$ patterns as introduced in Def. \ref{DefinitionUno}, their noisy dataset $\{ \bm \eta^{\mu,a} \}_{a=1,...,M}^{\mu=1,...,K}$ generated as prescribed in Def. \ref{def:example}, the Cost functions of dense Hebbian storing (see eq.~\eqref{eq:H_DHN}) and dense supervised Hebbian learning (see eq.~\eqref{eq:H_DSN}), the interpolating free energy is introduced as 
\begin{align}\label{interpo_dense}
    \mathcal{A}_{N,M}(t\vert \alpha,\beta)= \dfrac{1}{N} \mathbb{E} \log \sum_{\s} &\exp \left(t \dfrac{\b}{2N^{P-1}} \sum_{\mu=1}^K \sum_{i_1, \hdots, i_P=1}^{N} \xi_{i_1}^\mu \cdots \xi_{i_P}^\mu \sigma_{i_1} \cdots \sigma_{i_P}\right. \notag \\
    &\left.+ (1-t)\dfrac{\b}{2N^{P-1}\R^{P/2}M^P}\sum_{\mu=1}^K \sum_{i_1,\hdots , i_P=1}^{N} \sum_{a_1, \hdots, a_P=1}^{M} \eta_{i_1}^{\mu,a_1} \cdots \eta_{i_P}^{\mu,a_P} \right) \notag \\
    &\hspace{-2.2cm}=\dfrac{1}{N} \mathbb{E}\log Z_{N,M}(t \vert \bm \xi, \bm \eta) = \dfrac{1}{N} \mathbb{E}\log \sum_{\s} B(t \vert \s, \bm \xi, \bm \eta).
\end{align}
where $\R$ as defined in Def. \ref{def:unsup}, $Z_{N,M}(t \vert \bm \xi, \bm \eta)$ is the interpolating partition function, $B(t \vert \s, \bm \xi, \bm \eta)$ the related Boltzmann factor and $\mathbb{E}$ is the expectation w.r.t the variables $\bm \xi$ and $\bm \eta$.
\end{definition}
We stress that, for $t=0$, the above expression recovers the expression of free energy related to the supervised Hebbian learning, whereas for  $t=1$ it reproduces the expression of the free energy of the dense Hebbian neural network. 

% \begin{remark}
%     A generalized average follows from this generalized measure linked to the Boltzmann factor $B(t \vert \bm \sigma, \bm \xi, \bm \eta)$ as
% \beq
% 	\omega_{t} (\cdot) \coloneqq  \frac{1}{ Z_{N, M}( t \vert \bm \xi, \boldsymbol{\eta})} \, \sum_{\boldsymbol \sigma}~ \cdot ~   B_{N, M} 
%  ( t \vert \s, \bm \xi, \boldsymbol{\eta})
% 	\eeq
% 	and
% \beq
% \langle \cdot   \rangle  \coloneqq \mathbb E [ \omega_{t} ( \cdot) ].
% \eeq
% \end{remark}
% Note that $\omega_t(.)$ returns the Boltzmann average of the Hebbian storing when evaluated at $t=1$ while it returns the Boltzmann average of the Hebbian supervised learning when evaluated at $t=0$. 
% \newline
With these premises, we can now state the main theorem of this Section as
\begin{theorem}
\label{prop:limM_supdense}
In the asymptotic (\em{big data}) limit, $M \to +\infty$, the expression of the free energy related to the (dense) supervised Hebbian learning approaches that of the (dense) Hebbian storage.
\end{theorem}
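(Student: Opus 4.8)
The plan is to transcribe the Guerra-interpolation argument already used in the pairwise supervised case (Theorem~\ref{prop:limM_sup}) to the $P$-body setting: I would prove a one-line lemma stating that the $t$-derivative of the interpolating free energy~\eqref{interpo_dense} vanishes as $M\to\infty$, and then integrate it through the Fundamental Theorem of Calculus. Since \eqref{interpo_dense} reduces to the dense supervised-learning free energy~\eqref{FreeSuper} at $t=0$ and to the dense storage free energy at $t=1$, the vanishing of $d_t\mathcal{A}_{N,M}(t\vert\alpha,\beta)$ immediately yields $\mathcal{A}_{N}(t{=}1\vert\alpha,\beta)=\mathcal{A}_{N}(t{=}0\vert\alpha,\beta)$, i.e. the two free energies coincide at every finite $N$ in the big data limit.

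First I would differentiate~\eqref{interpo_dense} in $t$, which brings down inside the generalized Boltzmann average $\omega_t$ the difference of the storage and learning interactions. Using the identities $\frac{1}{N^{P-1}}\sum_{i_1,\ldots,i_P=1}^{N}\xi_{i_1}^\mu\cdots\xi_{i_P}^\mu\sigma_{i_1}\cdots\sigma_{i_P}=N\,m_\mu^P$ and $\frac{1}{N^{P-1}M^P}\sum_{i_1,\ldots,i_P=1}^{N}\sum_{a_1,\ldots,a_P=1}^{M}\eta_{i_1}^{\mu,a_1}\cdots\eta_{i_P}^{\mu,a_P}\sigma_{i_1}\cdots\sigma_{i_P}=N\,\tilde n_\mu^P$, this collapses to
\begin{equation}
d_t\mathcal{A}_{N,M}(t\vert\alpha,\beta)=\frac{\b}{2}\sum_{\mu=1}^K\left(\langle m_\mu^P\rangle-\frac{1}{\R^{P/2}}\langle\tilde n_\mu^P\rangle\right),
\end{equation}
the $P$-body analogue of eq.~\eqref{eq:deriv}. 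The lemma then reduces to showing $\R^{-P/2}\tilde n_\mu^P\to m_\mu^P$ as $M\to\infty$, which I would obtain via the CLT exactly as in Lemma~\ref{lemma:dert1}. Because the teacher keeps the replica labels separated, the learning interaction factorizes as $\prod_{k=1}^{P}\big(\frac{1}{M}\sum_{a=1}^{M}\eta_{i_k}^{\mu,a}\big)$, a product of single-site empirical means each concentrating on $r\,\xi_{i_k}^\mu$ with fluctuation variance $(1-r^2)/M$. Since every factor concentrates---whether the site indices are distinct or coincide---the product tends to $r^P\,\xi_{i_1}^\mu\cdots\xi_{i_P}^\mu$, and dividing by $\R^{P/2}\to r^P$ reconstructs $m_\mu^P$, forcing the bracket to zero.

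Finally I would write $\mathcal{A}_{N,M}(1\vert\alpha,\beta)=\mathcal{A}_{N,M}(0\vert\alpha,\beta)+\int_0^1 d_s\mathcal{A}_{N,M}(s\vert\alpha,\beta)\,ds$ and send $M\to\infty$, interchanging the limit with the integral by dominated convergence: the integrand is bounded uniformly in $t$ and $M$ because $|m_\mu|,|\tilde n_\mu|\le 1$ and $\R\ge r^2$. The integral then vanishes and the theorem follows at finite $N$. I expect the only genuine obstacle to be the rigorous justification of the concentration \emph{inside} the nested quenched and Gibbs averages $\langle\,\cdot\,\rangle=\mathbb{E}\,\omega_t(\cdot)$, since the CLT limit of the $P$-fold product of empirical means must be shown to commute with both $\omega_t$ and $\mathbb{E}$. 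The cleanest route is a telescoping estimate that bounds $\big|\R^{-P/2}\langle\tilde n_\mu^P\rangle-\langle m_\mu^P\rangle\big|$ by a sum of $P$ terms, each controlled by the $O(1/\sqrt{M})$ fluctuation of a single empirical site-mean and uniform in $\s$; the boundedness of all order parameters renders this interchange harmless, so the step is technical rather than deep.
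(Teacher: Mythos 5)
Your proposal follows essentially the same route as the paper's own proof: the identical interpolating free energy \eqref{interpo_dense}, the identical key lemma that $d_t\mathcal{A}_{N,M}(t\vert\alpha,\beta)$ collapses to $\frac{\b}{2}\sum_{\mu=1}^K\left(\langle m_\mu^P\rangle-\R^{-P/2}\langle\tilde n_\mu^P\rangle\right)$ and vanishes as $M\to\infty$ by CLT concentration of the empirical means $\frac{1}{M}\sum_{a=1}^M\eta_{i}^{\mu,a}$, and the identical closing step via the Fundamental Theorem of Calculus with dominated convergence to exchange the big data limit and the $t$-integral. If anything, your write-up is slightly more careful than the paper's: you correctly state that each empirical mean concentrates on $r\,\xi_{i_k}^\mu$ so that dividing by $\R^{P/2}\to r^P$ restores the normalization (the paper's phrasing of the analogous pairwise lemma is internally sloppy on this factor), and you explicitly flag the interchange of the concentration limit with the nested Gibbs and quenched averages, which the paper passes over in silence.
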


To prove the Theorem, we need to put beforehand the following 
\begin{lemma}
\label{lemma:dert}
In the limit $M \to +\infty$, the derivative of the interpolating free energy w.r.t. $t$ is null, namely 
\begin{align}
   \lim_{M \to \infty} \dfrac{d\mathcal{A}_{N,M}(t\vert \alpha,\beta)}{dt} = \lim_{M \to \infty}  \dfrac{\b}{2} \sum_{\mu=1}^K \left[ \l m_\mu^P \r - \dfrac{1}{\R^{P/2}} \l \tilde n_\mu^{P} \r \right]= 0.
\end{align}
\end{lemma}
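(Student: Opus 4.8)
The plan is to mirror the shallow supervised argument of Lemma~\ref{lemma:dert1}, lifting the quadratic order parameters to their $P$-th powers and exploiting the fact that, with a teacher, the $P$-fold product over examples factorizes exactly.

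First I would differentiate the interpolating free energy \eqref{interpo_dense} in $t$. The derivative pulls down, under the generalized average $\l\cdot\r=\mathbb{E}\,\omega_t(\cdot)$, the difference of the two exponent contributions. Using the definitions \eqref{eq:mmu} and \eqref{PdO-Sup} one recognizes $\tfrac{1}{N^{P}}\sum_{i_1,\hdots,i_P}\xi_{i_1}^\mu\cdots\xi_{i_P}^\mu\sigma_{i_1}\cdots\sigma_{i_P}=m_\mu^P$ and $\tfrac{1}{N^{P}M^{P}}\sum_{i_1,\hdots,i_P}\sum_{a_1,\hdots,a_P}\eta_{i_1}^{\mu,a_1}\cdots\eta_{i_P}^{\mu,a_P}\sigma_{i_1}\cdots\sigma_{i_P}=\tilde n_\mu^P$; the overall normalisation $\tfrac1N\cdot N^{-(P-1)}=N^{-P}$ then exactly absorbs the $N^{P}$ generated by the unrestricted index sums, leaving $d_t\mathcal{A}_{N,M}(t\vert\alpha,\beta)=\tfrac{\b}{2}\sum_{\mu=1}^K\big(\l m_\mu^P\r-\R^{-P/2}\l \tilde n_\mu^P\r\big)$, i.e. the first equality of the statement.

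Next I would take the \emph{big data} limit by concentration of measure. Setting $\bar\eta_i^\mu:=\tfrac{1}{M}\sum_{a=1}^M\eta_i^{\mu,a}$, the law of large numbers gives $\bar\eta_i^\mu\to r\,\xi_i^\mu$ for each fixed $i$ as $M\to\infty$ (the CLT furnishing vanishing Gaussian fluctuations of variance $(1-r^2)/M$), the limit being deterministic since $N$ is held finite. The point specific to the supervised protocol is that the teacher keeps the indices $a_1,\hdots,a_P$ separate, so $\tfrac{1}{M^P}\sum_{a_1,\hdots,a_P}\eta_{i_1}^{\mu,a_1}\cdots\eta_{i_P}^{\mu,a_P}=\bar\eta_{i_1}^\mu\cdots\bar\eta_{i_P}^\mu$ factorizes with no diagonal corrections; hence $\tilde n_\mu=\tfrac{1}{N}\sum_i\bar\eta_i^\mu\si\to r\,m_\mu$ and $\tilde n_\mu^P\to r^P m_\mu^P$. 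Since $\R=r^2+(1-r^2)/M\to r^2$, also $\R^{P/2}\to r^P$, so $\R^{-P/2}\tilde n_\mu^P\to m_\mu^P$ and every summand of the derivative tends to zero.

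Finally I would justify exchanging the limit with $\mathbb{E}$ and $\omega_t$ by dominated convergence, exactly as in the shallow case: the neurons being Ising, $m_\mu$ and $\tilde n_\mu$ are uniformly bounded for fixed $N$ and the interpolating Boltzmann weight is controlled. The hard part will be that $\omega_t$ itself depends on $M$ through the dataset, so the replacement $\bar\eta_i^\mu\to r\xi_i^\mu$ must be performed simultaneously inside the observable and inside the measure; the clean way to see that this is harmless is to note that in the limit both contributions to the interpolating Hamiltonian collapse, term by term, onto the dense storage Hamiltonian \eqref{eq:H_DHN}, so their difference --- and with it $d_t\mathcal{A}_{N,M}$ --- vanishes identically. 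The bookkeeping of repeated indices among the $i_k$ in the $P$-fold products is then routine, the index-wise concentration holding regardless of coincidences.
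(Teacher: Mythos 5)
Your proof is correct and follows essentially the same route as the paper's: differentiate the interpolation to obtain $\frac{\b}{2}\sum_{\mu=1}^K\left(\langle m_\mu^P\rangle - \R^{-P/2}\langle\tilde n_\mu^P\rangle\right)$, then let the example means concentrate so that $\tilde n_\mu \to r\, m_\mu$ while $\R^{P/2}\to r^P$, making each summand vanish. If anything, your bookkeeping is more careful than the paper's terse appeal to the CLT, since you track the concentration onto $r\,\xi_i^\mu$ (rather than $\xi_i^\mu$) explicitly and spell out the dominated-convergence/uniformity-in-$\bm\sigma$ argument needed to pass the limit through $\mathbb{E}$ and $\omega_t$.
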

\begin{proof}
    Let us start with the computation of the derivative: 
    \begin{align}
        d_t \mathcal{A}_{N,M}(t\vert \alpha,\beta) &= \dfrac{1}{N} \mathbb{E}\Bigg[\dfrac{1}{ Z_{N,M}(t \vert \s, \bm \xi, \bm \eta)}\sum_{\s} B(t \vert \s, \bm \xi, \bm \eta)\left(  \dfrac{\b}{2N^{P-1}} \sum_{\mu=1}^K \sum_{i_1, \hdots, i_P=1}^{N} \xi_{i_1}^\mu \cdots \xi_{i_P}^\mu \sigma_{i_1} \cdots \sigma_{i_P} \right. \notag \\
        &\left.- \dfrac{\b}{2N^{P-1}\R^{P/2}M^P}\sum_{\mu=1}^K \sum_{i_1,\hdots , i_P=1}^{N} \sum_{a_1, \hdots, a_P=1}^{M} \eta_{i_1}^{\mu,a_1} \cdots \eta_{i_P}^{\mu,a_P} \right)\Bigg] \notag \\
        &= \dfrac{\b}{2} \sum_{\mu=1}^K \left(\langle m_\mu^P \rangle - \dfrac{1}{\R^{P/2}} \langle \tilde n_\mu^P \rangle  \right).
        \label{eq:deriv_dense}
    \end{align}
    As already done in the proof of Corollary \ref{cor:S_SUP_dense}, we apply the CLT on the random variable $\dfrac{1}{M} \SOMMA{a=1}{M} \eta_i^{\mu,a}$, therefore we can trace back $\tilde n_\mu$ to $m_\mu$ so to reach the thesis. 
    \end{proof}
    Now we are ready to prove the Theorem.
\begin{proof}(of Theorem \ref{prop:limM_supdense})
We apply the Fundamental Theorem of Calculus over the interpolating variable $t$ while $M \to \infty$: 
\begin{align}
    \mathcal{A}_{N,M}(t=1\vert\alpha,\beta)= \mathcal{A}_{N,M}(t=0\vert \alpha,\beta) + \int_0^1 \dfrac{d \mathcal{A}_{N,M}(t\vert \alpha,\beta)}{dt}\vert_{t=s} ds.
\end{align}
where the commutativity among the \em{big data} limit and the integral over the interpolation parameter is justified by dominated convergence.\\ 
Since for Lemma \ref{lemma:dert} we have that, in the $M \to \infty$, limit, the derivative of the interpolating free energy w.r.t. $t$ becomes zero, this implies 
\begin{align}
    \mathcal{A}_{N}(t=1\vert\alpha,\beta) = \mathcal{A}_{N}(t=0\vert\alpha,\beta),
\end{align}
so the equivalence is proved. 
\end{proof}
Note that this results holds also at finite network sizes $N$, hence the reason for the subscript $N$ in $\mathcal{A}_{N}$.

\subsubsection{Unsupervised Hebbian learning}
Purpose of this Section is to prove that also the free energy related to the (dense) unsupervised Hebbian protocol, in the \em{big data} limit, approaches that of the corresponding (dense) Hebbian storage. Since the strategy we aim to use is quite the same of the previous subsection, we report only the main definitions and results. 
\begin{definition} (Free energy)
The free energy of the $P-$dense unsupervised Hebbian learning, whose Cost function is provided in eq.~\eqref{eq:H_DUN},  at finite sizes $N,M$, reads as
\begin{equation}\label{freeunsup_dense}
\mathcal{A}_{N,M}(\alpha,\beta) =  \dfrac{1}{N} \mathbb{E} \log \sum_{\s} \exp \left(\dfrac{\b}{2 \R^{P/2}\,M\,N^{P-1}}\SOMMA{\mu=1}{K}\SOMMA{a=1}{M}\SOMMA{i_1,\hdots,i_P=1}{N}\eta^{\mu\,,a}_{i_1}\cdots\eta^{\mu\,,a}_{i_P}\sigma_{i_1}\cdots\sigma_{i_P}\right),
\end{equation}
where $\mathbb{E}=\mathbb{E}_{\bm \xi}\mathbb{E}_{(\bm \eta \vert \bm \xi)}$ and $\b$ as in Def. \ref{def:supfree}.\\
The free energy of the dense neural network for Hebbian storage, whose Cost function is provided in eq.~\eqref{eq:H_DHN}, at finite size N is 
\begin{align}\label{zelo1}
    \mathcal{A}_{N}(\alpha,\beta)= \dfrac{1}{N} \mathbb{E} \log \sum_{\s} &\exp \left(\dfrac{\b}{2N^{P-1}} \sum_{\mu=1}^K \sum_{i_1, \hdots, i_P=1}^{N} \xi_{i_1}^\mu \cdots \xi_{i_P}^\mu \sigma_{i_1} \cdots \sigma_{i_P}\right),
\end{align}
where $\mathbb{E}$ is the average w.r.t. $\bm \xi$.
\end{definition}
We can thus introduce the interpolating free energy as stated by the next
\begin{definition}{(Interpolating free energy)}\\
Considering $t \in [0,1]$, the $K$ patterns as introduced in Def. \ref{DefinitionUno}, their noisy dataset $\{ \bm \eta^{\mu,a} \}_{a=1,...,M}^{\mu=1,...,K}$ generated as prescribed in Def. \ref{def:example}, the Cost functions of Hebbian storing (see eq.~\eqref{eq:H_DHN}) and unsupervised Hebbian learning (see eq.~\eqref{eq:H_DUN}), the interpolating free energy is introduced as
\begin{align}
    \mathcal{A}_{N,M}(t\vert \alpha,\beta) = \dfrac{1}{N} \mathbb{E} \log \sum_{\s} &\exp \left(t\dfrac{\b}{2N^{P-1}} \sum_{\mu=1}^K \sum_{i_1, \hdots, i_P=1}^{N} \xi_{i_1}^\mu \cdots \xi_{i_P}^\mu \sigma_{i_1} \cdots \sigma_{i_P}\right. \notag \\
    &\left.+ (1-t)\dfrac{\b}{2 \R^{P/2}\,M\,N^{P-1}}\SOMMA{\mu=1}{K}\SOMMA{a=1}{M}\SOMMA{i_1,\hdots,i_P=1}{N}\eta^{\mu\,,a}_{i_1}\cdots\eta^{\mu\,,a}_{i_P}\sigma_{i_1}\cdots\sigma_{i_P}\right) \notag \\
    &\hspace{-2.2cm}=\dfrac{1}{N} \mathbb{E}\log Z_{N,M}(t \vert \bm \xi, \bm \eta)
    %= \dfrac{1}{N} \mathbb{E}\log \sum_{\s} B(t \vert \s, \bm \xi, \bm \eta).
\end{align}
where $\R$ as defined in Def. \ref{def:unsup} and $Z_{N,M}(t \vert  \bm \xi, \bm \eta)$ is the interpolating partition function. 
%and $B(t \vert \s, \bm \xi, \bm \eta)$ the Boltzmann factor. 
\end{definition}
We stress that for $t=0$ the above interpolating free energy reduces to its expression related to dense unsupervised Hebbian learning  (see  eq.~\eqref{freeunsup_dense}), whereas for  $t=1$ it approaches the expression related to dense Hebbian storing (see eq.~\eqref{zelo1}).

\begin{theorem}
\label{prop:limM_UNSUP_dense}
In the asymptotic (\em{big data}) limit, i.e., $M \to +\infty$, the expression of the free energy related to the dense unsupervised Hebbian learning approaches that of dense Hebbian storing.
\end{theorem}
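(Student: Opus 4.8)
The plan is to reproduce verbatim the interpolation scheme already used for the dense supervised protocol (Theorem~\ref{prop:limM_supdense}): first establish an auxiliary lemma stating that the $t$-derivative of the interpolating free energy vanishes in the big-data limit, and then conclude by the Fundamental Theorem of Calculus. Concretely, the lemma to be proved is
\begin{align}
    \dfrac{d\mathcal{A}_{N,M}(t\vert \alpha,\beta)}{dt} = \dfrac{\b}{2}\sum_{\mu=1}^K \left( \l m_\mu^P \r - \dfrac{1}{\R^{P/2}M}\sum_{a=1}^M \l n_{\mu,a}^P \r \right) \xrightarrow{M\to \infty} 0,
\end{align}
which is the exact unsupervised analogue of Lemma~\ref{lemma:dert}, with the single generalized magnetization $\tilde n_\mu$ replaced by the collection $\{n_{\mu,a}\}_{a=1}^M$ averaged over the examples.

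First I would compute the derivative explicitly. Differentiating the interpolating exponent in $t$ brings down the difference between the storage term and the unsupervised learning term inside the bracket $\omega_t(\cdot)$; using the elementary identities $\sum_{i_1,\hdots,i_P=1}^N \xi_{i_1}^\mu\cdots\xi_{i_P}^\mu \sigma_{i_1}\cdots\sigma_{i_P}=(N m_\mu)^P$ and $\sum_{i_1,\hdots,i_P=1}^N \eta_{i_1}^{\mu,a}\cdots\eta_{i_P}^{\mu,a}\sigma_{i_1}\cdots\sigma_{i_P}=(N n_{\mu,a})^P$, together with the prefactors $\b/(2N^{P-1})$ and $\b/(2\R^{P/2}MN^{P-1})$, the overall $1/N$ cancels and the expression collapses onto the right-hand side of the lemma. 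This step is purely algebraic and identical in spirit to the computation leading to eq.~\eqref{eq:deriv_dense} in the supervised case.

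The substance lies in the vanishing of the bracket as $M\to\infty$, for which I would invoke the same law-of-large-numbers argument as in Lemma~\ref{lemma:dert_UNSUP} and Corollary~\ref{cor:MEP_UNSUPunsup}. Writing $\frac{1}{M}\sum_{a=1}^M n_{\mu,a}^P = \frac{1}{N^P}\sum_{i_1,\hdots,i_P=1}^N \big(\frac{1}{M}\sum_{a=1}^M \eta_{i_1}^{\mu,a}\cdots\eta_{i_P}^{\mu,a}\big)\sigma_{i_1}\cdots\sigma_{i_P}$, I would observe that on the dominant set of distinct indices the empirical product $\frac{1}{M}\sum_{a=1}^M\eta_{i_1}^{\mu,a}\cdots\eta_{i_P}^{\mu,a}$ is an average of i.i.d.\ bounded variables, hence converges to $\mathbb{E}[\eta_{i_1}^{\mu,a}\cdots\eta_{i_P}^{\mu,a}\mid\bm \xi]=r^P\,\xi_{i_1}^\mu\cdots\xi_{i_P}^\mu$. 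Since $\R^{P/2}\to r^P$ as $M\to\infty$, the normalization cancels exactly and $\frac{1}{\R^{P/2}M}\sum_{a=1}^M n_{\mu,a}^P\to m_\mu^P$; the bracket therefore vanishes and so does the derivative.

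With the lemma in hand the theorem follows from
\begin{align}
    \mathcal{A}_{N,M}(t=1\vert\alpha,\beta)= \mathcal{A}_{N,M}(t=0\vert \alpha,\beta) + \int_0^1 \dfrac{d\mathcal{A}_{N,M}(t\vert \alpha,\beta)}{dt}\vert_{t=s}\, ds,
\end{align}
sending $M\to\infty$ and commuting the limit with the $s$-integral by dominated convergence, legitimate because $|m_\mu|,|n_{\mu,a}|\le 1$ bound the integrand uniformly; since the integrand tends to zero, the two boundary free energies coincide at finite $N$, proving that dense unsupervised learning collapses onto dense storage. I expect the only real obstacle to be the bookkeeping in the law-of-large-numbers step: the multi-index sum contains configurations with coincident indices, where squared factors $(\eta_i^{\mu,a})^2=1$ lower the effective number of independent entries and generate subleading diagonal corrections that must be argued negligible, and one must check that the $\R^{P/2}$ normalization is precisely the factor compensating the $r^P$ produced by the $P$ independent entries — this exact matching is what turns an approximate statement into the clean finite-$N$ identity.
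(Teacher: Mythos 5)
Your proposal is correct and follows essentially the same route as the paper: the same interpolating free energy, the same key lemma stating that $\frac{d\mathcal{A}_{N,M}(t\vert\alpha,\beta)}{dt} = \frac{\b}{2}\sum_{\mu=1}^K \bigl( \langle m_\mu^P \rangle - \frac{1}{M\R^{P/2}}\sum_{a=1}^M \langle n_{\mu,a}^P \rangle \bigr) \to 0$ as $M\to\infty$, and the same conclusion via the Fundamental Theorem of Calculus. If anything, you supply more detail than the paper does (the explicit law-of-large-numbers step with the $r^P$ versus $\R^{P/2}$ cancellation, the handling of coincident indices, and the dominated-convergence justification), since the paper's own proof of this theorem simply defers to the supervised dense case and the shallow unsupervised lemma.
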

Mirroring the previous subsection, the proof of Theorem \ref{prop:limM_UNSUP_dense} is trivial provided the next
\begin{lemma}
\label{lemma:dert_UNSUP_dense}
In the limit $M \to +\infty$, the derivative of the interpolating quenched statistical pressure w.r.t. $t$ is null, namely 
\begin{align}
    \dfrac{d\mathcal{A}_{N,M}(t\vert \alpha, \beta)}{dt} = \dfrac{\b}{2} \sum_{\mu=1}^K \left[ 
 \l m_\mu^P \r - \dfrac{1}{M\R^{P/2}} \sum_{a=1}^M \l n_{\mu,a}^P \r \right]\xrightarrow{M\to \infty} 0.
\end{align}
\end{lemma}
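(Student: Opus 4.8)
The plan is to copy, almost verbatim, the argument already used for the supervised dense case (Lemma \ref{lemma:dert}): first differentiate the interpolating free energy explicitly and recognise the result as the advertised difference of Boltzmann averages, then annihilate that difference in the $M\to\infty$ limit by a law-of-large-numbers argument on the empirical products of the examples. Once the lemma holds, Theorem \ref{prop:limM_UNSUP_dense} follows immediately from the Fundamental Theorem of Calculus, exactly as in the supervised case.

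First I would differentiate under the quenched expectation $\mathbb{E}$ and the logarithm. Since the interpolating exponent of this subsection is linear in $t$, its $t$-derivative pulls down the difference of the two $P$-linear forms, reweighted by the interpolating Boltzmann factor; this is precisely $\omega_t$ applied to that difference, and after averaging over the disorder it becomes $\langle\,\cdot\,\rangle$. Using $m_\mu=\tfrac1N\sum_i\xi_i^\mu\sigma_i$ and $n_{\mu,a}=\tfrac1N\sum_i\eta_i^{\mu,a}\sigma_i$ one rewrites $\sum_{i_1,\dots,i_P}\xi_{i_1}^\mu\cdots\xi_{i_P}^\mu\sigma_{i_1}\cdots\sigma_{i_P}=N^P m_\mu^P$ and likewise $\sum_{i_1,\dots,i_P}\eta_{i_1}^{\mu,a}\cdots\eta_{i_P}^{\mu,a}\sigma_{i_1}\cdots\sigma_{i_P}=N^P n_{\mu,a}^P$. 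The prefactors $\b/(2N^{P-1})$ and $\b/(2\R^{P/2}MN^{P-1})$, combined with the global $1/N$, then collapse the derivative to $\tfrac{\b}{2}\sum_\mu\big[\langle m_\mu^P\rangle-\tfrac{1}{M\R^{P/2}}\sum_a\langle n_{\mu,a}^P\rangle\big]$, which is exactly the claimed expression.

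Second, I would show the bracket vanishes as $M\to\infty$. Writing $\tfrac{1}{M\R^{P/2}}\sum_a n_{\mu,a}^P=\tfrac{1}{N^P}\sum_{i_1,\dots,i_P}\big(\tfrac{1}{\R^{P/2}M}\sum_a\eta_{i_1}^{\mu,a}\cdots\eta_{i_P}^{\mu,a}\big)\sigma_{i_1}\cdots\sigma_{i_P}$, I would note that for a tuple of pairwise distinct sites the variables $\eta_{i_p}^{\mu,a}$ are independent across both $p$ and $a$, so the law of large numbers gives $\tfrac1M\sum_a\eta_{i_1}^{\mu,a}\cdots\eta_{i_P}^{\mu,a}\to\mathbb{E}[\eta_{i_1}^\mu\cdots\eta_{i_P}^\mu]=r^P\xi_{i_1}^\mu\cdots\xi_{i_P}^\mu$ almost surely (recall $\mathbb{E}[\eta_i^{\mu,a}\vert\xi_i^\mu]=r\xi_i^\mu$). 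Since $\R=r^2+\tfrac{1-r^2}{M}\to r^2$, the normalization $\R^{P/2}\to r^P$ cancels the $r^P$ exactly, leaving $\xi_{i_1}^\mu\cdots\xi_{i_P}^\mu$, i.e. precisely the term appearing in $m_\mu^P$. Hence, tuple by tuple on the distinct-index part, $\tfrac{1}{M\R^{P/2}}\sum_a n_{\mu,a}^P\to m_\mu^P$, the bracket tends to zero, and the lemma is proved.

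The hard part will be purely technical and twofold. First, the coincident-index (diagonal) tuples break the independence used above, because $(\eta_i^{\mu,a})^2=1$ forces a different limit than the corresponding $\xi$-term; one must therefore argue that these do not matter, which they do not since they number only $O(N^{P-1})$ against the $N^P$ total and so contribute $O(1/N)$ to the $1/N^P$-normalized order parameters (equivalently, one restricts the Hamiltonian sums to distinct indices, making the matching exact, as was tacitly done in Lemma \ref{lemma:dert_UNSUP} for the pairwise case). Second, one must justify interchanging $\lim_{M\to\infty}$ with the quenched average $\mathbb{E}$, with $\omega_t$, and with $\int_0^1 dt$; this is immediate from the uniform bounds $|m_\mu|\le1$ and $|n_{\mu,a}|\le1$ (both $\eta$ and $\sigma$ take values in $\{\pm1\}$ and are normalised by $N$), which render all integrands bounded and let dominated convergence transport the almost-sure convergence through to the averages and the $t$-integral.
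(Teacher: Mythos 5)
Your proposal follows the same route as the paper: differentiate the interpolating free energy to get the stated bracket, then annihilate it as $M\to\infty$ by concentration of the empirical averages $\tfrac1M\sum_a\eta_{i_1}^{\mu,a}\cdots\eta_{i_P}^{\mu,a}$ onto $r^P\xi_{i_1}^\mu\cdots\xi_{i_P}^\mu$, compensated by $\mathcal{R}^{P/2}\to r^P$; this is exactly the paper's argument (stated there very tersely, by deferral to Lemma \ref{lemma:dert} and to the pairwise Lemma \ref{lemma:dert_UNSUP}), and your dominated-convergence justification via the uniform bounds $|m_\mu|\le 1$, $|n_{\mu,a}|\le 1$ is sound.

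The flaw is in your treatment of the coincident-index tuples, specifically the claim that your two fixes are ``equivalent'' --- they are not, and only the second one works. The counting fix fails for two reasons. First, the lemma is asserted at \emph{fixed, finite} $N$ (the paper stresses after Theorem \ref{prop:limM_supdense} that the result holds at finite network size), and at fixed $N$ an $O(1/N)$ residual is simply a nonzero constant: for instance, at $P=2$ the diagonal part of the bracket tends to $\tfrac1N\left(1-r^{-2}\right)\neq 0$ whenever $r<1$, because $(\eta_i^{\mu,a})^2=1$ makes the $\eta$-side diagonal concentrate on $1/\mathcal{R}\to r^{-2}$ while the $\xi$-side diagonal is exactly $1$; for general $P$ each tuple with a coincidence leaves a mismatch proportional to $(1-r^{-2})$. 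Second, even if one were to send $N\to\infty$, the per-pattern residual $O(1/N)$ is summed over $\mu=1,\dots,K$, so with $K=\alpha N$ (let alone the dense load $K\propto N^{P-1}$) the total diagonal contribution is $O(1)$ or divergent, not vanishing. Hence the lemma as stated is only true under your parenthetical alternative: restrict the multi-index sums to pairwise distinct indices (which is what the paper tacitly assumes --- ``we consider only different indices'' in Lemma \ref{lemma:dert_UNSUP}), after which the matching is exact tuple by tuple and your law-of-large-numbers step closes the proof. With that reading your argument is correct and coincides with the paper's; without it, the step you present as the primary resolution would not salvage the statement.
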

\begin{proof}
As already done in supervised setting, we can apply Fundamental Theorem of Calculus and reach the thesis in Theorem \ref{prop:limM_UNSUP_dense}.
\end{proof}

\subsection{Cost functions vs Loss functions in dense Hebbian networks}% - NEW VERSION}
\label{sec:loss}
\red{Building on} the analysis conducted for the pairwise scenario, hereafter we deepen the relation between Cost functions preferred in Statistical Mechanics (i.e., Hamiltonians) and Loss functions preferred in Machine Learning (e.g., as the $L_2$ MSE and the empirical risk).
\newline
Let us start with the following definition  
\begin{definition}
% \label{def:loss}
    The Loss function of the Dense Hebbian neural network in the supervised setting is 
    %\textbf{Non mancano le somme su $\mu$?}
    \begin{align}
    \mathcal{Ls}^\mu(\bm \sigma \vert \bm \eta^\mu)= Ls_{-}^\mu(\bm \sigma\vert \bm \eta^\mu) Ls_{+}^\mu(\bm \sigma\vert \bm \eta^\mu)
    \end{align}
    where the single pattern average $L^2$ Loss function reads as
    \begin{align}
    Ls_{\pm}^\mu(\bm \sigma\vert \bm \eta^\mu) =& \dfrac{1}{2N^{P/2}} \vert \vert \bm X^{\mu} \pm \bm S \vert \vert_2^2
    \end{align}
    where $\bm X^\mu, \bm S \in \{- 1,+1\}^{N^{P/2}}$ such that 
    \begin{align}
        X_{i_1, \hdots, i_{P/2}}^\mu &\coloneqq \left( \dfrac{1}{M} \sum_{a_1=1}^M \eta_{i_1}^{\mu,a_1} \right) \cdots \left( \dfrac{1}{M} \sum_{a_{P/2}=1}^M \eta_{i_P}^{\mu,a_{P/2}} \right) \\
        S_{i_1, \hdots, i_{P/2}} &\coloneqq \sigma_{i_1} \cdots \sigma_{i_{P/2}}.
    \end{align}
    We can thus write
    \begin{align}
    Ls_{\pm}^\mu(\bm \sigma\vert \bm \eta^\mu)=&  \dfrac{1}{2N^{P/2}} \sum_{i_1, \hdots, i_{P/2}=1,\hdots, 1}^{N, \hdots, N} \left[ \left( X^\mu_{i_1, \hdots, i_{P/2}}\right)^2 + \left( S_{i_1, \hdots, i_{P/2}}\right)^2 - 2 S_{i_1, \hdots, i_{P/2}}  X^\mu_{i_1, \hdots, i_P} \right] = \notag \\
    =& \dfrac{1}{2} \pm \tilde n_\mu^{P/2} + \dfrac{1}{2N^{P/2}}\left( \dfrac{1}{M}\sum_{i=1}^N\sum_{a=1}^M\eta_i^{\mu,a}\right)^{P/2}
\end{align}
such that 
\begin{align}
    \tilde n_\mu^P= \dfrac{1}{4}  + \dfrac{1}{2N^{P/2}}\left( \dfrac{1}{M}\sum_{i=1}^N\sum_{a=1}^M\eta_i^{\mu,a}\right)^{P/2} + \dfrac{1}{4N^P}  \left( \dfrac{1}{M}\sum_{a=1}^M \sum_{i=1}^N \eta_i^{\mu,a}\right)^2\left( \dfrac{1}{M}\sum_{a=1}^M\eta_j^{\mu,a}\right)^{2P}-\mathcal{Ls}^\mu(\bm \sigma \vert \bm \eta^\mu).
\end{align}
\end{definition}
We stress that for $M \to \infty$, using CLT, $\dfrac{1}{M} \SOMMA{a=1}{M} \eta_i^{\mu,a}$ can be expressed as a Gaussian variable equal, on average, to the pattern $\xi_i^\mu$ and with variance $\R/M$, so in this limit $\left(\dfrac{1}{M} \SOMMA{a=1}{M} \eta_i^{\mu,a}\right)^2=(\xi_i^\mu)^2=1$ and
\begin{align}
    \tilde n_\mu^P = 1- \mathcal{Ls}^\mu(\bm \sigma \vert \bm \eta^\mu).
\end{align}

\begin{definition}
The Loss function of the $P-$dense Hebbian neural network in the unsupervised setting 
%\textbf{Non mancano le somme su $\mu$?}
\cite{bojanowski2017unsupervised} is 
    %(See Bojanowski, Piotr, and Armand Joulin. "Unsupervised learning by predicting noise." International Conference on Machine Learning. PMLR, 2017.)
    \begin{align}
\mathcal{L}u^{\mu,a}(\bm \sigma \vert \bm \eta^{\mu,a})=Lu_{-}^{\mu,a}(\bm \sigma \vert \bm \eta)Lu_{+}^{\mu,a}(\bm \sigma \vert \bm \eta^{\mu,a})
        \label{eq:lossUNSUP}
    \end{align}
    %     \begin{align}
    %     \mathcal{L}u^\mu=\dfrac{1}{M} \sum_{a=1}^M \left(Lu_{-}^{\mu,a}Lu_{+}^{\mu,a}\right)
    % \end{align}
    where
    \begin{align}
    Lu_{\pm}^{\mu,a}(\bm \sigma \vert \bm \eta^{\mu,a})=\dfrac{1}{2N^{P/2}} \vert \vert \bm X^{\mu,a} \pm \bm S \vert \vert_2^2
    %= \dfrac{1}{2N} \sum_{i=1}^N 1 \pm \dfrac{1}{N} \sum_{i=1}^N \eta^{\mu,a}_i \sigma_i 
    = 1\pm  n^{P/2}_{a, \mu}(\bm \sigma \vert \bm \eta^{\mu,a}).
    \label{eq:lossUNSUP3}
    \end{align}
     where $\bm X^{\mu,a}, \bm S \in \{- 1,+1\}^{N^{P/2}}$ such that 
    \begin{align}
        X_{i_1, \hdots, i_{P/2}}^{\mu,a} &\coloneqq \eta_{i_1}^{\mu,a_1} \cdots \eta_{i_P}^{\mu,a_{P/2}}, \\
        S_{i_1, \hdots, i_{P/2}} &\coloneqq \sigma_{i_1} \cdots \sigma_{i_{P/2}}.
    \end{align}
\end{definition}

Using eq.~\eqref{eq:lossUNSUP3} we can rewrite the Cost functions of the $P-$dense Hebbian neural network in supervised and unsupervised settings, respectively eq.~\eqref{eq:H_DSN} and eq.~\eqref{eq:H_DUN}, as
\begin{align}
&H_{N,M}^{(DSN)}(\s \vert \bm \eta)= -\dfrac{N }{2\R^{P/2}}\sum_{\mu=1}^K \tilde n_\mu^P \notag \\
&= -\dfrac{N}{2\mathcal{R}^{P/2}} \sum_{\mu=1}^K \left(\dfrac{1}{4}  + \dfrac{1}{2N^{P/2}}\left( \dfrac{1}{M}\sum_{i=1}^N\sum_{a=1}^M\eta_i^{\mu,a}\right)^{P/2} + \dfrac{1}{4N^P}  \left( \dfrac{1}{M}\sum_{a=1}^M \sum_{i=1}^N \eta_i^{\mu,a}\right)^2\left( \dfrac{1}{M}\sum_{a=1}^M\eta_j^{\mu,a}\right)^{2P}\right. \notag \\
&\left.-\mathcal{Ls}^\mu(\bm \sigma \vert \bm \eta^\mu)\right) \\
&= -\dfrac{N}{2\mathcal{R}^{P/2}} \sum_{\mu=1}^K \left(  C(\bm \eta^\mu) -\mathcal{Ls}^{\mu}(\s \vert \bm \eta^\mu)\right),\\
&H_{N,M}^{(DUN)}(\s \vert \bm \eta)= -\dfrac{N }{2 M\R^{P/2}}\sum_{\mu=1}^K \sum_{a=1}^M n_{\mu,a}^P= - \dfrac{N}{2\mathcal{R}^{P/2}} \sum_{\mu=1}^K \dfrac{1}{M} \sum_{a=1}^M \left( 1- \mathcal{L}u^{\mu,a}(\s \vert \bm \eta^{\mu,a})\right) \label{eq:HLOSSUNSUP_dense}
\end{align}
where $C(\bm \eta )$ is constant in $\bm \sigma$ and thus highlight that all the terms at the r.h.s. apart $\mathcal{Ls}^{\mu}(\s \vert \bm \eta^\mu)$ are not functions of neural activities.
\newline
We stress that, in plain agreement with the constraints imposed via the maximum entropy extremization, to reach the Cost functions for dense Hebbian learning (namely the matching among the theoretical and empirical $P$-points correlations functions, no longer restricted to the lowest moments $P=1$ and $P=2$),  the present Loss functions force neural outcomes to match empirical ones by considering these higher order correlation functions (rather than just standard quantifiers as means and variances).

\section{Conclusion and outlook} 

In this paper we consistently provided a derivation for all the  Cost functions  (and, in a cascade fashion, for all the Loss functions) that have been recently introduced in the Literature regarding Hebbian learning from examples, with or without the presence of a teacher. To be sharp, we systematically derived, by relying upon the maximum entropy inferential criterion à la Jaynes, the expression for the Cost function of Hebbian storing and Hebbian learning  (supervised, unsupervised and semi-supervised) and then we proved a duality of expression among these Cost functions (preferred by the Statistical Mechanics community) and standard $L^2$ Loss functions (preferred by the Machine Learning community).
\newline
Once provided an exhaustive derivation for the shallow case of neural networks equipped with pairwise interactions among neurons, we enlarged the analysis to the dense counterpart, namely to neural networks with many-body interactions (up to the limit of infinite interactions, that we have shown to collapse on the so-called {\em exponential Hopfield model}).
\newline
Beyond providing an analytical derivation for these expressions (that were solely heuristically assumed in the Literature so far), the usage of the maximum entropy inferential tool allowed also to deepen  how many and what kind of correlations these networks look for when inspecting databases: the higher the order of interactions (i.e. the broader the neural assembly at work), the longer the correlations that the network as a whole can detect.  This possibly contribute to explain why dense neural networks outperform their pairwise shallow limit at work with structured datasets.

\section*{Acknowledgements}
A.B. and F.D. both acknowledge PRIN 2022 grant {\em Statistical Mechanics of Learning Machines: from algorithmic and information theoretical limits to new biologically inspired paradigms} n. 20229T9EAT funded by European Union - Next Generation EU.
\newline
D.P. has been partially supported by the PRIN 2022 grant {\em Elliptic and parabolic problems, heat kernel estimates and spectral theory} n. 20223L2NWK funded by European Union - Next Generation EU. 
\newline
L.A. acknowledges INdAM –GNFM Project (CUP E53C22001930001).
\newline
L.A. and A.B. are members of the group GNFM of INdAM which is acknowledged too.
\newline
F.D. and D.P. are members of the group GNAMPA of INdAM which is acknowledged too.

\end{document}